\pgfplotsset{compat=1.18}
\definecolor{britpurple}{RGB}{140,122,230}
\definecolor{brityellow}{RGB}{225,177,44}
\definecolor{britgreen}{RGB}{68,189,50}
\definecolor{britblue}{RGB}{25,42,86}
\definecolor{britred}{RGB}{194,54,22}
\definecolor{britred2}{RGB}{232,65,24}
\definecolor{britgreen}{RGB}{68,189,50}
\definecolor{lightgray}{gray}{0.85}
\definecolor{deepblue}{rgb}{0,0,0.5}
\definecolor{deepred}{rgb}{0.6,0,0}
\definecolor{deepgreen}{rgb}{0,0.5,0}
\definecolor{deeppurple}{rgb}{0.34,0.03,0.38}
\definecolor{lightorange}{RGB}{255,233,148}
\newcolumntype{A}{>{\raggedright\let\newline\\\arraybackslash\hspace{0pt}}m{1.9cm}}
\newcolumntype{B}{>{\raggedright\let\newline\\\arraybackslash\hspace{0pt}}m{1.3cm}}
\newcolumntype{C}{>{\columncolor{white}}c}
\begin{document}

%%
%% The "title" command has an optional parameter,
%% allowing the author to define a "short title" to be used in page headers.
\title{OptiMA: A Transaction-Based Framework with Throughput Optimization for Very Complex Multi-Agent Systems}

%%
%% The "author" command and its associated commands are used to define
%% the authors and their affiliations.
%% Of note is the shared affiliation of the first two authors, and the
%% "authornote" and "authornotemark" commands
%% used to denote shared contribution to the research.
\author{Umut Çalıkyılmaz}
\affiliation{
  \institution{University of Lübeck}
  \city{Lübeck}
  \country{Germany}
}
\email{umut.calikyilmaz@.uni-luebeck.de}

\author{Nitin Nayak}
\affiliation{
  \institution{University of Lübeck}
  \city{Lübeck}
  \country{Germany}
}
\email{nitin.nayak@uni-luebeck.de}

\author{Jinghua Groppe}
\affiliation{%
  \institution{University of Lübeck}
  \city{Lübeck}
  \country{Germany}
}
\email{jinghua.groppe@uni-luebeck.de}

\author{Sven Groppe}
\affiliation{%
  \institution{University of Lübeck}
  \city{Lübeck}
  \country{Germany}
}
\email{sven.groppe@uni-luebeck.de}

%%
%% By default, the full list of authors will be used in the page
%% headers. Often, this list is too long, and will overlap
%% other information printed in the page headers. This command allows
%% the author to define a more concise list
%% of authors' names for this purpose.
\renewcommand{\shortauthors}{Çalıkyılmaz et al.}

%%
%% The abstract is a short summary of the work to be presented in the
%% article.
\begin{abstract}

In recent years, the research of multi-agent systems has taken a direction to explore larger and more complex models to fulfill sophisticated tasks. We point out two possible pitfalls that might be caused by increasing complexity; susceptibilities to faults, and performance bottlenecks. To prevent the former threat, we propose a transaction-based framework to design very complex multi-agent systems (VCMAS). To address the second threat, we offer to integrate transaction scheduling into the proposed framework. We implemented both of these ideas to develop the OptiMA framework and show that it is able to facilitate the execution of VCMAS with more than a hundred agents. We also demonstrate the effect of transaction scheduling on such a system by showing improvements up to more than 16\%. Furthermore, we also performed a theoretical analysis on the transaction scheduling problem and provided practical tools that can be used for future research on it.
%Multi-agent AI has been on focus of many studies over the years. After the impact made by large language models (LLMs), the range of tasks that can be achieved by multi-agent systems has been significantly expanded. In recent years, collaboration among multiple LLM powered agents has been a very popular topic in the field, targeting to accomplish high-level tasks using dynamic systems. To carry these studies further, we propose a framework to orchestrate the reliable execution of systems with many agents in a parallel computing environment. For that purpose, we offer a transaction-based approach that benefits from database concurrency control methods. We also offer the option of optimizing transaction schedules to improve system performance. We show that our transaction-based framework can run a model with more than a hundred agents on a few threads, and the throughput of such a model can be improved by up to 17\% using transaction scheduling.
\end{abstract}

%%
%% The code below is generated by the tool at http://dl.acm.org/ccs.cfm.
%% Please copy and paste the code instead of the example below.
%%
%\begin{CCSXML}
%<ccs2012>
% <concept>
%  <concept_id>00000000.0000000.0000000</concept_id>
%  <concept_desc>Do Not Use This Code, Generate the Correct Terms for Your Paper</concept_desc>
%  <concept_significance>500</concept_significance>
% </concept>
% <concept>
%  <concept_id>00000000.00000000.00000000</concept_id>
%  <concept_desc>Do Not Use This Code, Generate the Correct Terms for Your Paper</concept_desc>
%  <concept_significance>300</concept_significance>
% </concept>
% <concept>
%  <concept_id>00000000.00000000.00000000</concept_id>
%  <concept_desc>Do Not Use This Code, Generate the Correct Terms for Your Paper</concept_desc>
%  <concept_significance>100</concept_significance>
% </concept>
% <concept>
%  <concept_id>00000000.00000000.00000000</concept_id>
%  <concept_desc>Do Not Use This Code, Generate the Correct Terms for Your Paper</concept_desc>
%  <concept_significance>100</concept_significance>
% </concept>
%</ccs2012>
%\end{CCSXML}

%\ccsdesc[500]{Do Not Use This Code~Generate the Correct Terms for Your Paper}
%\ccsdesc[300]{Do Not Use This Code~Generate the Correct Terms for Your Paper}
%\ccsdesc{Do Not Use This Code~Generate the Correct Terms for Your Paper}
%\ccsdesc[100]{Do Not Use This Code~Generate the Correct Terms for Your Paper}

\begin{CCSXML}
<ccs2012>
   <concept>
       <concept_id>10010147.10010178.10010219.10010220</concept_id>
       <concept_desc>Computing methodologies~Multi-agent systems</concept_desc>
       <concept_significance>500</concept_significance>
       </concept>
   <concept>
       <concept_id>10002951.10002952.10003190.10003193</concept_id>
       <concept_desc>Information systems~Database transaction processing</concept_desc>
       <concept_significance>500</concept_significance>
       </concept>
 </ccs2012>
\end{CCSXML}

\ccsdesc[500]{Computing methodologies~Multi-agent systems}
\ccsdesc[500]{Information systems~Database transaction processing}

%%
%% Keywords. The author(s) should pick words that accurately describe
%% the work being presented. Separate the keywords with commas.
\keywords{MultiAgent , AI, Transaction, Scheduling, Optimization, Algorithm}
%% A "teaser" image appears between the author and affiliation
%% information and the body of the document, and typically spans the
%% page.

\received{8 October 2025}
\received[revised]{}
\received[accepted]{}

%%
%% This command processes the author and affiliation and title
%% information and builds the first part of the formatted document.
\maketitle

\section{Introduction}
\label{sec:intro}

A multi-agent system (MAS) is composed of multiple agents interacting in a common environment to achieve a goal. As a branch of distributed AI (DAI), the purpose of MAS is to find solutions to complex tasks by dividing them into simpler subproblems, each of which is assigned to a different agent \cite{durfee2001distributed}. Besides being the most natural choice for some domains, such as the coordination of a swarm of robots, MAS also offers parallelism, robustness, scalability, and the ability to use solvers with varying expertise for different parts of a problem \cite{stone2000multiagent}. Due to these advantages, MAS has been used to solve problems in various fields including robotics \cite{ota2006multi, inigo2012robotics, soriano2013multi}, cloud computing \cite{gutierrez2015agent, singh2015autonomous, nikbazm2014agent}, network security \cite{shamshirband2013appraisal, gorodetski2002multi} and routing \cite{di1998adaptive, claes2011decentralized, manvi2008multicast}.

Like many other fields, large language models (LLM) have also found their place in MAS research in recent years \cite{lu2023chameleon, liu2024dynamic, yin2023exchange}. In fact, the possibility of more intelligent LLM-based agents led academics to invest in very complex multi-agent systems (VCMAS) that are designed to accomplish sophisticated tasks, which are normally performed by a team of humans. In \cite{xu2024theagentcompany}, such a system is built to simulate a software development company using LLM-based agents as employees, and it is shown that up to 30$\%$ of jobs in such a company can be automated. In \cite{talebirad2023multi}, a general theoretical framework for the design of VCMAS is given. The proposed framework includes different agent roles, has superior-subordinate relationships between agents, allows the agents to use plugins, and allows some agents to alter the state of others on runtime. Although a framework as complex as the one proposed is definitely required to relieve human teams of such sophisticated tasks, this level of complexity is also a source of potential problems. In this paper, we address two of these potential problems, susceptibility to faults and performance bottlenecks in parallel computation.

To overcome the first problem, we propose using a transaction-based approach to design VCMAS. The idea of using the transaction concept for fault-tolerant applications outside the database context has existed for decades \cite{gray1981transaction}. In a transaction-based system, each state transition fulfills the ACID properties, preventing inconsistencies that might result from system failures, violation of constraints, or race conditions that occur due to parallel processing \cite{ramakrishnan2002database}. Transaction-based approaches for MAS have already been offered \cite{hill2006transaction, smith2003transaction}, but these approaches are not designed for a system as complex as a VCMAS. They do not have constraints to regulate the dynamic properties of such systems, and they only allow a transaction to execute actions from a single agent. Our framework integrates an improved transaction-based approach to the complex design given in \cite{talebirad2023multi}. In this way, we provide an environment to create fault-tolerant VCMAS. 

Since we applied a transaction-based approach in our framework, it is possible to treat the problem of performance bottlenecks in the context of transactions. In most modern DBMSs, multiple transactions are executed concurrently to take advantage of parallel computing. In such a process, a concurrency control (CC) method is required to satisfy the isolation property and to avoid complications that might occur when multiple transactions access the same data \cite{ramakrishnan2002database}. As we show in the following sections, our framework requires a locking-based CC, similar to a variant of 2-phase locking (2PL). Locking-based CC methods are used when a high level of isolation is required, but are known to cause performance bottlenecks due to deadlocks and idle times in threads \cite{thomasian1991performance}.

In the literature, there are many approaches to optimize the performance of DBMS with 2PL. One example is the transaction partitioning methods \cite{prasaad2020handling, zhang2018performance, sheng2019scheduling}. In these methods, conflicting transactions are grouped and processed on the same thread. Since there is no pair of transactions that are in separate groups and conflicting, execution can be done by turning CC of. Most of these methods keep a residual group to put transactions that do not fit in any group and execute residual transactions with CC. In \cite{cao2023transaction}, the authors propose scheduling the execution times of the transactions after partitioning. In this way, the number of residual transactions decreases and the total performance improves. Despite its success, this method does not consider all possible schedules and is not guaranteed to find the most efficient one. In \cite{bittner2020avoiding, groppe2021optimizing}, the problem of scheduling transactions is treated as a mathematical optimization problem, where the optimal (or near-optimal) schedule is selected from the set of all possible schedules. In such studies, the aim is not to discard CC, but to find an efficient way to execute a batch of transactions in the existence of it. In our study, we follow the example of this approach for a more analytical treatment of the problem. We examine the transaction scheduling problem (TxnSP) in depth and lay the theoretical foundations besides implementing solvers for it to be used in our framework.

The rest of the paper is formulated in the following way. Section~\ref{sec:OptiMA} introduces the proposed framework for VCMAS and explains the purposes of its design choices. In Section~\ref{sec:TxnSP} we give the formulation of TxnSP, present some theoretical findings on the problem, and introduce four optimization algorithms for it. In Section~\ref{sec:Experimental}, the results of our experiments on the complexity of TxnSP, the performance of different optimization algorithms, and the effect of transaction scheduling in OptiMA are presented. Section~\ref{sec:Conclusion} concludes the paper by summarizing our findings and suggesting some direction for future research on the subject.

\begin{table}[]
    \caption{Example operations (transactions) in a multi-agent trip planning application}
    \centering
    \scriptsize
    \begin{tabular}{B A A A }
        \toprule
         \textbf{Owner(s):} & \makecell[c]{Agent 1 \& Agent 2} & \makecell[c]{Agent 3} & \makecell[c]{Agent 4} \\
         
         \midrule
         
         \textbf{Operation:}& Finding accomodation with convenient commuting options & Buying tickets to Musée du Louvre & Buying tickets to a canal tour\\

         \midrule
         
         \multirow{4}{*}{\textbf{Instructions:}} &  - Search options & - Check calendar &  - Check calendar \\         
          & - Check commuting & - Purchase ticket & - Purchase ticket \\
          & - Log entry & - Calendar entry & - Calendar entry \\
          & & - Log entry & - Log entry \\

         \midrule

         \multirow{3}{*}{\textbf{Plugins:}} & - Search engine & - Payment & - Payment \\
         & - Log database & - Calendar & - Calendar \\
         & & - Log database & - Log database \\

         \bottomrule
    \end{tabular}
    
    \label{tab:Transactions}
\end{table}

\section{OptiMA Framework}
\label{sec:OptiMA}

In the context of multi-agent systems, our contributions can be grouped in two categories. The first is to offer a novel transaction-based framework to design VCMAS that can include up to hundreds of agents, plugins, and their relationships. The second is to provide an environment to run such systems safely and efficiently on a given number of threads. We developed the OptiMA framework that performs both these functions \footnote{\url{https://github.com/umutcalikyilmaz/OptiMA}}. In the remainder of the section, we explain these two categories of contributions, but before delving into their details, we introduce a simple example to demonstrate the purposes of our design choices.

%Our contribution of the field of MAS can be

%OptiMA is developed to provide an environment for designing VCMAS and running them efficiently on a predefined number of threads \footnote{\url{https://github.com/umutcalikyilmaz/OptiMA}}. It allows for designing dynamic systems where an agent can communicate with, create, destroy, and alter another agent, or use a plugin. With its transaction-based design, OptiMA offers centralized control over complex distributed models to provide fault tolerance and hardware-dependent optimization. It is designed to safely execute VCMAS with hundreds of agents, even when the target hardware allows the use of a much smaller number of threads than the number of agents in the system.

%Before giving the details of our framework, we introduce a simple example to demonstrate the purposes of our design choices.

\subsection{A Simple Example Case}
\label{ssec:Example}

Imagine a trip planning application designed with a VCMAS structure. This application utilizes five agents with distinct roles. \textit{Agent 0} is the supervisor who decides which agents to activate for a given task and assigns jobs. \textit{Agent 1} manages transportation, \textit{Agent 2} arranges accommodation, \textit{Agent 3} books museum tickets, and \textit{Agent 4} buys tour tickets. The plugins in this system are a search engine, a calendar application, a payment service, and a log database. Assume that the user asks the system to plan a trip to Paris with museum visits and a canal tour, and to make all the necessary purchases. Some possible operations for this case are given in Table~\ref{tab:Transactions}.

Each operation in Table~\ref{tab:Transactions} consists of multiple actions. If it is allowed to execute only part of an operation, we could end up with
%non-logged actions or
purchased tickets that are not entered into the calendar. To avoid this, each operation should be treated as \textit{atomic}. It is also necessary to enforce certain constraints for such a system and abort any operation that does not satisfy them to ensure \textit{consistency}. For example, in the given trip application, a constraint is necessary to prevent any booking before or after the duration of the trip. Another potential source of error is the concurrent execution of transactions without proper \textit{isolation}. If the second and third transactions in Table~\ref{tab:Transactions} are executed simultaneously without any concurrency control method, it is possible that they buy tickets for overlapping times. This is due to the fact that some plugins in a multi-agent system might be \textit{non-shareable}, such as the payment and calendar plugins in the example case. Finally, transaction logging and recovery procedures are required to guarantee \textit{durability}. The execution times of complex systems might be quite long and, without necessary measures, a system crash might result in losing all progress.

%Although the given example can be considered as a toy problem, it illustrates a good way of formulating mas and points out possible complications that might occur in a more complex system. While presenting the formulation of our framework, we occasionally refer to this problem.

%Each operation consists of multiple inseparable instructions, so \textit{atomicity} is required. When running these operations concurrently, the \textit{isolation} property must also be satisfied, because some plugins cannot be shared by multiple agents. In our example, the payment plugin is \textit{non-shareable}, since multiple simultaneous transactions with the same credit card might activate bank fraud alert. Similarly, concurrent access to the calendar application must be forbidden to prevent overlaps in the trip schedule. Two operations that use the same non-shareable plugin, such as the second and third operations in Table~\ref{tab:Transactions}, block each other, such as database transactions. To ensure the \textit{consistency} of such a system, some default constraints are required, such as preventing an agent from altering another without proper authorization. In addition, problem-specific constraints might be required for an application. In the example case, a constraint is necessary to prevent any booking before or after the duration of the trip. Finally, due to the possible long duration time of this complex system, the committed transactions must be logged on non-volatile memory to maintain \textit{durability}.

\subsection{Multi-Agent System Architecture}
\label{ssec:architecture}

The framework we propose is influenced by the one given in \cite{talebirad2023multi}, but improves it by adding a few features for more flexibility and integrating the transaction concept for fault tolerance. A VCMAS designed by our framework follows a certain blueprint. This blueprint is sketched by introducing the main components and the consistency constraints of such systems in the following.

\subsubsection{Main Components}

%Multi-agent models created using the OptiMA framework obey certain design blueprints. While determining the outline of OptiMA, we are influenced by the theoretical framework introduced in \cite{talebirad2023multi}, and have improved it to provide more flexibility. We list the main components of our framework below.

\begin{itemize}[itemsep=1mm, leftmargin=0mm, label={}]
    \item \textbf{Plugins:} Plugins are tools offered for the use of agents. Each plugin has a single operation function that is accessible only by agents with proper authorization. Plugins are marked as either \textit{shareable} or \textit{non-shareable}. As in the example case given in Section~\ref{ssec:Example}, non-shareable plugins usually represent real-life resources and cannot be accessed by multiple agents simultaneously.
    %This classification is crucial for OptiMA, since plugin accesses are the only measure of conflicts between transactions and the CC method depends on it.
    \item \textbf{Agent Roles:} Each agent is given a role that determines its thought process and its possible actions. An agent is allowed to use operations from multiple plugins during an action. An action is also allowed to manipulate other agents and communicate with them. Having different agent roles allows the system to use agents with different expertise. In addition, system constraints, such as supervisor-subordinate relationships, depend on the role of an agent. Multiple agents can have the same role, but each agent possess its own memory which allows it to evolve separately. Since OptiMA is a dynamic system, the number of agents that have the same role is not constant. However, the initial and maximum number of agents with each role are set as parameters of a system.
    
    \item \textbf{Transactions:} A transaction in OptiMA can include actions from multiple agents. This provides the ability to create transactions with arbitrary complexity, which might even span an entire user request. The necessity of this feature is shown in the example case as the first transaction in Table~\ref{tab:Transactions} requires actions from two agents.
    %While designing a VCMAS using OptiMA, transaction templates are created, which    
    A system design in our framework includes transaction templates, which are used to create transactions, similarly to an OLTP process. Having transaction templates is useful for easily determining the type of transaction and the plugins required for its execution at runtime. We note that a transaction does not have direct access to plugins, but they are used indirectly via agent actions. For each transaction template, it is also possible to define functions that are invoked in the event of a commute or abort. These functions can be used to log the transaction and define its rollback procedure. After being executed, each transaction returns a result that includes the status of the transaction (committed or aborted) and various parameters that will be used in new transactions. Each result triggers the creation of a new transaction. The specifics of this mechanism are determined during the design of transaction templates.
\end{itemize}

\subsubsection{Consistency Constraints}

These constraints draw the limits of a system by determining the permissions of each agent role and shaping its relationship with other roles. During execution, constraints are constantly checked to ensure consistency. If an action does not conform a constraint, the entire transaction is aborted. The consistency constraints in the framework are listed below.

\begin{itemize}[itemsep=1mm, leftmargin=0mm, label={}]    
    \item \textbf{Supervisor-Subordinate Relationship:} An agent role can be set as the supervisor of another role. Only an agent with supervisor role can start, stop, create, or destroy an agent of the subordinate role. The only exception is that each agent has the right to stop itself. This feature allows to design dynamic systems that would variate during model execution.
    \item \textbf{Plugin Access:} Each agent role is given the authorization to use only certain plugins.
    \item \textbf{Communication Permission:} The permission to communicate is explicitly given to certain pairs of agent roles. An agent is allowed to communicate with another agent of the same role, its supervisor, or its subordinate without the need for explicit authorization.
    \item \textbf{Halting Permission:} The ability to halt the program is granted to certain agent roles. Halting permission must be given to at least one of the roles to prevent the program from running infinitely.
\end{itemize}

\begin{figure}
    \centering
    \includegraphics[width=0.95\linewidth]{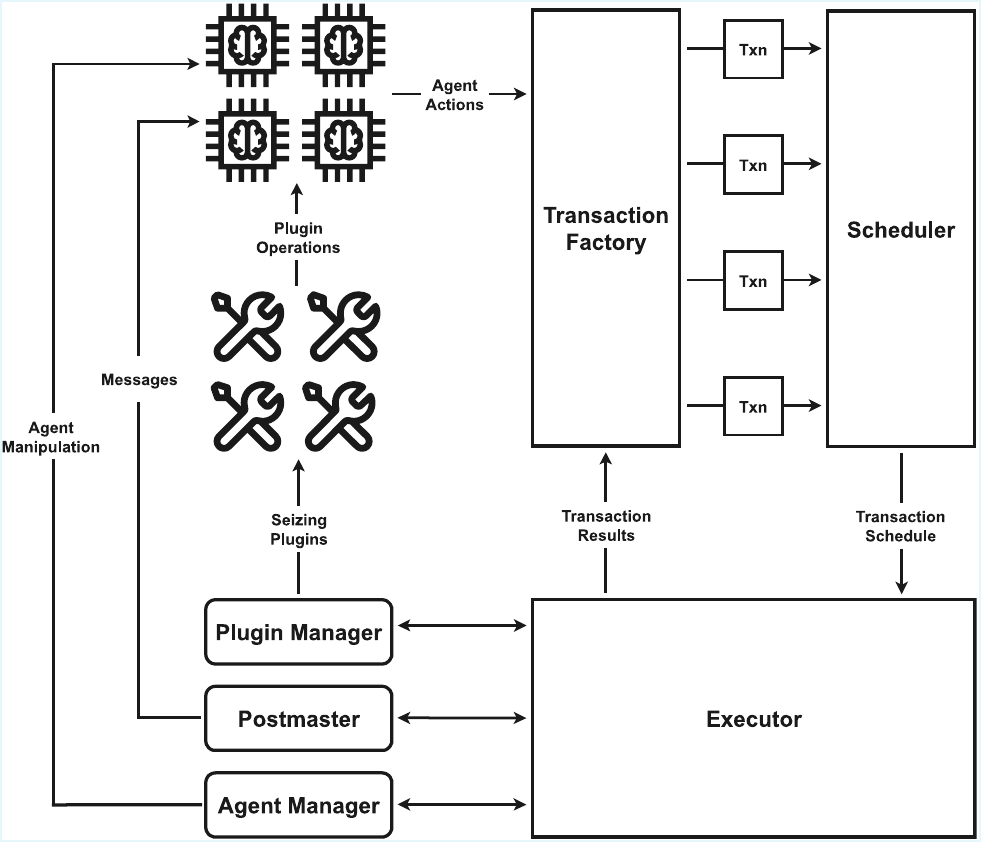}
    \caption{Structure of the Framework: The transaction factory creates transactions that includes agent actions. The scheduler batches the transactions and schedules each batch. The executor runs the scheduled transactions and sends their result back to the transaction factory, which then creates new transactions depending on these results. During the execution of a transactions, the plugin manager, agent manager and postmaster checks the consistency constraints and manipulates the system if an authorized action requests it.}
    \label{fig:OptiMA}
\end{figure}

%\begin{figure*}
%    \centering
%    \includegraphics[width=0.8\linewidth]{images/Multi-Agent2.pdf}
%    \caption{Structure of the Transaction-Base Multi-Agent Framework: \textit{Transaction Factory} module creates transactions that have access to the agent instructions and plugins. A batch od transactions are scheduled by the \textit{Scheduler} and sent to the \textit{Executor}. After execution, transaction results are sent to the \textit{Transaction Factory}, which then creates new transactions depending on these results.}
%    \label{fig:Trip}
%\end{figure*}

\subsection{Model Execution}
\label{ssec:execution}

OptiMA offers centralized control over complex distributed systems to provide fault tolerance and hardware-dependent optimization. It is designed to safely execute VCMAS with hundreds of agents, even when the target hardware allows the use of a much smaller number of threads. The execution parameters and the modules used in the process are explained below.

\subsubsection{Execution Parameters}

\begin{itemize}[itemsep=1mm, leftmargin=0mm, label={}]
    \item \textbf{Optimization (boolean):} When this parameter is true, the transactions are scheduled before execution to optimize throughput.
    \item \textbf{Number of Threads ($m$):} Determines the number of parallel threads to be used for model execution.
    \item \textbf{Batch Size ($b$):} Determines the maximum number of transactions that are waiting in queue before being scheduled.
    \item \textbf{Trigger (boolean):} When this parameter is true, the transactions in queue are scheduled when one of the threads becomes idle, even if there are fewer transactions in the queue than the batch size.
\end{itemize}

\subsubsection{Execution Modules}

In Figure~\ref{fig:OptiMA}, we outline the model execution process in OptiMA. This process is explained in detail by introducing the function of each module.

\begin{itemize}[itemsep=1mm, leftmargin=0mm, label={}]    
    \item \textbf{Transaction Factory:} This module starts the execution of a model by creating the initial transactions and sending them to the scheduler. The result of each transaction is sent to the transaction factory, which is used to create a new transaction. This loop continues until the program is halted.

    %Transaction factory marks the requested plugins for each transaction before sending them to the scheduler. During the process of model execution, the results of executed transactions are sent to the transaction factory, and a new transaction is created depending on the result. 
    
    \item \textbf{Scheduler:} If the optimization parameter is true, the first operation applied to a transaction in the scheduler is estimating its length. For this task, we implemented a simple statistical tool that keeps the average length for each transaction type. However, OptiMA also allows a system designer to insert its own estimator. After length estimation, the transaction is queued. When the number of transactions in the queue reaches $b$, or when the trigger is activated by a thread, the batch of transactions are scheduled using the optimization methods developed for TxnSP, which are explained in detail in Section~\ref{ssec:Library}. The result of a scheduling is $m$ queues of transactions, each of which will be assigned to a separate thread in the executor. If optimization is off, length estimation and scheduling steps are skipped, and the scheduler sends an arriving transaction directly to the executor to be assigned to the first idle thread.
    
    \item \textbf{Executor:} This module is responsible for the execution of transactions while ensuring isolation and consistency. When a transaction comes to the executor, it is queued in the register of one of the threads. Transactions in each register are executed according to FIFO. Before execution, a transaction is required to acquire a lock for each non-shareable plugin it uses. The details of this locking procedure are explained in Section~\ref{ssec:concurrency}. The executor constantly communicates with the plugin manager, agent manager, and postmaster to ensure consistency constraints. After execution, the result of the transaction is sent to the transaction factory.
    
    \item \textbf{Plugin Manager:} If an action wants to use a plugin, the plugin manager is notified by the executor. Then the authorization of the owner agent is checked. If authorization is verified, the plugin is seized; otherwise, the executor is told to abort the transaction.
    
    \item \textbf{Agent Manager:} When an agent action tries to start, stop, create, or destroy another agent, the executor asks the agent manager to verify permission of the owner agent. If the agent has no permission, the transaction is aborted. Otherwise, the agent manager applies the manipulation of the subject agent.
    
    \item \textbf{Postmaster:} If an agent attempts to send a message to another, the postmaster delivers it if the two agents are allowed to communicate. Otherwise, the executer is notified to abort the transaction.
\end{itemize}

%As explained in Section~\ref{ssec:model}, transactions do not have direct access to plugins. Instead, agents can invoke operation functions of plugins in their actions, and transactions can execute actions of agents. The \textit{transaction factory} module creates transactions that include agent actions. These transactions are batched and each batch is sent to the \textit{scheduler} module. The scheduler finds an efficient schedule to execute the batch and sends the scheduled transactions to the \textit{executor} module. The executor assigns the transactions to threads and is responsible for their safe execution. During execution, the executer is in communication with the \textit{plugin manager}, \textit{postmaster} and \textit{agent manager} modules. If a plugin access is requested by an action in a transaction, the plugin manager checks if the requesting agent has the right to access the plugin. If access is granted, the plugin manager seizes the plugin to be used. If access is denied, the plugin manager tells the executor to abort the transaction. If an action in a transaction tries to send a message to an agent, the postmaster checks the communication permission. If communication is permissible, the message is sent to the subject agent, if it is not the transaction is aborted. 

%These modules checks plugin access rights, communication permissions, and supervisor-subordinate relationships when required. 

%The detailed working of the scheduler is explained in Section~\ref{} and Section~\ref{}. 

\begin{figure}
    \centering
    \begin{subfigure}[b]{0.225\textwidth}
        \centering
        \begin{tikzpicture}
            \draw[->, line width=0.7] (0,0)--(0,2.2);
            \draw[->, line width=0.7] (0,0)--(3.6,0);
            
            \draw[line width=0.5] (0.1,0)--(1,1.5);
            \draw[line width=0.5] (1,1.5)--(2.3,1.5);
            \draw[line width=0.5] (2.3,1.5)--(3.3,0);

            \draw[dotted, line width=0.5] (1,1.5)--(1,0);
            \draw[dotted, line width=0.5] (2.3,1.5)--(2.3,0);

            \node at (0.6,2.33) {\footnotesize \textbf{Number}};
            \node at (0.6,2.07) {\footnotesize \textbf{of Locks}};
            
            \node at (3.6,0.27) {\footnotesize \textbf{Time}};
            
            \node at (0.55,-0.2) {\footnotesize Expanding};
            \node at (0.55,-0.46) {\footnotesize Phase};

            \node at (2.8,-0.2) {\footnotesize Shrinking};
            \node at (2.8,-0.46) {\footnotesize Phase};
        \end{tikzpicture}
        \caption{Standard 2PL}   
        \label{fig:2PL}
    \end{subfigure}
    \hfill
    \begin{subfigure}[b]{0.225\textwidth}  
        \centering 
        \begin{tikzpicture}
            \draw[->, line width=0.7] (0,0)--(0,2.2);
            \draw[->, line width=0.7] (0,0)--(3.6,0);
            
            \draw[line width=0.5] (0.1,0)--(0.1,1.5);
            \draw[line width=0.5] (0.1,1.5)--(2.2,1.5);
            \draw[line width=0.5] (2.2,1.5)--(2.2,0);

            \node at (0.6,2.33) {\footnotesize \textbf{Number}};
            \node at (0.6,2.07) {\footnotesize \textbf{of Locks}};
            
            \node at (3.6,0.27) {\footnotesize \textbf{Time}};

            \node at (2.3,-0.46) {};
        \end{tikzpicture}
        \caption{Rigorous Conservative 2PL}    
        \label{fig:RC2PL}
    \end{subfigure}
    \caption{Lock acquisition and releasing mechanisms of Two-Phase Locking and its variant} 
    \label{fig:mean and std of nets}
\end{figure}
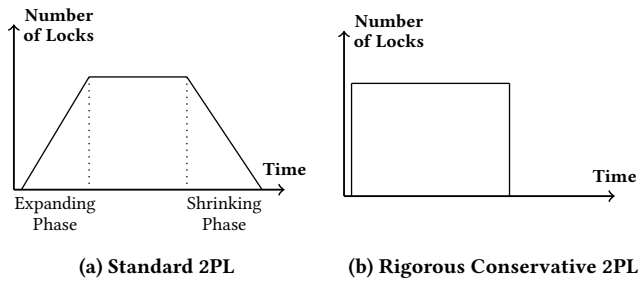

\subsection{Concurrency Control and Scheduling}
\label{ssec:concurrency}

In \cite{gray1981transaction}, actions in a transaction-based system are divided into three categories depending on their recoverability: unprotected, protected, and real. Unprotected actions are not required to be undone when a transaction is aborted, while protected actions are. Real actions are done on real objects and cannot be undone. As we show in the example in Section~\ref{ssec:Example}, it is possible that a VCMAS contains real actions. Therefore, the CC method to be used must be chosen in such a way that the number of transaction aborts is minimized. This rules out the use of optimistic concurrency control schemes, since they cause a large number of transaction aborts \cite{harder1984observations}. In OptiMA, the non-shareable plugins are treated as the critical resources of the system, and the purpose of CC is to regulate their use. Multi-version concurrency control methods do not serve this purpose, since it is not possible to keep multiple versions of most plugins \cite{korf2009multi}. We use a locking-based CC method in OptiMA, inspired by a variant of 2-phase locking (2PL) \cite{thomasian1991performance}.

In 2PL, all locks are obtained during the \textit{expanding phase}, and released during the \textit{shrinking phase} as shown in Figure~\ref{fig:2PL}. With this structure, it is possible for some actions of conflicting transactions to be processed in parallel. This is an undesirable situation for our case. In the example in Section~\ref{ssec:Example}, if the payment plugin is released before the completion of a transaction that buys museum tickets, the plugin becomes free before the time of the tickets are entered into the calendar. Then another agent might purchase tickets to another activity for an overlapping time. Thus, a CC method with a higher level of isolation is required. Rigorous conservative 2PL (RC2PL) is a variant of 2PL, in which a transaction obtains all locks before starting any action and releases them only after commit or abort (Figure~\ref{fig:RC2PL}). In this scheme, it is not allowed for any actions of conflicting transactions to coincide, so we used a similar locking procedure in our approach. There is no analogue for read or write operations in the case of plugin use. OptiMA uses a single type of lock, which is placed on a non-shareable plugin before the start of a transaction that uses it. A lock is only released after the transaction is completed and any other transaction is forbidden to use a locked non-shareable plugin. Deadlocks occur due to circular waits, which can be avoided by sequential locking \cite{gray1992transaction}. In OptiMA, each plugin is assigned a unique integer id. A transaction acquires the locks to the non-shareable plugins in the ascending order of the plugin ids to prevent deadlocks.

The type of CC used in OptiMA also determines the nature of the transaction scheduling process. The formulation of TxnSP that we use reflects the properties of the locking procedure. In the formulation of the problem, transactions are treated as non-divisible blocks and two conflicting transactions block each other completely, as we show in Section~\ref{sec:TxnSP}.

\section{The Transaction Scheduling Problem}
\label{sec:TxnSP}

\subsection{Definition of the Problem}
\label{ssec:Definition}

%We define TxnSP in a similar manner to the studies \cite{bittner2020avoiding, groppe2021optimizing}. 

We define TxnSP as the problem of scheduling jobs to be processed on identical parallel machines where some pairs of jobs cannot be processed concurrently. Preemption (interruption of jobs) is not allowed and the processing time of a job can be any positive value. The objective is to find the schedule with the minimum makespan. Makespan is defined as the time required for all jobs to be completed.
%Makespan can be defined as the maximum of the processing times of the parallel machines or the minimum time required for all the jobs to be completed.
Due to conflicts, a schedule can have idle times for machines.

\begin{figure}
    \centering    \includegraphics[width=0.8\linewidth]{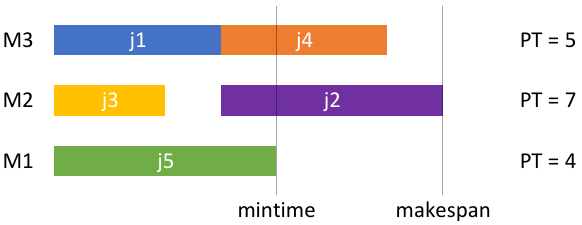}
    \caption{An example transaction schedule}
    \label{fig:schedule}
\end{figure}

In Figure~\ref{fig:schedule}, an example transaction schedule is given. In this schedule, jobs 1 and 2 are conflicting, so job 1 blocks job 2 and forces it to start later. The processing times of the machines are given on the right. The longest processing time determines the \textit{makespan}, and the shortest determines the \textit{minimum time}. The parameters used to define a TxnSP instance are given below.

\begin{itemize}[itemsep=1mm, leftmargin=0mm, label={}]
    \item $n$: Number of jobs to be scheduled
    \item $m$: Number of machines
    \item $\textbf{L}$: A $1 \times n$ matrix of job lengths
    \item $\textbf{C}$: A $n \times n$ matrix of binary values such that
    \begin{equation}
        C_{i,j} = 
        \begin{cases}
            1,& if\;i \neq j\;and\;jobs\;i\;and\;j\;conflicts \\
            0, & otherwise.
        \end{cases}
    \end{equation}
\end{itemize}

$\textbf{C}$ is a symmetric matrix with 0 values on its primary diagonal, since the conflict relation is commutative and anti-reflective.

With this definition, TxnSP can be considered as a special case of the disjunctive scheduling problem (DSP) \cite{dorndorf2000constraint}. While DSP contains both disjunctive (blocking jobs) and conjunctive (precedence relations between jobs) constraints, TxnSP is limited to include only disjunctive constraints. A problem with the same premise, namely the mutual exclusive scheduling problem (MESP), has been proposed to model the parallel solution of differential equations \cite{baker1996mutual}. In MESP it is assumed that each job has a unit length, so TxnSP is a generalization of MESP. In our formulation of TxnSP, we are inspired by \cite{bittner2020avoiding} and \cite{groppe2021optimizing}, which also assume the use of RC2PL in formulating the problem. However, those studies limit transaction lengths to integer values in problem definition and only provide a quadratic binary optimization model for the problem. In the following of this section, we continue the theoretical analysis of the TxnSP where those studies left off.

\subsection{Computational Complexity}
\label{ssec:Complexity}

We start our theoretical analysis of TxnSP by proving the complexity class of the problem.

\begin{theorem}
    TxnSP is NP-Hard.
\end{theorem}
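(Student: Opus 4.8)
The plan is to prove NP-hardness by a polynomial-time reduction from a known NP-hard problem. The natural candidate is the \textbf{Partition} problem (or equivalently, the classical \emph{makespan minimization on parallel machines}, $P \mid\mid C_{\max}$, which is NP-hard even for $m=2$ machines). Since TxnSP with an \emph{empty} conflict matrix $\mathbf{C} = \mathbf{0}$ reduces exactly to scheduling $n$ independent jobs on $m$ identical machines to minimize makespan, and this latter problem is already NP-hard, the reduction is essentially immediate: TxnSP contains $P \mid\mid C_{\max}$ as the special case with no disjunctive constraints. Concretely, I would start from Partition --- given positive integers $a_1, \dots, a_n$ with total sum $S = \sum_i a_i$, decide whether there is a subset summing to $S/2$.

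First I would construct a TxnSP instance with $m = 2$ machines, $n$ jobs with lengths $L_i = a_i$, and conflict matrix $\mathbf{C} = \mathbf{0}$ (no two jobs conflict). Then any assignment of jobs to the two machines is feasible, and the makespan equals $\max$ of the two machine loads. I would argue that the minimum makespan equals exactly $S/2$ if and only if the Partition instance is a \emph{yes}-instance: if a balanced subset exists, splitting the jobs accordingly gives both machines load $S/2$, and since the makespan can never drop below $S/2 = (\sum_i L_i)/m$, this is optimal; conversely, any schedule achieving makespan $S/2$ forces both machine loads to equal $S/2$, yielding the desired partition. This establishes that deciding whether the optimal makespan is at most a given threshold is NP-hard, hence TxnSP is NP-hard.

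The reduction itself is trivial and runs in linear time, so the \emph{main obstacle} is not the construction but the framing: I must make sure the decision version of TxnSP is stated precisely (makespan $\le K$?) so that NP-hardness is well-defined, and I should verify that the conflict-free instance genuinely lies within the TxnSP formulation given in Section~\ref{ssec:Definition} --- which it does, since $\mathbf{C} = \mathbf{0}$ is a valid symmetric, zero-diagonal matrix and arbitrary positive job lengths are explicitly permitted. An alternative, if one wishes to exhibit hardness that genuinely uses the conflict structure, would be to reduce from \textbf{Graph Coloring} or \textbf{Maximum Independent Set} by encoding conflicts into $\mathbf{C}$; this is more in the spirit of the disjunctive constraints but is strictly unnecessary for the hardness claim. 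I would therefore present the Partition reduction as the clean proof and perhaps remark that hardness already arises without any conflicts, isolating the two independent sources of difficulty (load balancing and conflict resolution) in the problem.
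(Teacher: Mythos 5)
Your proof is correct, but it takes a genuinely different route from the paper's. The paper proves NP-hardness by observing that TxnSP generalizes the mutual exclusive scheduling problem (MESP) --- the special case where all jobs have unit length but the conflict matrix is arbitrary --- and citing the known NP-hardness of MESP. You instead go to the opposite corner of the problem: arbitrary lengths but an empty conflict matrix, which collapses TxnSP to $P \mid\mid C_{\max}$, and you reduce from Partition. Both restrictions are legitimate special cases of the formulation in Section~\ref{ssec:Definition}, so both arguments are sound; in fact the paper itself makes your argument separately as Theorem~\ref{theo:cp0}, showing that instances with $cp=0$ are already intractable via multi-way partitioning. What each approach buys is a different localization of the hardness: the paper's reduction shows that the conflict structure alone suffices for hardness (lengths can be trivial), whereas yours shows that load balancing alone suffices (conflicts can be absent). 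Your closing remark --- that the problem has two independent sources of difficulty --- is exactly the point the paper spreads across its Theorem 1 and Theorems~\ref{theo:cp0}/\ref{theo:cp1}, and your care in phrasing the decision version (makespan $\le K$) is a precision the paper's one-line proof omits. The only substantive difference is which known NP-hard problem carries the weight; neither choice is wrong.
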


\begin{proof}
    If an algorithm exists that solves all TxnSP instances in polynomial time, then it can also solve all MESP instances in polynomial time. It is shown in \cite{baker1996mutual} that MESP is NP-Hard. Therefore, TxnSP is also NP-Hard.
\end{proof}

Although TxnSP is NP-Hard in general, the complexity of an instance is strongly related to the nature of its conflict matrix. We consider the two extreme cases of conflict levels in the following.

%In this section, we examine the complexity of TxnSP under different conditions and make a well-educated conjecture about the complexity of the problem.

\begin{definition}[Conflict Parity]
    Conflict parity is the ratio of the conflicting pair of jobs to the total number of pairs, and it is denoted as $cp$. For a problem instance with $n$ jobs, $cp$ is calculated as

    \begin{equation*}
        cp=\frac{Number\;of\;Conflicts}{C(n,2)} .
    \end{equation*}
    
    By definition, the possible values that can be assumed by a conflict parity are in the interval $[0,1]$.
\end{definition}

\begin{theorem}
    TxnSP instances with $cp=0$, are computationally intractable.
    \label{theo:cp0}
\end{theorem}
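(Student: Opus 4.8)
The plan is to read off the consequence of the hypothesis $cp = 0$: by the definition of conflict parity, $cp = 0$ forces the number of conflicts to be zero, so the conflict matrix $\mathbf{C}$ is identically $\mathbf{0}$ and no job blocks any other. A TxnSP instance with $cp = 0$ is therefore exactly the task of scheduling $n$ independent jobs of lengths $L_1, \dots, L_n$ on $m$ identical parallel machines to minimize the makespan, i.e. the classical multiprocessor scheduling problem $P \,\|\, C_{\max}$. Proving intractability thus amounts to exhibiting this restricted problem as NP-hard, which I would do by a polynomial-time reduction from the PARTITION problem.

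Concretely, I would start from an arbitrary PARTITION instance, a multiset of positive integers $a_1, \dots, a_k$ with total $S = \sum_i a_i$, and construct the TxnSP instance with $m = 2$ machines, $n = k$ jobs, lengths $L_i = a_i$, and conflict matrix $\mathbf{C} = \mathbf{0}$, so that $cp = 0$ holds by construction. Since the two machines carry a combined load of $S$, every schedule has makespan at least $S/2$, and a schedule attains makespan exactly $S/2$ if and only if the jobs assigned to each machine sum to $S/2$, which is precisely a balanced two-way partition of the $a_i$. Hence the minimum makespan of the constructed instance equals $S/2$ exactly when the PARTITION instance is a yes-instance, giving the required equivalence.

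The construction is computable in time polynomial in the size of the PARTITION instance, and PARTITION is NP-complete, so any polynomial-time algorithm for TxnSP restricted to $cp = 0$ would decide PARTITION in polynomial time. This shows the conflict-free subclass is NP-hard and, under $\mathrm{P} \neq \mathrm{NP}$, computationally intractable, which is the content of Theorem~\ref{theo:cp0}.

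There is no genuinely hard step here; the substance of the statement is the mildly counterintuitive fact that deleting all conflicts does not make TxnSP easy, because the residual pure load-balancing problem is already hard. The only points that need care are checking that the built matrix really has $cp = 0$ and arguing the ``only if'' direction of the makespan equivalence, namely that a makespan of $S/2$ on two machines leaves zero slack and therefore pins down an exact equal-sum split (and that an odd $S$ forces both a makespan strictly above $S/2$ and a no-instance). One could alternatively skip the explicit reduction and simply identify the $cp = 0$ case with $P2 \,\|\, C_{\max}$, invoking the well-known NP-hardness of the latter established by Garey and Johnson.
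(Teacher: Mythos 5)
Your proposal is correct and takes essentially the same route as the paper: both observe that $cp=0$ makes TxnSP coincide with the classical problem of distributing independent jobs over $m$ identical machines to minimize makespan (the multi-way number partitioning problem), and both derive intractability from the known NP-hardness of that problem. The only difference is that you spell out the explicit reduction from PARTITION (which, incidentally, is stated in the correct direction --- reducing a hard problem \emph{to} the $cp=0$ case), whereas the paper delegates this to a citation of Korf's work on multi-way partitioning.
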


\begin{proof}
    If we limit the problem to the case of no conflicts, it trivially reduces to the multi-way partitioning problem. Since this is an NP-Hard problem, as proven in \cite{korf2009multi}, TxnSP instances with $cp=0$ are intractable.
\end{proof}

\begin{theorem}
    TxnSP instances with $cp=1$, can be solved in polynomial time.
    \label{theo:cp1}
\end{theorem}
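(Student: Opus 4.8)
The plan is to unfold the definition of conflict parity and observe that $cp=1$ collapses almost all scheduling freedom. First I would note that $cp=1$ means the number of conflicting pairs equals $C(n,2)$, the total number of pairs, so $C_{i,j}=1$ for every $i\neq j$. In other words, no two distinct jobs may be processed concurrently.

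Next I would argue that this pairwise-conflict condition forces every feasible schedule to be fully sequential, regardless of the number of machines $m$. Since preemption is disallowed, each job $i$ occupies a single contiguous time interval of length $L_i$. The conflict constraint requires that for each pair $i\neq j$ these two intervals do not overlap, so the $n$ intervals are pairwise disjoint on the time axis. Consequently at most one job runs at any instant, and the $m$ identical machines give no advantage: effectively the instance behaves like single-machine scheduling of $n$ jobs.

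The makespan analysis is then immediate. The disjoint intervals have total length $S=\sum_{i=1}^{n} L_i$, and disjoint intervals of total length $S$ cannot fit inside a time window shorter than $S$; hence every feasible schedule has makespan at least $S$. This lower bound is attained by the schedule that places the jobs back-to-back, with no idle time, starting at time $0$, which yields makespan exactly $S$. Therefore the optimal makespan is $\sum_{i=1}^{n} L_i$, and an optimal schedule is obtained by listing the jobs in any order on one machine (leaving the rest idle).

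Finally I would address complexity: computing the optimal value requires only summing the $n$ entries of $\textbf{L}$, and constructing an explicit optimal schedule is likewise linear, so the instance is solved in $O(n)$ time, which is polynomial. I do not expect a genuine obstacle in this proof; the only point requiring care is phrasing the disjointness argument so that it makes explicit why the $m$ parallel machines provide no benefit once all jobs mutually conflict, since that is the crux distinguishing this case from the $cp=0$ case of Theorem~\ref{theo:cp0}.
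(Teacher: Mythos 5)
Your proposal is correct and follows essentially the same route as the paper's proof: all pairs conflict, so execution is forced to be sequential and the optimal makespan is $\sum_i L_i$, computable in linear time. You simply spell out the lower-bound/attainability argument and the $O(n)$ complexity more explicitly than the paper does.
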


\begin{proof}
    If all jobs conflict with each other, it is not possible to run any two jobs simultaneously. Then any schedule would be equivalent to processing all jobs one after another on a single machine. Then the makespan of the optimal schedule is calculated by summing the lengths of all jobs.
\end{proof}

The complexity of TxnSP instances varies completely between the two extremes of $cp$. This raises the question of how the problem \textit{behaves} for moderate values of $cp$. We conducted an empirical analysis of this issue by examining the structure of the solution spaces of random instances of TxnSP with various $cp$ values. The results of this analysis and their implications are presented in Section~\ref{sssec:Analysis}.

\subsection{Partial Optimality Condition}
\label{ssec:Partial}

In this section, we introduce and provide the proof of the partial optimality condition for TxnSP, which is crucial for the dynamic programming approach that we introduce in Section~\ref{ssec:Library}. We start by providing the definitions for some necessary terms.

\begin{definition}[Subschedule]
    A subschedule is a schedule that includes only a subset of the set of jobs. If a subschedule $i$ includes only the jobs in subset $S$, it is said that $i$ is formed by $S$, and the size of $i$ is $|S|$. In a subschedule each job starts at the earliest possible time.
\end{definition}

By definition, a schedule is a subschedule formed by the set of all jobs in the problem. %In that sense, each schedule is a subschedule, but the reverse is not true. 
Thus, all the rules about subschedules that we introduce are also valid for schedules. 
%However, the rules we set for a schedule does not hold for subschedules. Such rules are about the optimality and a subschedule cannot be optimal because it is not a complete solution but only a partial one.

%\begin{definition}[Processing Times]
%    The set formed by the processing times of the machines for the subschedule $i$ is denoted as $PT_i$
%\end{definition}

\begin{definition}[Makespan and Minimum Time]
    For subschedule $i$, makespan and minimum time are the maximum and minimum processing times of the machines, and denoted as $ms_i$ and $mt_i$, respectively.
    %For the subschedule $i$ with processing time set $PT_i$, makespan is the maximum value in $PT_i$ and denoted as $ms_i$. The minimum time is the minimum value in $PT_i$ and denoted as $mt_i$.
\end{definition}

%\begin{definition}[Equivalent Schedules]
%    Two subschedules $i$ and $j$ that are formed by the same subset are equivalent if the two sets each of which is formed by the processing times of one of the schedules are equivalent and the two sets each of which are formed by the last jobs on the machines are equivalent. It is denoted as $i\equiv j$.
%\end{definition}

\begin{definition}[Derived Schedule]
    If the subschedule $j$ can be produced by adding $c\geq0$ number of jobs to the subschedule $i$, then $j$ is said to be \textit{derived} from $i$ and $i$ is a \textit{root} of $j$. The root of $j$ formed by subset $S$ is briefly called the $S$ root of $j$.
\end{definition}
 
By definition, in a subschedule $i$ formed by $S$, each job must start at the earliest possible time. Thus, if removing the job $\alpha \in S$ causes at least one job not to start at the earliest possible time, the result is not a subschedule and $i$ does not have a $S-\{\alpha\}$ root.

\begin{definition}[Derivation Plan]
    A derivation plan is a list of instructions that is used to derive a subschedule from another. If $j$ is derived from $i$ by adding $c$ jobs, then the related derivation plan $\mathcal{D}$ is a list of $c$ instructions. Each instruction specifies which job to be inserted into which machine.
\end{definition}

\begin{definition}[Empty Subschedule]
    The subschedule that is formed by the empty set is called the empty subschedule.
\end{definition}

\begin{definition}[Residual Subschedule]
    If the subschedule $j$ is derived from $i$ using the derivation plan $\mathcal{D}$, the residual subschedule $j-i$ is the schedule derived from the empty subschedule using $\mathcal{D}$.
\end{definition}

Each derivation plan can be represented by a residual subschedule, since a residual subschedule shows the order and place of the jobs to be inserted. We use these two terms interchangeably.

\begin{lemma}
    If the subschedule $j$ is derived from $i$, then $ms_j\leq ms_i+ms_{j-i}$.
    \label{lem:Residual}
\end{lemma}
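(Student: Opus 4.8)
The plan is to prove the bound by an explicit \emph{superposition} construction: I would build a single feasible schedule $\hat{j}$ for the job set of $j$ whose makespan is at most $ms_i + ms_{j-i}$, and then argue that $j$, being the earliest-start realization of the same derivation plan $\mathcal{D}$, can only do better. Concretely, let $\hat{j}$ keep every job of $i$ exactly where it sits in $i$, and place every job of the residual $j-i$ at its residual start time shifted forward by $ms_i$, i.e. delay the whole residual so that it begins only after all of $i$ has finished.

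First I would check that $\hat{j}$ is feasible. Every job of $i$ completes by time $ms_i$ (definition of makespan), while every shifted residual job starts at time $\ge ms_i$ (residual start times are nonnegative). Hence on each machine all $i$-jobs precede all shifted residual jobs with no overlap, so the machine constraints hold; and any $i$-job and any residual job are temporally disjoint, so no conflict between the two groups can be violated. Feasibility \emph{within} each group is inherited from $i$ and from $j-i$, since a uniform shift preserves the relative timing of the residual. The makespan of $\hat{j}$ is then immediate: $i$-jobs finish by $ms_i$ and shifted residual jobs finish by $ms_i + ms_{j-i}$, so $ms_{\hat{j}} \le ms_i + ms_{j-i}$.

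It remains to show $ms_j \le ms_{\hat{j}}$, and this is the step I expect to be the main obstacle. Both $j$ and $\hat{j}$ schedule the same jobs on the same machines in the order dictated by $\mathcal{D}$; the only difference is that $j$ starts every job at the earliest feasible moment whereas $\hat{j}$ postpones the residual. I would prove $ms_j \le ms_{\hat{j}}$ by induction on the instructions of $\mathcal{D}$, showing that each job completes in $j$ no later than in $\hat{j}$: the $i$-jobs sit identically in both, and for an inserted job the earliest feasible start in $j$ is bounded by its start in $\hat{j}$ because, by the inductive hypothesis, every job that can block it, namely a predecessor on its machine or a conflicting job, already completes at least as early in $j$ as in $\hat{j}$. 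The delicate case is a conflicting job that the earliest-start rule slots into an idle gap, since then the temporal order of two conflicting jobs need not follow their insertion order; reconciling this with the induction, by combining the completion-time bounds already established with the temporal separation guaranteed by the $ms_i$ shift, is the crux of the argument. Chaining $ms_j \le ms_{\hat{j}} \le ms_i + ms_{j-i}$ then yields the lemma.
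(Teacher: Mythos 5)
Your construction is essentially the paper's own argument: the paper also bounds $ms_j$ by the schedule that runs $i$ to completion and only then starts the residual, asserting that conflicts can at worst force exactly this sequential composition. You are in fact more careful than the paper, which simply states that "in the worst case the jobs in $j-i$ cannot start before all the jobs in $i$ are completed" and never proves the monotonicity step $ms_j \le ms_{\hat{j}}$ that you correctly isolate as the crux. That step does close: under the paper's semantics a derivation plan inserts jobs one at a time and each inserted job starts at the maximum of its machine's current processing time and the completion times of the conflicting jobs already placed (cf.\ Figure~3, where the later-inserted conflicting job is "forced to start later"), so there is no gap-filling and the temporal order of conflicting jobs always follows insertion order. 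Your induction hypothesis -- every previously placed job completes in $j$ no later than in $\hat{j}$ -- then immediately bounds the earliest feasible start of the next job, and the delicate case you worry about never arises. So the proposal is correct and, once that semantic point is fixed, actually supplies the rigor the paper's proof omits.
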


\begin{proof}
    In the worst case scenario, the jobs in $j-i$ cannot start before all the jobs in $i$ are completed due to conflicts. In this case, $j$ will practically act as running $i$ and $j-i$, one after the other. In this case $ms_j=ms_i+ms_{j-i}$. However, equality does not always hold. In case some of the last jobs of $i$ do not conflict with the jobs in $j-i$, some of the jobs in $j$ can start earlier and it is possible that $ms_j<ms_i+ms_{j-i}$.
\end{proof}

\begin{figure}
    \centering
    \begin{subfigure}[b]{0.225\textwidth}
        \centering
        \includegraphics[scale=0.45]{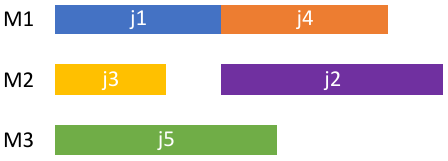}
        \caption{Base Subschedule}   
        \label{fig:lemmaa}
    \end{subfigure}
    \hfill
    \begin{subfigure}[b]{0.225\textwidth}  
        \centering
        \includegraphics[scale=0.45]{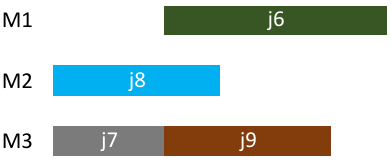}
        \caption{Residual Schedule}    
        \label{fig:lemmab}
    \end{subfigure}
    \vskip\baselineskip
    \begin{subfigure}[b]{0.45\textwidth}   
        \centering
        \includegraphics[scale=0.45]{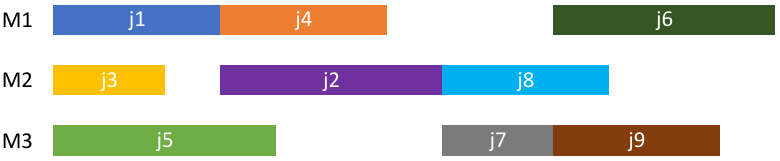}
        \caption{Worst-Case Derived Schedule}    
        \label{fig:lemmac}
    \end{subfigure}
    \vskip\baselineskip
    \begin{subfigure}[b]{0.45\textwidth}   
        \centering
        \includegraphics[scale=0.45]{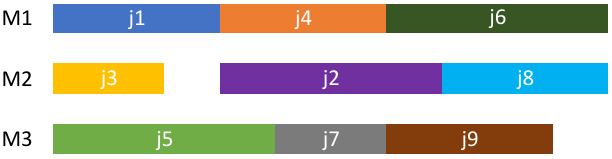}
        \caption{Best-Case Derived Schedule}   
        \label{fig:lemmad}
    \end{subfigure}
    \caption{An Example Case for Lemma~\ref{lem:Residual}}
    %: The worst-case scenario for a derived subschedule is running the base subschedule and the residual schedule one after another due to the conflicts. In the best case, all the newly added jobs start at the earliest possible time due to the lack of conflicts.} 
    \label{fig:lemma}
\end{figure}

In Figure~\ref{fig:lemma}, an example case for the proof of Lemma~\ref{lem:Residual} is presented. In the worst case shown in Figure~\ref{fig:lemmac}, $j6$ and $j7$ conflict with $j2$, so the base subschedule (Figure~\ref{fig:lemmaa}) and the residual schedule (Figure~\ref{fig:lemmab}) are run one after another. In the best case shown in Figure~\ref{fig:lemmad}, each new job added to the base schedule starts immediately after the previous job in its machine.

\begin{lemma}
    If the subschedule $j$ is derived from $i$, then $ms_j\geq mt_i+ms_{j-i}$.
    \label{lem:Residual2}
\end{lemma}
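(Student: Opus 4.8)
The plan is to mirror the structure of the proof of Lemma~\ref{lem:Residual}, but to argue a lower bound rather than an upper bound, and to keep track of \emph{which} machine is responsible for the bound. The quantity $mt_i$ should enter for the following reason: in the subschedule $i$ every machine finishes no earlier than the minimum time, so the completion time $C_k$ of each machine $k$ satisfies $C_k \ge mt_i$, with the least-loaded machine finishing exactly at $mt_i$. Intuitively, no part of the residual work can be carried out before the base work on its machine has cleared, and since every machine is occupied until at least $mt_i$, the residual portion of $j$ is pushed out by at least $mt_i$. The ``best case'' for this lower bound (the analogue of the best case invoked in Lemma~\ref{lem:Residual}) is when the residual work is grafted onto the machine that became free earliest, namely the one finishing at $mt_i$; even then the residual cannot complete before $mt_i + ms_{j-i}$.

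First I would make $ms_{j-i}$ combinatorial by extracting a \emph{critical chain} of the residual subschedule $j-i$: starting from the job that finishes last (at time $ms_{j-i}$) and repeatedly stepping back to the job that forced the current one to start when it did -- either its predecessor on the same machine or a conflicting job on another machine -- until reaching a job that starts at time $0$. Because each backward step lands exactly on a completion instant, the lengths of the chain jobs sum to exactly $ms_{j-i}$, and every consecutive pair either shares a machine or conflicts. The key structural fact is then schedule-independent: consecutive chain jobs can never run concurrently, in $j-i$, in $j$, or in any valid schedule.

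Next I would carry this chain over to the derived schedule $j$. Since the chain jobs are among the jobs inserted by $\mathcal{D}$, they all appear in $j$, and consecutive members still cannot overlap there. Let the chain member that starts earliest in $j$ lie on machine $k$; it cannot begin before machine $k$ has processed its base load, whose completion is at least $mt_i$, so the entire chain is confined to the interval $[mt_i, ms_j]$. It then remains to argue that these mutually-blocking jobs occupy a span of at least their total length $ms_{j-i}$ inside that interval, from which $ms_j \ge mt_i + ms_{j-i}$ follows, because the last chain job sits on some machine whose completion is at most $ms_j$.

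I expect this last step to be the main obstacle. A tempting shortcut -- shift the whole residual schedule forward by $mt_i$ and claim each job of $j$ finishes no earlier than its shifted copy -- is \emph{false}: in $j$ two conflicting jobs may run in the opposite order to $j-i$ (a lightly loaded machine can host the ``second'' job first), so individual completion times are not monotone and the chain may even be laid out in a different order. The delicate point is therefore to rule out that any such reordering \emph{compresses} the chain below its combined length. This is exactly where I would rely on the special property of the \emph{traced} critical chain (total length equal to $ms_{j-i}$, with no internal idle) rather than that of an arbitrary sequence of conflicting jobs: since every consecutive pair is mutually blocking, any placement of the chain in $j$ must spread its jobs over a span at least as long as their combined length, regardless of their relative order. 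Treating the possible gaps in the base subschedule and the precise insertion semantics of $\mathcal{D}$ carefully within this argument is the part that will require the most attention.
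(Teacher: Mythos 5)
Your instinct that the final step is the crux is correct, but the claim you use to discharge it is false, and the hole cannot be patched. You assert that because every \emph{consecutive} pair in the traced critical chain is mutually blocking, any placement of the chain must spread its jobs over a span at least equal to the sum of their lengths, ``regardless of their relative order.'' Non-consecutive chain members need not block each other, and that is exactly what lets the chain fold: for a chain $a_1,a_2,a_3$ with $a_1$ blocking $a_2$ and $a_2$ blocking $a_3$ but $a_1,a_3$ independent, a schedule may run $a_2$ on $[0,1]$ and then $a_1$ and $a_3$ simultaneously on $[1,2]$ on different machines, compressing three unit jobs into a span of $2<3$. The zero-internal-idle property of the chain as traced in $j-i$ does not forbid this, because nothing forces $j$ to realize the chain in the same order.

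Moreover, this folding defeats the inequality itself, not merely your proof of it. Take two machines, a base subschedule $i$ with a job of length $2$ on $M_1$ and a job of length $1$ on $M_2$ (so $mt_i=1$), and a derivation plan that inserts $X$ (length $10$) on $M_1$, then $Y$ (length $1$) on $M_2$ with $Y$ conflicting with $X$, then $Z$ (length $10$) on $M_2$. In the residual, $X$ occupies $[0,10]$, forcing $Y$ to $[10,11]$ and $Z$ to $[11,21]$, so $ms_{j-i}=21$; but in $j$ the base job delays $X$ to $[2,12]$, so $Y$ fits at $[1,2]$ and $Z$ at $[2,12]$, giving $ms_j=12<mt_i+ms_{j-i}=22$. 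The bound is safe only when the residual contains no conflict-induced idle time that the staggered base loads of $i$ can absorb. For what it is worth, the paper's own proof of Lemma~\ref{lem:Residual2} only exhibits the conflict-free equality case and asserts strict inequality ``in all other cases,'' so it does not handle this situation either; the statement needs an added hypothesis (e.g., that $j-i$ is gap-free) or a replacement for $ms_{j-i}$ that is insensitive to how the residual's idle time redistributes, rather than a cleverer chain argument.
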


\begin{proof}
    In the case where there is no conflict and the machine with the lowest processing time in $i$ is the same as the machine with the highest processing time in $j-i$, $ms_j = mt_i+ms_{j-i}$. In all other cases, $ms_j > mt_i+ms_{j-i}$
\end{proof}

\begin{definition}[Dominating Subschedule]
    \label{def:dominating}
    If the subschedules $i$ and $j$ are formed by the same subset $S$, and if $ms_i\leq mt_j$, it is said that $i$ dominates $j$. This relation is denoted as $i<j$.
\end{definition}

\begin{theorem}[Partial Optimality]
    If subschedule $i$ dominates $j$, then for each schedule $l$ formed by subset $S$ and derived from $j$, there is at least one schedule $k$ that is formed by $S$ and derived from $i$ where $ms_k\leq ms_l$.
    \label{theo:partial}
\end{theorem}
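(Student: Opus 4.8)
The plan is to transplant the derivation of $l$ from $j$ verbatim onto $i$ and then control the resulting makespan using the two residual lemmas. Write $S_0$ for the common subset that forms $i$ and $j$, so that $S_0\subseteq S$ and every $l$ as in the statement is obtained from $j$ by some derivation plan $\mathcal{D}$ that inserts exactly the jobs of $S\setminus S_0$. First I would define $k$ to be the subschedule derived from $i$ using this same plan $\mathcal{D}$. Since $i$ is formed by $S_0$ and $\mathcal{D}$ adds precisely the jobs of $S\setminus S_0$, the subschedule $k$ is formed by $S_0\cup(S\setminus S_0)=S$ and is, by construction, derived from $i$; hence $k$ is a legitimate candidate of exactly the required type.

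The key observation is that the residual subschedule depends only on the derivation plan and not on its root: by the definition of a residual subschedule, both $l-j$ and $k-i$ are the subschedule obtained from the empty subschedule using $\mathcal{D}$. Consequently $k-i=l-j$, and in particular $ms_{k-i}=ms_{l-j}$. I would also note here that $k$ really is a valid subschedule on $S$: the jobs inserted by $\mathcal{D}$ are identical in both cases, and conflicts depend only on the identity of the jobs rather than on their placement, so the earliest-start-time rule yields a well-defined schedule regardless of how the base jobs of $S_0$ happen to be arranged in $i$ as opposed to $j$.

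With the witness $k$ fixed, the makespan bound then follows by chaining the two lemmas. Applying Lemma~\ref{lem:Residual} to $k$ derived from $i$ gives $ms_k\leq ms_i+ms_{k-i}$, while applying Lemma~\ref{lem:Residual2} to $l$ derived from $j$ gives $ms_l\geq mt_j+ms_{l-j}$. Combining $ms_{k-i}=ms_{l-j}$ with the domination hypothesis $ms_i\leq mt_j$ from Definition~\ref{def:dominating}, I would conclude
\begin{equation*}
    ms_k \leq ms_i + ms_{k-i} \leq mt_j + ms_{l-j} \leq ms_l,
\end{equation*}
which is exactly the claimed inequality, and since $l$ was arbitrary this proves the theorem.

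I expect the only genuine obstacle to lie in the second paragraph, namely the careful verification that applying $\mathcal{D}$ to $i$ produces a bona fide subschedule formed by $S$ and that its residual coincides with $l-j$. This rests on the fact that conflicts are a property of job pairs alone and that residuals are defined relative to the empty subschedule, so they are invariant under change of root. Once this schedule-independence of the residual is in place, the makespan estimate is a purely algebraic consequence of the two residual lemmas and the domination inequality, requiring no further case analysis.
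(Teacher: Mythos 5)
Your proof is correct and follows essentially the same route as the paper's: choose $k$ via the same derivation plan as $l$, note $ms_{k-i}=ms_{l-j}$, and chain Lemma~\ref{lem:Residual}, the domination inequality $ms_i\leq mt_j$, and Lemma~\ref{lem:Residual2}. If anything, you are slightly more careful than the paper in flagging that the residual is root-independent and that applying $\mathcal{D}$ to $i$ yields a valid subschedule, points the paper's proof takes for granted.
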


\begin{proof}
    From Lemma~\ref{lem:Residual2}, 
    %if $k$ is derived from $i$ and $l$ is derived from $j$, 
    the following inequality hold.

    \begin{equation}
        ms_l\geq mt_j+ms_{l-j}      
    \end{equation}

    Now assume that $k$ is derived using the same derivation plan as $l$. Then
    
    \begin{equation}
        ms_l\geq mt_j+ms_{k-i} .
        \label{eq:ineq}
    \end{equation}

    Because $i$ dominates $j$ we can write
    
    \begin{equation}
        ms_l \geq mt_j+ms_{k-i} \geq ms_i+ms_{k-i} .        
    \end{equation}

    Then, using Lemma~\ref{lem:Residual} we can show that
    \begin{equation}
        ms_l \geq ms_k .
    \end{equation}

    This shows that we can find at least one subschedule that is derived from $i$, and has a makespan smaller than or equal to any schedule derived from $j$ and concludes the proof.
\end{proof}

%The partial optimality condition shows that while searching for the optimal schedule recursively, such as in DP, a subschedule that is dominated by another can be eliminated.

\subsection{Permutation Encoding}
\label{ssec:Permutation}

In \cite{ccalikyilmaz2023opportunities} it is proposed to use permutations to represent possible solutions of TxnSP. Due to its simplicity, this representation is quite useful for the implementation of various methods such as the ones we introduce in Section~\ref{ssec:Library}. In the following, we prove that permutation encoding is capable of reducing the size of a search space while also being able to represent at least one optimal solution for any problem instance.

%In \cite{ccalikyilmaz2023opportunities}, a permutation encoding for TxnSP is proposed. Due to its simplicity, it is quite useful for the implementation of various methods such as the ones we introduce in Section~\ref{ssec:Library}. In addition, with this representation, the size of the search space is reduced, since most schedules cannot be represented by a permutation. In this section, we prove that at least one optimal schedule can be represented by a permutation and present a rule of transformation from a permutation to a schedule.

%We are following the steps of \cite{ccalikyilmaz2023opportunities} and use a permutation encoding as it has been proposed in that study. As we show in the rest of this subsection, not all possible schedules can be represented using the encoding we propose, but an optimal solution to any TxnSP is representable this way. We start by introducing a one-to-one (but not onto) transformation from permutations to schedules.

\begin{definition}[Last Jobs]
    For subschedule $i$, the set that includes the last job processed on each machine is denoted as $LJ_i$.
\end{definition}

\begin{definition}[Starting and Completion Times]
    In a subschedule $i$, the starting and completion times of a job $\alpha$ are denoted as $st_i^{(\alpha)}$ and $ct_i^{(\alpha)}$ respectively.
\end{definition}

Since each job is processed without interruption, $ct_i^{(\alpha)}=st_i^{(\alpha)}+L_{\alpha}$ for all $\alpha \in S$ where $i$ is a subschedule formed by set $S$.

\begin{figure}
    \centering
    \includegraphics[width=0.8\linewidth]{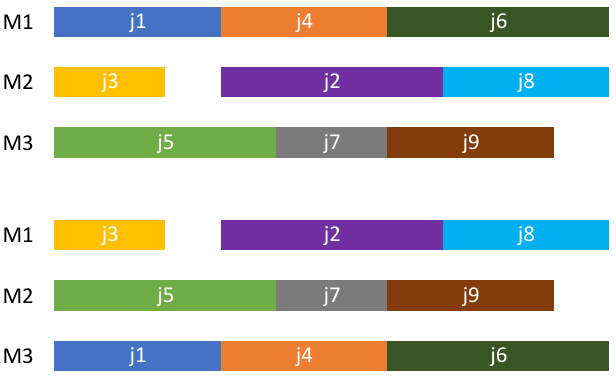}
    \caption{Two equivalent subschedules}
    \label{fig:equiv}
\end{figure}

\begin{definition}[Equivalent Schedules]
    Two subschedules $i$ and $j$ that are formed by the subset $S$ are equivalent if $LJ_i=LJ_j$ and $ct_i^{(\alpha)}=ct_j^{(\alpha)}$ for every $\alpha \in S$.
\end{definition}

Equivalency relation comes from the fact that the machines are identical in TxnSP. Thus, swapping the machines would produce the same set of processing times and the same set of last jobs, as shown in Figure~\ref{fig:equiv}. 

\begin{lemma}
    Any equivalent of an optimal schedule is also optimal.
\end{lemma}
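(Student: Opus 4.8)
The plan is to reduce the claim to the single observation that equivalent schedules have identical makespans. Since optimality for TxnSP is defined purely as attaining the minimum makespan, once I establish that an equivalent schedule has the same makespan as the original, optimality transfers for free. The argument is therefore short and does not need any of the derivation/domination machinery from the preceding sections.

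First I would express the makespan in terms of completion times. For a schedule $i$ formed by $S$, the processing time of each machine equals the completion time of the last job on that machine, i.e. the completion time of the corresponding element of $LJ_i$. Because every other job assigned to a machine completes no later than that machine's last job, the maximum processing time over all machines coincides with the maximum completion time over all jobs, giving $ms_i = \max_{\alpha \in S} ct_i^{(\alpha)}$. This identity relies only on the fact that in a subschedule each job starts at the earliest possible time, so the last job on a machine genuinely dictates that machine's processing time.

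Next I would invoke the definition of equivalence. Let $j$ be equivalent to an optimal schedule $i$. By definition $i$ and $j$ are formed by the same set $S$ and satisfy $ct_i^{(\alpha)} = ct_j^{(\alpha)}$ for every $\alpha \in S$ (the condition $LJ_i = LJ_j$ is consistent with this but not even needed for the makespan equality). Taking the maximum over $\alpha \in S$ on both sides of the completion-time equalities yields $ms_j = \max_{\alpha \in S} ct_j^{(\alpha)} = \max_{\alpha \in S} ct_i^{(\alpha)} = ms_i$.

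Finally, since $i$ is optimal, $ms_i$ is the minimum makespan achievable for the instance; as $ms_j = ms_i$, the schedule $j$ also attains this minimum and is hence optimal, which concludes the proof. I do not expect a real obstacle here: the only step requiring care is the makespan/completion-time identity of the second paragraph, and even that is immediate given the convention that jobs start as early as possible.
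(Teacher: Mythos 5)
Your proof is correct and follows essentially the same route as the paper's: both arguments reduce to the single observation that equivalent schedules have equal makespans, from which optimality transfers immediately. The only difference is one of presentation --- the paper appeals to the informal ``swapping identical machines preserves the makespan'' picture, whereas you derive the makespan equality directly from the definitional condition $ct_i^{(\alpha)}=ct_j^{(\alpha)}$ via the identity $ms_i=\max_{\alpha\in S} ct_i^{(\alpha)}$, which is slightly more explicit but not a different argument.
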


\begin{proof}
    If a schedule $i$ is optimal, its makespan has the minimum possible value. Since swapping machines would not change the makespan, an equivalent of $i$ is also optimal.
\end{proof}

\begin{definition}[Reducible Subschedule]
    A subschedule $i$ is said to be reducible by $\alpha$ if $ms_i$ can be decreased by replacing the job $\alpha \in LJ_i$ on another machine.

    %For a subschedule $i$, if $ms_i$ can be decreased by removing a job in the set $LJ_i$ and inserting it to another machine, then $i$ is said to be reducible. If the place of job $\alpha$ is changed in the reducing process of $i$, then it is said that $i$ is reducible by $\alpha$.
\end{definition}

In Figure~\ref{fig:Reducible}, if $j6$ and $j8$ do not conflict, the subschedule is reducible, since makespan is reduced by moving $j8$ to $M2$ or $M3$.

\begin{definition}[Prime Subschedule]
    A non-reducible subschedule $i$ of size $c$ is said to be prime if and only if there exists at least one non-reducible root of $i$ of size $d$ for every $d<c$.
\end{definition}

By definition, any prime subschedule of size $c$ can be derived from the empty subschedule in $c$ steps, where in each step a new non-reducible subschedule is derived by adding one job. 
%In the following, we prove that the set of prime subschedules always includes an optimal solution and propose a permutation representation that can represent any prime schedule or its equivalent.

\begin{figure}
    \centering
    \includegraphics[width=0.9\linewidth]{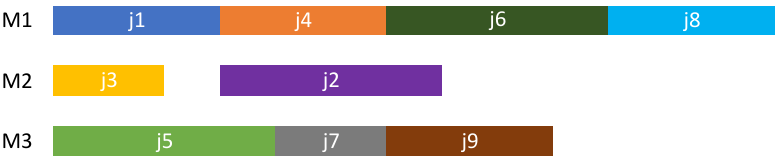}
    \caption{An Example of Reducible Subschedule}
    \label{fig:Reducible}
\end{figure}

\begin{lemma}
    Let $i$ and $j$ be subschedules that are formed by $S$. If $LJ_i=LJ_j$ and $ct_i^{(\alpha)}\leq ct_j^{(\alpha)}$ for every $\alpha\in S$, then for any subschedule $l$ derived from $j$ and formed by $S'$, there exists a subschedule $k$ that is derived from $i$ and formed by $S'$ such that $ms_k \leq ms_l$.
    %$ct_k^{(\alpha)}\leq ct_l^{(\alpha)}$ for every $\alpha\in S'$.
    \label{lem:Dominating2}
\end{lemma}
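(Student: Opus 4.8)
The plan is to take the derivation plan that produces $l$ from $j$ and \emph{replay it verbatim on $i$}, obtaining a candidate $k$, and then show that this $k$ dominates $l$ completion-time-by-completion-time, which immediately forces $ms_k \le ms_l$. This strengthens the mechanism behind Theorem~\ref{theo:partial}: instead of the single scalar comparison $ms_i \le mt_j$, I exploit the component-wise hypothesis $ct_i^{(\alpha)} \le ct_j^{(\alpha)}$ together with $LJ_i = LJ_j$ to compare the two schedules directly. To make ``the same derivation plan'' meaningful I first fix a correspondence between the machines of $i$ and $j$: since $LJ_i = LJ_j$, each occupied machine of $i$ is identified with the machine of $j$ carrying the same last job, and because $|LJ_i| = |LJ_j|$ the two subschedules have equally many empty machines, which I pair off arbitrarily. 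Writing $\mathcal{D} = (d_1, \dots, d_c)$ for the plan, where $d_t$ inserts a job $\beta_t$ onto a machine $M_t$, I let $j = j_0, \dots, j_c = l$ and $i = k_0, \dots, k_c = k$ be the two sequences of intermediate subschedules obtained by applying $d_1, \dots, d_t$ in turn.

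The heart of the argument is an induction on $t$ maintaining the invariant that $k_t$ and $j_t$ are formed by the same job set, that $LJ_{k_t} = LJ_{j_t}$, and that $ct_{k_t}^{(\alpha)} \le ct_{j_t}^{(\alpha)}$ for every job $\alpha$ present. The base case $t=0$ is exactly the hypothesis. For the inductive step, appending $\beta_t$ leaves every previously scheduled job untouched, so all old completion-time inequalities persist, and after insertion $\beta_t$ becomes the last job of $M_t$ in both subschedules, preserving $LJ_{k_t} = LJ_{j_t}$. It remains to bound $ct^{(\beta_t)}$. Here I use that the earliest start time of an appended job is the maximum of (i) the availability time of $M_t$, namely the completion time of its current last job, and (ii) the largest completion time among the jobs already present that conflict with $\beta_t$. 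Both quantities range over identical index sets in $k_{t-1}$ and $j_{t-1}$ (the same last job on $M_t$ by the $LJ$-invariant, and the same conflict set by equality of job sets), and each term is no larger in $k_{t-1}$ by the inductive hypothesis; monotonicity of the maximum then gives $st_{k_t}^{(\beta_t)} \le st_{j_t}^{(\beta_t)}$, hence $ct_{k_t}^{(\beta_t)} \le ct_{j_t}^{(\beta_t)}$, which closes the induction.

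Applying the invariant at $t = c$ yields $LJ_k = LJ_l$ and $ct_k^{(\alpha)} \le ct_l^{(\alpha)}$ for all $\alpha \in S'$. Since the makespan of a subschedule is the largest machine processing time, that is $ms = \max_{\alpha \in LJ} ct^{(\alpha)}$, and this maximum is taken over the common set $LJ_k = LJ_l$ with each $ct_k^{(\alpha)} \le ct_l^{(\alpha)}$, I conclude $ms_k \le ms_l$, which is the claim. I expect the main obstacle to lie not in the induction, which is routine, but in pinning down the earliest-start rule precisely enough to justify its monotonicity: I must argue that an appended job's start depends \emph{only} on the machine-availability time and on the conflicting jobs' completion times, with no hidden dependence that could break under component-wise domination, and I must dispatch the empty-machine cases and the matching of identical machines cleanly so that replaying $\mathcal{D}$ on $i$ is well defined and in fact yields a genuine subschedule. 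Once this bookkeeping is settled, monotonicity of the maximum does all the work.
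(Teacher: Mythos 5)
Your proposal is correct and follows essentially the same route as the paper's proof: the paper likewise replays the derivation plan of $l$ on $i$ after permuting machines so that the job following each member of $LJ_j$ is the same in $k$ as in $l$, and then concludes $st_k^{(\alpha)}\leq st_l^{(\alpha)}$, hence $ct_k^{(\alpha)}\leq ct_l^{(\alpha)}$ and $ms_k\leq ms_l$. Your explicit induction and earliest-start monotonicity argument simply fill in the step the paper asserts in one line.
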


\begin{proof}
    Let us choose $k-i$ as the equivalent of $l-i$ where the machines are swapped such that the job that follows each $\alpha\in LJ_j$ is the same in both $k$ and $l$.
    %if $\beta$ is scheduled to start right after $\alpha\in LJ_j$ in $l$, it also starts right after $\alpha$ in $k$. 
    In this way, it is certain that $st_k^{(\alpha)}\leq st_l^{(\alpha)}$ and, as a result, $ct_k^{(\alpha)}\leq ct_l^{(\alpha)}$ for each job $\alpha\in S'$. Since makespan is the maximum of completion times, $ms_k\leq ms_l$.
\end{proof}

\begin{lemma}
    Let $i$ be a non-reducible subschedule of size $c$ that is formed by $S$. If $i$ does not have any non-reducible root of size $c-1$, then there exists a subschedule $j$ formed by $S$ where $LJ_i=LJ_j$ and $ct_j^{(\alpha)}\leq ct_i^{(\alpha)}$ for each job $\alpha \in S$.
    \label{lem:Root}
\end{lemma}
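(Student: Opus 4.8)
The plan is to read the hypothesis as evidence that $i$, while not improvable by relocating any single terminal job, was nevertheless \emph{built} wastefully, and to turn that waste into a rebuilt subschedule $j$ on the same set $S$ that finishes every job no later than $i$ and keeps the terminal job of each machine unchanged. Write $M_1,\dots,M_m$ for the machines of $i$ and let $\ell_1,\dots,\ell_m$ be the jobs of $LJ_i$. The globally last-completing job belongs to $LJ_i$ and its deletion leaves all remaining jobs at their earliest start, so $i$ has at least one root of size $c-1$, and by hypothesis every such root is reducible. The central observation is that the completion time of a job depends not only on which machine carries it but also on the order in which mutually conflicting jobs are sequenced, so two subschedules with the same machine assignment can have different completion vectors (and even different makespans); a reducible root is exactly a certificate that such an order can be improved.

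First I would delete a makespan-achieving terminal job $\gamma$ to obtain a reducible root $i-\{\gamma\}$ and read off the relocation that strictly lowers its makespan: it moves some terminal job onto a machine whose tail was freed relative to $i$. Because $i$ itself is non-reducible, this gain cannot survive while $\gamma$ occupies its slot, which forces the relocated job to have been \emph{stacked behind} a job it does not conflict with, i.e.\ delayed where it could have run in parallel. The construction of $j$ then reschedules this job into the earlier position and re-appends $\gamma$, using the identical-machine swap of Lemma~\ref{lem:Dominating2} to line up the successors so that exactly the jobs $\ell_1,\dots,\ell_m$ remain terminal. The deliverable $j$ is assembled through a root of size $c-1$ that is no longer reducible, which is what gives the lemma content beyond the trivial choice $j=i$.

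The verification has two parts. Equality $LJ_j=LJ_i$ is checked machine by machine from the construction. The pointwise bound $ct_j^{(\alpha)}\le ct_i^{(\alpha)}$ follows from the earliest-start property together with $ct_i^{(\alpha)}=st_i^{(\alpha)}+L_\alpha$: every job in $j$ starts no later than in $i$ because the one rescheduled job is pulled strictly earlier and no other job is pushed later, so completion times only fall. If one rearrangement does not already dominate $i$ everywhere, I would iterate, or equivalently take a completion-time-minimal element of the finite family of subschedules formed by $S$ with terminal set $LJ_i$ and argue that minimality together with the hypothesis forces the required bound.

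I expect the main obstacle to be achieving pointwise domination while holding $LJ$ fixed, precisely because a reducible root advertises its improvement by \emph{moving a terminal job}, which by itself changes $LJ$. The heart of the argument is to show that the same gain can always be re-expressed as an interior reordering of conflicting jobs that leaves every machine's terminal job in place and pushes no job later; ruling out the latter is delicate, since pulling one job forward can in principle collide with a conflicting job downstream. I would control this through a case analysis on whether $\gamma$ shares its machine with earlier jobs and on whether the reducing relocation targets the machine vacated by $\gamma$, with the genuinely hard case being a relocation onto a third machine, where an exchange argument is needed so that the freed tail, rather than any terminal job, absorbs the change.
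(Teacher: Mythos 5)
Your guiding intuition --- that the hypothesis certifies a wasteful ordering of conflicting jobs which can be repaired without disturbing $LJ_i$ --- matches the spirit of the paper's proof, but the proposal never locates that waste, and the step you defer is precisely the content of the lemma. The paper's argument turns on a dichotomy you do not use: a root of size $c-1$ can fail to be non-reducible either by being reducible or by \emph{not existing} (removing a job may leave some remaining job not at its earliest start, in which case the result is not a subschedule at all). The paper first shows that if all $c-1$ roots exist, at least one of them is non-reducible; the hypothesis therefore forces some root to be missing, and the missing $S-\{\alpha\}$ root pins down a concrete pair of conflicting terminal jobs $\alpha,\beta\in LJ_i$ with $st_i^{(\alpha)}<st_i^{(\beta)}$. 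The subschedule $j$ is then built explicitly: take the $S-\{\beta\}$ root, reduce it by $\alpha$, and re-insert $\beta$; only $\alpha$ and $\beta$ move, both finish no later, every other job is untouched, and the set of last jobs is preserved.

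Your route instead deletes the globally last-completing job $\gamma$ (that root does exist, as you correctly argue) and reads the improvement off its reducibility. But the reduction of that root relocates a \emph{terminal} job, and nothing in your argument identifies which interior reordering realizes the same gain while holding every machine's terminal job fixed and pushing no job later; you acknowledge this yourself and leave it to an unexecuted case analysis. As written, the claims that the relocated job was ``stacked behind a job it does not conflict with,'' that re-appending $\gamma$ restores $LJ_i$, and that no completion time increases are assertions rather than deductions. The fallback of taking a completion-time-minimal element of the family of subschedules with terminal set $LJ_i$ does not close the gap either: minimal elements of that finite partial order need not dominate $i$ pointwise, and showing that non-minimality of $i$ follows from the hypothesis is again the paper's roots argument in disguise. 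The missing idea is the existence/non-existence analysis of the $c-1$ roots, which is what supplies the pair $(\alpha,\beta)$ and makes the construction of $j$ verifiable.
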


\begin{proof}
    Let the roots of $i$ formed by $S-\{\alpha\}$ and $S-\{\beta\}$ be $j$ and $k$ respectively. If both roots exist, and if $j$ can be reduced by $\beta$, then $k$ cannot be reduced by $\alpha$. If this were not true, $i$ would be reducible in the first place. In addition, if the $S-\{\gamma\}$ root, $l$, also exists and if $k$ can be reduced by $\gamma$, then $l$ would not be reducible by neither $\alpha$ or $\beta$. Thus, in the case that all roots exist, there is at least one root that cannot be reduced by any other job in $LJ_i$.

    If no non-reducible root of size $c-1$ exists, there must be a job $\alpha \in LJ_i$ such that any root of size $c-1$ except the one formed by $S-\{\alpha\}$ can be reduced by $\alpha$, and the $S-\{\alpha\}$ root does not exist. This is only possible in the case that there is a job $\beta\in LJ_i$ such that $\alpha$ and $\beta$ conflict and $st_i^{(\alpha)}<st_i^{(\beta)}$.

    When all these conditions are satisfied, a prime subschedule $i$ can be created by the following steps.

    \begin{enumerate}
        \item Reduce the $S-\{\beta\}$ root by $\alpha$
        \item Derive a new schedule from the result of step 1 by adding $\beta$
    \end{enumerate}
    In the resulting subschedule, starting and completion times of $\alpha$ and $\beta$ are smaller than in $i$, while other jobs remain the same.
\end{proof}

\begin{figure}
    \centering
    \includegraphics[width=0.8\linewidth]{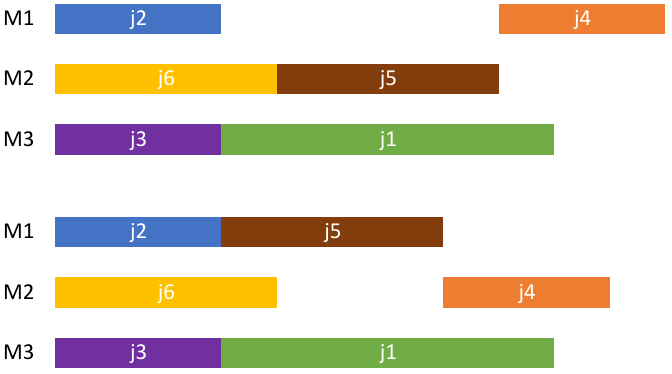}
    \caption{An example of the case explained in Lemma~\ref{lem:Root}}
    \label{fig:Prime}
\end{figure}

The subschedule shown in Figure~\ref{fig:Prime} (top) is non-reducible and does not have a non-reducible root of size $5$. If $j5$ is removed, the result is not a valid root, and removing other jobs would give reducible roots. Changing the order of $j4$ and $j5$ gives a prime subschedule (bottom) that meets the conditions in Lemma~\ref{lem:Root}.

\begin{theorem}
    Any optimal schedule is prime.
    \label{theo:Prime}
\end{theorem}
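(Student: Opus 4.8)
The plan is to prove the statement by descent on the number of jobs, turning an arbitrary optimal schedule $i$ into a chain of non-reducible roots $\emptyset = r_0, r_1, \dots, r_n = i$, which is exactly the characterization of primality noted after the definition (a prime subschedule of size $c$ is one derivable from the empty subschedule in $c$ steps, each step adding a job and keeping the result non-reducible). First I would record two immediate facts. An optimal schedule is non-reducible: if some last job could be moved to a machine on which it starts earlier, the makespan would strictly drop, contradicting optimality. And the whole theorem then reduces to a single descent step, namely that every optimal (hence non-reducible) schedule of size $c$ possesses a non-reducible root of size $c-1$ which can itself be treated as the optimal object for its own job set, so that the argument iterates down to the empty subschedule.

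For the descent step I would split on whether $i$ already has a non-reducible root of size $c-1$. If it does, I take it and continue. The delicate case is when it does not, and here I would invoke Lemma~\ref{lem:Root}: it produces a subschedule $j$ on the same job set with $LJ_j = LJ_i$ and $ct_j^{(\alpha)} \le ct_i^{(\alpha)}$ for every job $\alpha$, and by the explicit construction in its proof (reducing the $S-\{\beta\}$ root by $\alpha$ and then re-inserting $\beta$) this $j$ does admit a non-reducible root of size $c-1$. Since the last-job set is unchanged and no completion time increases, $ms_j \le ms_i$, so $j$ is optimal as well. I would then use Lemma~\ref{lem:Dominating2} (and, at the level of whole schedules, Theorem~\ref{theo:partial}) to argue that every derivation from $j$ is matched by a derivation from $i$ of no larger makespan, so passing from $i$ to $j$ loses nothing in the makespan objective.

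To tie this back to $i$ itself I would run a well-foundedness argument: among all optimal schedules I fix one whose vector of completion times is minimal (possible because only finitely many distinct schedules exist). For such a minimal optimum the bad case cannot occur, since Lemma~\ref{lem:Root} would otherwise supply a $j$ with strictly smaller completion times for $\alpha$ and $\beta$ but the same makespan, contradicting minimality; hence the minimal optimum has a non-reducible root of size $c-1$, and iterating yields the full chain and thus primality. The last move is to carry primality from this minimal optimum to an arbitrary optimal schedule using the equivalence relation on subschedules (machine swapping leaves $LJ$ and all completion times fixed, so it preserves both non-reducibility and the chain of roots). I expect this final transfer to be the main obstacle, and it is the honest crux of the proof: Lemma~\ref{lem:Root} only guarantees a \emph{dominating} schedule, not $i$, so the work is in the case analysis showing that an optimal schedule cannot lie strictly above another optimal schedule in the completion-time order, which is precisely what forces a non-reducible root of each smaller size to exist for the given schedule directly rather than merely for some dominating surrogate.
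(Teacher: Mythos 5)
Your opening moves match the paper's: optimality forces non-reducibility (a reducible schedule admits a strictly smaller makespan), and the missing-root case is attacked with Lemma~\ref{lem:Root} to manufacture a completion-time-dominating surrogate and Lemma~\ref{lem:Dominating2} to lift it to a full schedule of no larger makespan. The genuine gap is in how you close the argument. Your well-foundedness device proves only that a completion-time-\emph{minimal} optimal schedule is prime, and neither of your proposed bridges back to an arbitrary optimal schedule works. Equivalence cannot carry primality over: it only permutes identical machines, and two optimal schedules can realize the optimum with genuinely different job-to-machine assignments, in which case they are not equivalent and the minimal optimum tells you nothing about the other one. Worse, the fact you identify as the crux --- that an optimal schedule cannot lie strictly above another optimal schedule in the completion-time order --- is simply false. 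Take three machines and non-conflicting jobs $A$, $B$, $C$ of lengths $10$, $1$, $1$: the schedule with $B$ then $C$ on machine 1 and $A$ on machine 3 and the schedule with $B$ on machine 1, $C$ on machine 2, $A$ on machine 3 are both optimal (makespan $10$, pinned by $A$), yet the first has $ct^{(C)}=2$ and the second $ct^{(C)}=1$ with all other completion times equal. So the completion-time order does not collapse on the set of optimal schedules, and your descent, as proposed, establishes only the weaker claim that \emph{some} optimal schedule is prime --- which happens to be all that Theorem~\ref{theo:Derivation} actually consumes, but is not the universally quantified statement of Theorem~\ref{theo:Prime}.

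For comparison, the paper does not take your detour: it applies Lemma~\ref{lem:Root} and Lemma~\ref{lem:Dominating2} to the given optimal schedule $i$ itself, obtains a schedule $j$ with $ms_j\leq ms_i$, and concludes that $i$ is not optimal. You have correctly located the delicate point of that step --- $ms_j\leq ms_i$ contradicts optimality of $i$ only if the inequality is strict, and strictness is exactly what can fail when the makespan is attained on a machine untouched by the jobs $\alpha$ and $\beta$ of Lemma~\ref{lem:Root} --- but your repair relocates that difficulty rather than resolving it. To prove the theorem as stated you need an argument anchored to the specific schedule $i$, for instance showing that in the bad case of Lemma~\ref{lem:Root} the delayed job $\beta$ (or a job blocked behind it) must realize the makespan so that the exchange is strictly improving; nothing in your proposal supplies this, so the proof is incomplete at precisely the point you flagged.
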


\begin{proof}
    The first part is to prove that any optimal schedule is non-reducible. If a schedule $i$ is reducible, it means that there is another schedule $j$ with the property $ms_i>ms_j$. Then $i$ is not optimal, since it does not have the smallest possible makespan.

    The second part is to prove that an optimal schedule $i$ has at least one non-reducible root of size $c$ for every $c<n$. Assume that $i$ does not have a non-reducible root of size $d$. It is proven in Lemma~\ref{lem:Root} that there exists a subschedule $l$ of size $d+1$ with property $ct_l^{(\alpha)}\leq ct_k^{(\alpha)}$ for each $d+1$ size root of $i$ and $\alpha\in LJ_k$. In that case, from Lemma~\ref{lem:Dominating2}, it is known that a schedule $j$ can be derived with the property $ms_j\leq ms_i$, and $i$ is not optimal.
\end{proof}

\begin{definition}[The Derivation Rule]
    The derivation rule is a general rule to follow when creating subschedules from permutations of jobs. This rule is stated as: \textit{Insert the next job into the machine with the smallest processing time. If more than one machine shares the smallest processing time, insert the job to the one with the smallest index (for instance, to machine 1 instead of machine 2).}
    \label{def:Derivation}
\end{definition}

The rule of selecting the machine with the lower index is just a convention. Another rule that favors the highest index for machines (or any other convention) would also be suitable, as long as one of the machines with the lowest processing time is selected.

\begin{definition}[Representable Subschedule] A subschedule that can be created from a permutation by following the derivation rule in Definition~\ref{def:Derivation} is called representable.
    
\end{definition}

\begin{lemma}
    Any representable subschedule is prime.
    \label{lem:PrimeRepresent}
\end{lemma}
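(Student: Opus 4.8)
The plan is to prove the lemma by reducing the claim ``$i$ is prime'' to two separate facts about a representable subschedule $i$ of size $c$ built from a permutation $(\pi_1,\dots,\pi_c)$ via the derivation rule of Definition~\ref{def:Derivation}: first, that $i$ itself is non-reducible, and second, that $i$ admits a non-reducible root of every size $d<c$. For the second fact I would exploit that the rule is incremental: the subschedule $i_d$ obtained by applying it to the prefix $(\pi_1,\dots,\pi_d)$ is itself representable, and since a job inserted later never alters the start or completion time of a job placed earlier, $i_d$ is exactly $i$ restricted to its first $d$ jobs and is a valid subschedule. Hence $i$ is derived from $i_d$ by inserting $\pi_{d+1},\dots,\pi_c$, so each $i_d$ is a root of $i$ of size $d$. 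Therefore the whole lemma collapses to a single claim: \emph{every representable subschedule is non-reducible}; applying it to $i$ and to each prefix $i_d$ then yields primality directly.

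To establish non-reducibility I would argue by contradiction: suppose $i$ is reducible by some $\alpha\in LJ_i$ on machine $M_p$. I would first observe that reducibility forces $\alpha$ to be a makespan job, $ct_i^{(\alpha)}=ms_i$, with $M_p$ the unique machine attaining the makespan, since otherwise removing $\alpha$ from $M_p$ and appending it to another machine $M_r$ can only lower the non-bottleneck $M_p$ and raise $M_r$, leaving $ms_i$ unchanged. Moreover, because $\alpha$ is a makespan job, no job placed after $\alpha$ can conflict with it (such a job would complete strictly after $ms_i$), so moving $\alpha$ affects only $M_p$ and $M_r$. Writing $st_i^{(\alpha)}=\max(P_p',D_\alpha)$, where $P_p'$ is the completion time of the job preceding $\alpha$ on $M_p$ and $D_\alpha$ is the largest completion time among jobs conflicting with $\alpha$, reducibility requires $\alpha$ to start strictly earlier on some $M_r$, i.e. $\max(P_r,D_\alpha)<\max(P_p',D_\alpha)$, where $P_r$ is the completion time of the last job on $M_r$.

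The crux is to rule this out, and here the derivation rule does the work. Let $\alpha=\pi_t$; at step $t$ the rule placed $\alpha$ on the machine of least load in $i_{t-1}$, so $M_p$ had the smallest load $P_p'$ at that moment and every other machine, in particular $M_r$, had load at least $P_p'$ in $i_{t-1}$. Since machine loads are nondecreasing as jobs are inserted, $P_r\ge(\text{load of }M_r\text{ in }i_{t-1})\ge P_p'$. Consequently $\max(P_r,D_\alpha)\ge\max(P_p',D_\alpha)=st_i^{(\alpha)}$, so $\alpha$ cannot start earlier on $M_r$ and $ms_i$ cannot drop, contradicting reducibility. This proves the claim, and with the root chain above, the lemma.

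The main obstacle I anticipate is the bookkeeping in the non-reducibility step: pinning down that reducibility truly requires $\alpha$ to be the \emph{unique} bottleneck and that relocating it cannot shift any conflicting job's completion time, which is precisely what lets me restrict attention to $M_p$ and $M_r$ and collapse the ``load-dominated'' and ``conflict-dominated'' cases for $st_i^{(\alpha)}$ into a single inequality. The estimate $P_r\ge P_p'$ forced by the min-load placement is the heart of the argument; the tie-breaking convention in Definition~\ref{def:Derivation} only gives equality rather than strict inequality, but this still suffices because reducibility demands a \emph{strict} decrease of the makespan.
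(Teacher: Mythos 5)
Your proof is correct and reaches the statement by a genuinely different route than the paper. The paper proves the lemma by induction on the size of the representable subschedule: the size-one case is trivially non-reducible, and the inductive step argues that appending a job via the derivation rule to a prime representable subschedule preserves primality (the new job sits on the minimum-load machine, and the old last jobs' reducibility status is unaffected). You instead collapse primality to a single claim --- every representable subschedule is non-reducible --- by observing that the prefix subschedules $i_d$ are themselves representable roots of $i$ of every size $d<c$, and then you prove non-reducibility of an arbitrary representable subschedule directly, by rewinding to the step $t$ at which the candidate job $\alpha=\pi_t$ was inserted and using monotonicity of machine loads to get $P_r\ge P_p'$ for every other machine, hence $\max(P_r,D_\alpha)\ge\max(P_p',D_\alpha)=st_i^{(\alpha)}$. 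Both arguments hinge on the same fact (min-load placement), but your direct version actually supplies the justification that the paper's inductive step only asserts --- namely, \emph{why} later insertions cannot make the schedule reducible by an earlier-placed last job: the machine that job was put on was the least loaded at insertion time, and every other machine's load has only grown since.

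One spot is slightly loose: your preliminary reduction to ``$\alpha$ is the unique bottleneck'' dismisses the non-bottleneck case on the grounds that relocating $\alpha$ only shifts load between $M_p$ and $M_r$, but in principle moving a non-makespan $\alpha$ \emph{earlier} could unblock a conflicting job on another machine and thereby lower the makespan. This does not break the proof, because your central inequality already shows $\alpha$ can never start (hence finish) earlier on any other machine, so no conflicting job is ever unblocked and every other job's completion time is preserved; it is cleaner to run that inequality first and let both cases fall out of it, rather than to argue the case split beforehand.
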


\begin{proof}
    This lemma can be proven by induction.

    \begin{itemize}[leftmargin=0.4cm]
        \item Representable subschedules of size $1$ are created by adding one job to the first machine and starting it at time $0$. If the job is moved to another machine, it would still start at time $0$, so the subschedule is non-reducible and prime.
        \item Let $i$ be a prime and representable subschedule of size $c$ and $j$ be the subschedule derived by adding job $\alpha$ to $i$ following the derivation rule. In this case, $\alpha$ is already placed on the machine with the lowest processing time, so $j$ cannot be reduced by $\alpha$. Also, since $i$ is non-reducible, it cannot be  reduced by any job in $LJ_i$. Adding a new job would increase the processing time of one of the machines and does not change the reducibility of the new schedule by any of these jobs. Thus, $j$ is also prime.
    \end{itemize}
    
     %If a subschedule $i$ of size $c$ is non-reducible, adding a job into the machine with the lowest processing time would result in a non-reducible subschedule $j$. The reason is that the processing time on that machine would be higher and non of the completion time of the jobs in $LJ_i$ can be decreased by moving it to that machine. Since the processing times on the other machines are already higher, this means that the schedule cannot be reduced by moving a job to any other machine. Thus, the subschedule derived in each step is non-reducible and any representable subschedule is prime.
    \label{lem:DerivationPrime}
\end{proof}

\begin{theorem}
    The set of all representable schedules includes at least one optimal schedule.
    
    %The set of schedules that can be created from a permutation following the derivation rule in Definition~\ref{def:Derivation} includes at least one optimal schedule.
    \label{theo:Derivation}
\end{theorem}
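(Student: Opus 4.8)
The plan is to take an arbitrary optimal schedule $O$ and manufacture a permutation whose derivation under Definition~\ref{def:Derivation} produces a representable schedule of makespan at most $ms_O$; such a schedule is then itself optimal, which proves the theorem. By Theorem~\ref{theo:Prime}, $O$ is prime, so it can be built from the empty subschedule in single-job steps $\emptyset = O_0, O_1, \dots, O_n = O$, where each $O_d$ is a genuine subschedule of size $d$. Reading off the inserted jobs yields an order $\pi = (\alpha_1, \dots, \alpha_n)$, and feeding $\pi$ to the derivation rule produces a representable schedule $R$ with its own chain $R_0, R_1, \dots, R_n$. It therefore suffices to show $ms_R \le ms_O$.

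I would establish this by induction on $d$, comparing $R_d$ with $O_d$ (both formed by $S_d = \{\alpha_1, \dots, \alpha_d\}$) through a two-part invariant: (i) $ct_{R_d}^{(\alpha)} \le ct_{O_d}^{(\alpha)}$ for every $\alpha \in S_d$, and (ii) the machine loads of $R_d$, sorted in descending order, are dominated coordinatewise by those of $O_d$. The base case $d = 0$ is immediate. For the inductive step, appending $\alpha_{d+1}$ leaves the completion times of earlier jobs untouched, so the only new quantity is $ct^{(\alpha_{d+1})} = \mathit{start} + L_{\alpha_{d+1}}$. Because conflicting transactions block one another completely, this start time is the maximum of the load of the chosen machine and the conflict bound $\max\{ct^{(\beta)} : \beta \in S_d,\ C_{\alpha_{d+1},\beta} = 1\}$. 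Invariant (i) makes the conflict bound of $R_d$ no larger than that of $O_d$, and invariant (ii) makes the least load of $R_d$ (the machine the derivation rule always selects) no larger than the load of whichever machine $O$ used. Hence $\mathit{start}$, and therefore $ct_{R_{d+1}}^{(\alpha_{d+1})}$, is no larger in $R$ than in $O$, re-establishing (i); taking the maximum over all jobs at $d = n$ then yields $ms_R \le ms_O$. This part runs in the same spirit as Lemma~\ref{lem:Dominating2}, but it tracks the full load profile rather than assuming equal last-job sets.

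The hard part will be re-establishing invariant (ii) after an insertion. The derivation rule raises the smallest coordinate of the load vector of $R_d$ to the new completion time $v_R$, whereas $O$ raises some coordinate of $O_d$ to $v_O \ge v_R$; after both vectors are re-sorted I must argue that coordinatewise domination survives. I would first reduce to the worst case in which $O$ also inserts on its least-loaded machine, since any other choice only enlarges $O$'s load vector and hence only helps. What remains is a self-contained majorization statement: if a descending vector $r$ is dominated coordinatewise by a descending vector $o$, then replacing $\min r$ by $v_R$ and $\min o$ by $v_O \ge v_R$ and re-sorting preserves the coordinatewise inequality. Proving this monotonicity cleanly, by tracking how the raised entry migrates through each sorted order, is the main technical obstacle; once it is in hand, the start-time formula and invariant (i) follow by routine bookkeeping and the theorem is complete.
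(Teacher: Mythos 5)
There is a genuine gap, and it sits exactly where you flagged ``the main technical obstacle'': invariant (ii) is false, and with it the conclusion $ms_R \le ms_O$ for your choice of permutation. The problem is that a prime chain $O_0,\dots,O_n$ may insert a job on a machine that is \emph{not} the least loaded, as long as this does not increase the makespan of the prefix (non-reducibility only forbids moves that would \emph{decrease} the makespan). Concretely, take $m=3$ and conflict-free jobs of lengths $10,5,1,3,9$ inserted in that order. The chain $j_1\!\to\!M_1$, $j_2\!\to\!M_2$, $j_3\!\to\!M_3$, $j_4\!\to\!M_2$, $j_5\!\to\!M_3$ passes through loads $(10,0,0),(10,5,0),(10,5,1),(10,8,1),(10,8,10)$; every prefix is non-reducible (the makespan is pinned at $10$ by $j_1$ throughout), and the final schedule is optimal since $10$ equals both the longest job and the average load. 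Yet feeding the same order to the derivation rule of Definition~\ref{def:Derivation} gives $(10,0,0),(10,5,0),(10,5,1),(10,5,4),(10,5,13)$: makespan $13>10$. This also refutes your reduction step --- placing $j_4$ on the non-least-loaded machine yields sorted loads $(10,8,1)$ versus greedy's $(10,5,4)$, and neither vector dominates the other, so ``any other choice only enlarges $O$'s load vector'' is not true. The self-contained majorization statement you hope to prove therefore cannot be proved.

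The missing idea is that the permutation cannot be read off an \emph{arbitrary} prime chain of $O$; it must be extracted in a way that is compatible with the derivation rule. The paper does this by running the derivation rule \emph{in reverse}: starting from $O$ (or, if necessary, an equivalent of $O$ with machines permuted so that its size-$m$ root is representable), it repeatedly deletes the job whose removal leaves the non-reducible root with the highest minimum time, breaking ties toward the higher machine index. Because this deletion order is the exact inverse of the insertion order the derivation rule would choose, replaying the resulting permutation forward reconstructs $O$ itself (up to equivalence), so no comparison of makespans between two different schedules is needed at all --- Lemma~\ref{lem:Dominating2} and the load-profile bookkeeping you envisage become unnecessary. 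Your first move (invoking Theorem~\ref{theo:Prime} to restrict to prime schedules) matches the paper; the repair is to replace ``any prime chain'' by this specific reverse-derivation ordering and to argue equality rather than domination.
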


\begin{proof}
    We know that any representable subschedule is prime from Lemma~\ref{lem:PrimeRepresent}. We can use this knowledge to form a permutation from any prime subschedule or its equivalent.

    For a problem with $m$ machines, a prime subschedule of size $m$ has one job assigned to each machine. If such a subschedule $i$ of size $m$ is representable, then due to the convention of favoring machines with lower indices, it should have the property $st_i^{(\alpha)}\leq st_i^{(\beta)}$ for each pair of jobs $\alpha$ and $\beta$, where $\alpha$ is assigned to a machine with a lower index than $\beta$.
    %This property comes from the convention of selecting machines with lower index.

    Let $j$ be a prime subschedule with a representable root of size $m$. A permutation to represent $j$ can be found by deleting the jobs step by step, by the following rules: \textit{Delete the job that would give the non-reducible root with the highest minimum time. If there exists two such jobs, delete the one that is assigned to the machine with a higher index.} This rule is the reverse of the derivation rule and the inverse of the permutation of the deleted jobs would give the permutation representation. If there is no representable size $m$ root of $j$, an equivalent of $j$ has such a root, so it is representable. Since any optimal solution is prime, any optimal schedule or one of its equivalents is representable.

    %This can be applicable because a prime schedule has a non-reducible root of each size by definition and every subschedule that can be represented by a permutation is prime as shown in Lemma~\ref{lem:PrimeRepresent}. 
\end{proof}

\subsection{TxnSP Software Library}
\label{ssec:Library}

As a product of our theoretical work on TxnSP, we implemented a software to create, analyze, and solve TxnSP instances. Rather than integrating this software into OptiMA, we offer it as a separate library that could be used for future research on the problem \footnote{\url{https://github.com/umutcalikyilmaz/TxnSP}}. The scheduler module in OptiMA refers to this library for scheduling.

TxnSP software library has the function of creating TxnSP instances by manually entering the length and conflict matrices. It is also possible to create random problem instances by defining a distribution for the transaction lengths and entering the $cp$ value. To solve the created problem instances, four algorithms are implemented. The software package includes modules to test the accuracy and runtime of each of these algorithms and to analyze the properties of the search space for a given problem configuration. In the remainder of the section, we introduce the optimization algorithms implemented in the TxnSP software library.

\begin{figure*}[t]    \centering
    \begin{subfigure}{0.16\textwidth}
        \includegraphics[width=\textwidth]{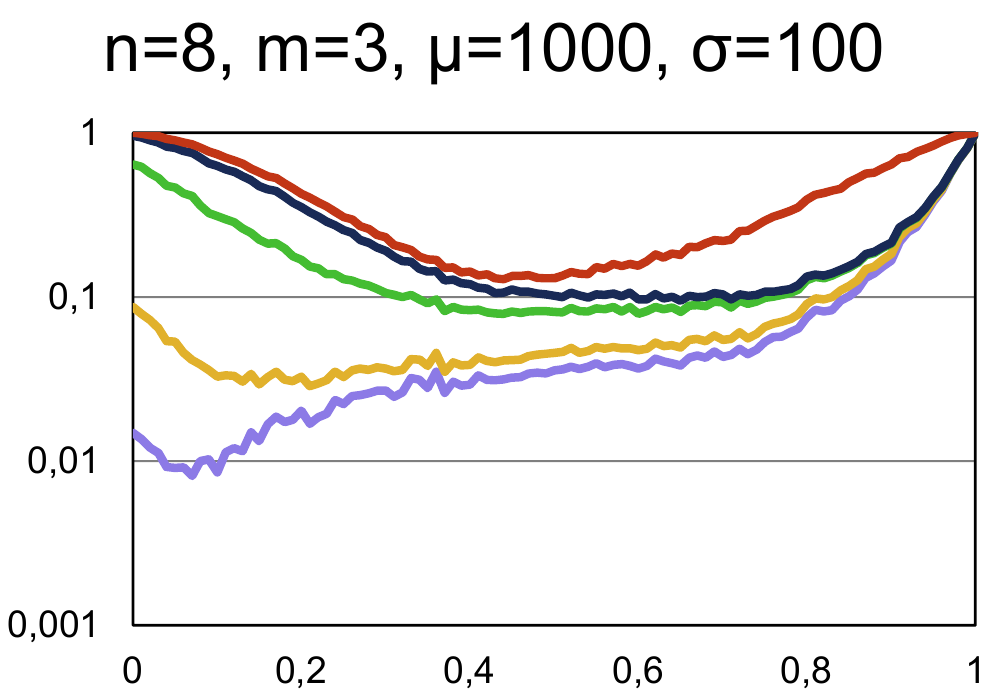}
    \end{subfigure}
    \hfill
    \begin{subfigure}{0.16\textwidth}
        \includegraphics[width=\textwidth]{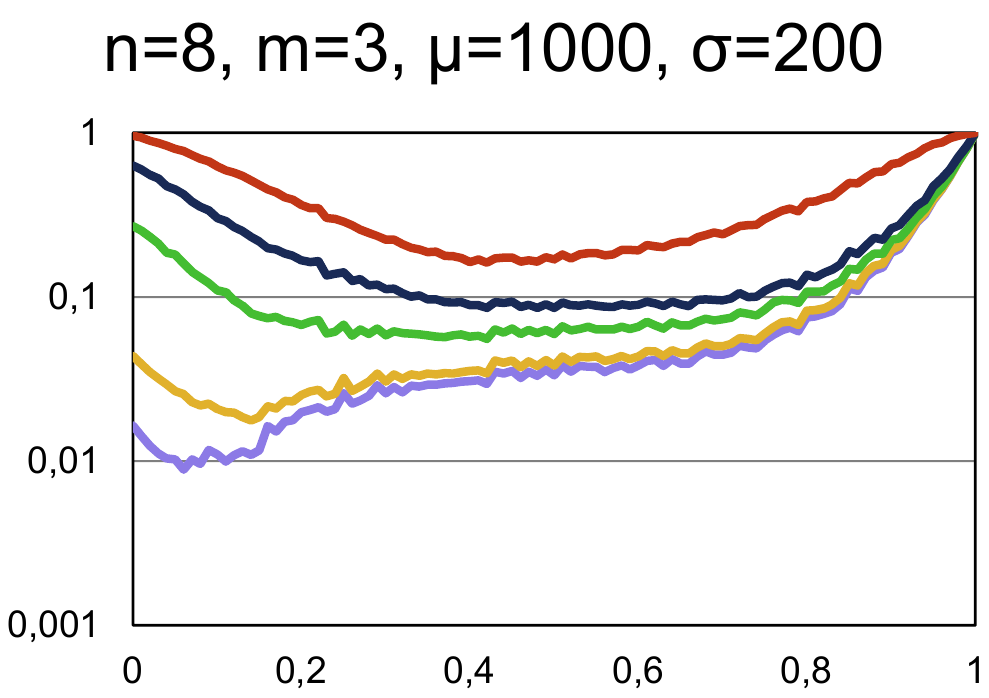}
    \end{subfigure}
    \hfill
    \begin{subfigure}{0.16\textwidth}
        \includegraphics[width=\textwidth]{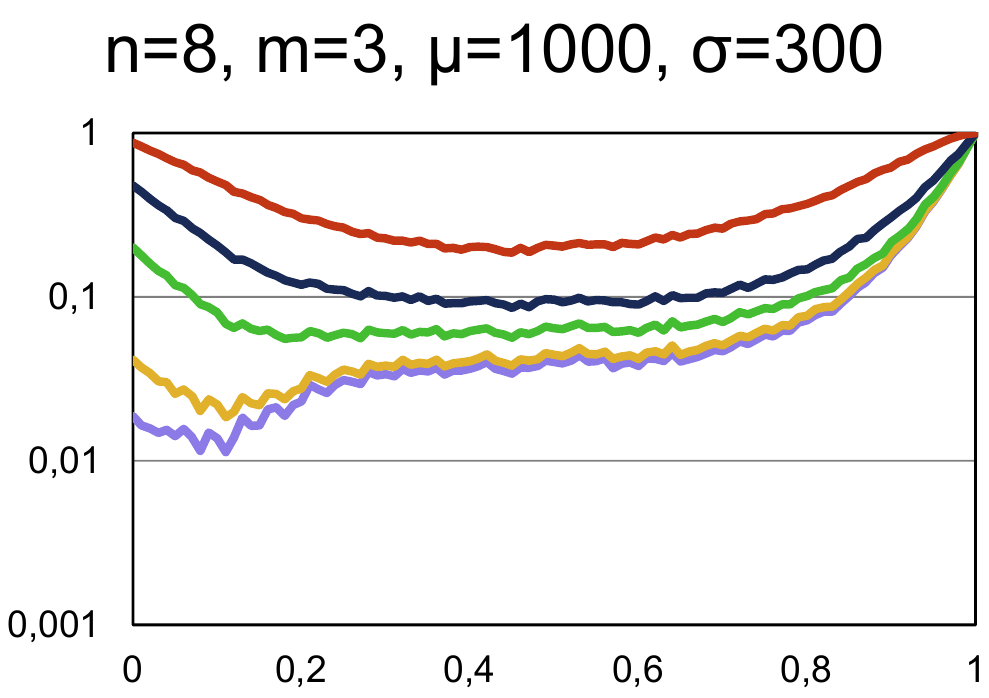}
    \end{subfigure}
    \hfill
    \begin{subfigure}{0.16\textwidth}
        \includegraphics[width=\textwidth]{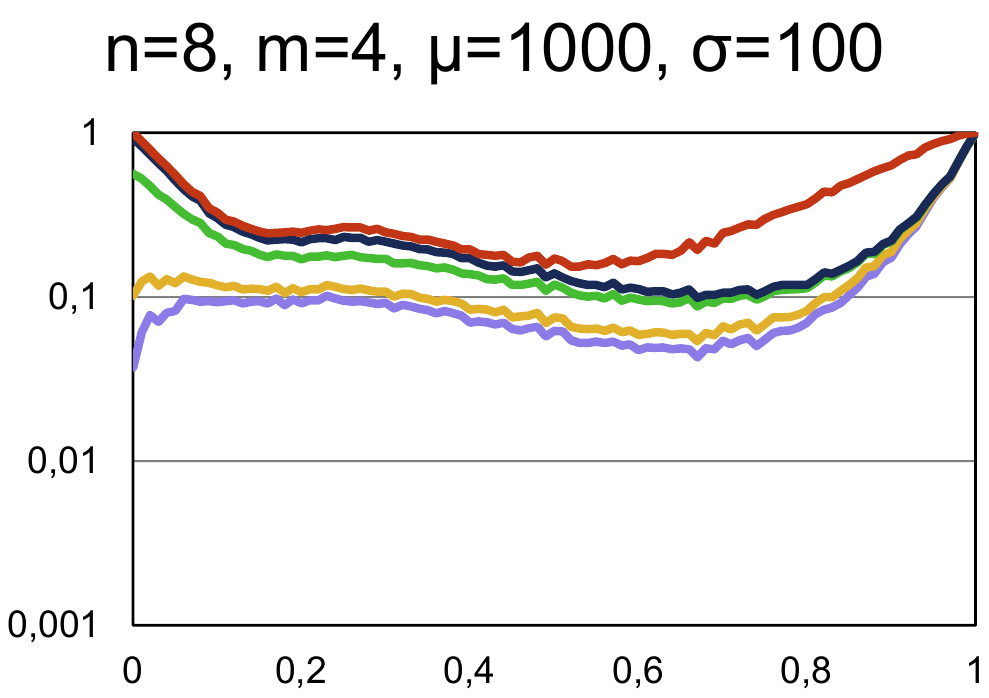}
    \end{subfigure}
    \hfill
    \begin{subfigure}{0.16\textwidth}
        \includegraphics[width=\textwidth]{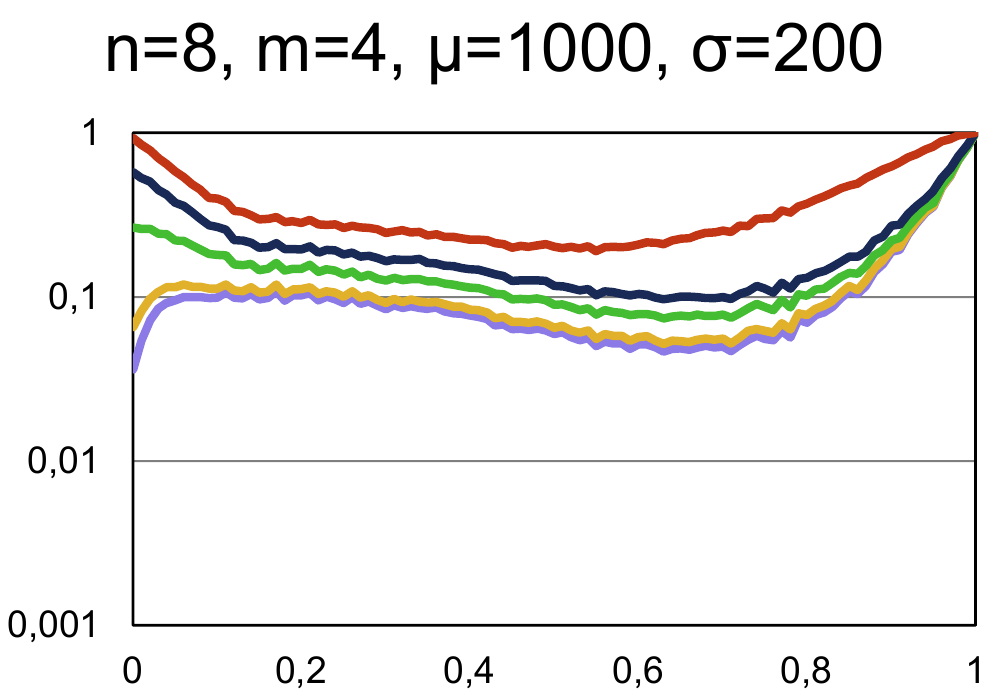}
    \end{subfigure}
    \hfill
    \begin{subfigure}{0.16\textwidth}
        \includegraphics[width=\textwidth]{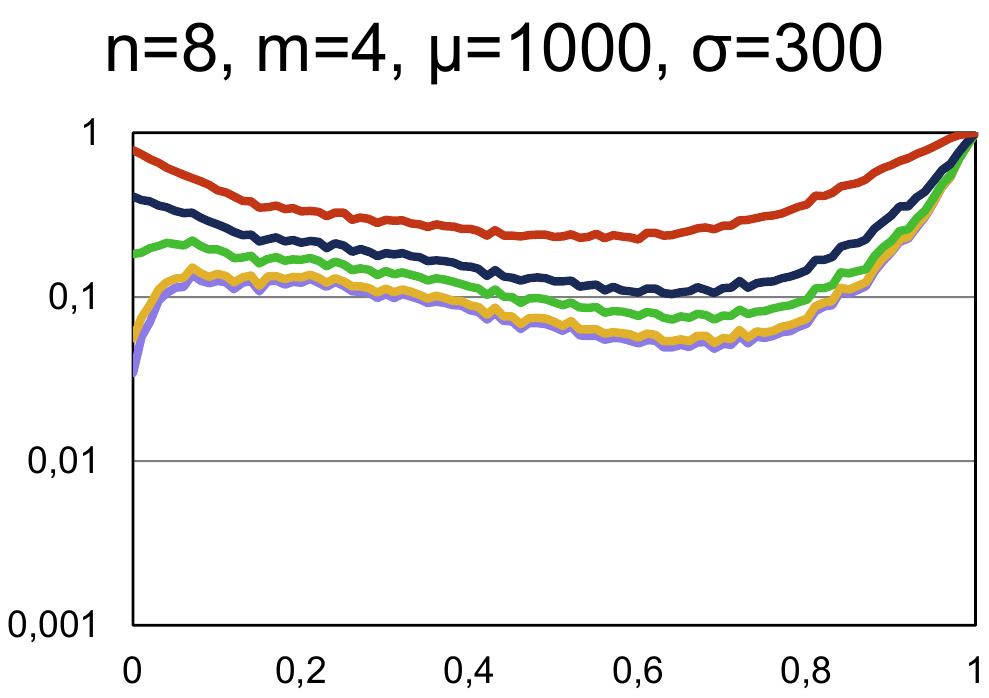}
    \end{subfigure}

    \vskip\baselineskip

    \begin{subfigure}{0.16\textwidth}
        \includegraphics[width=\textwidth]{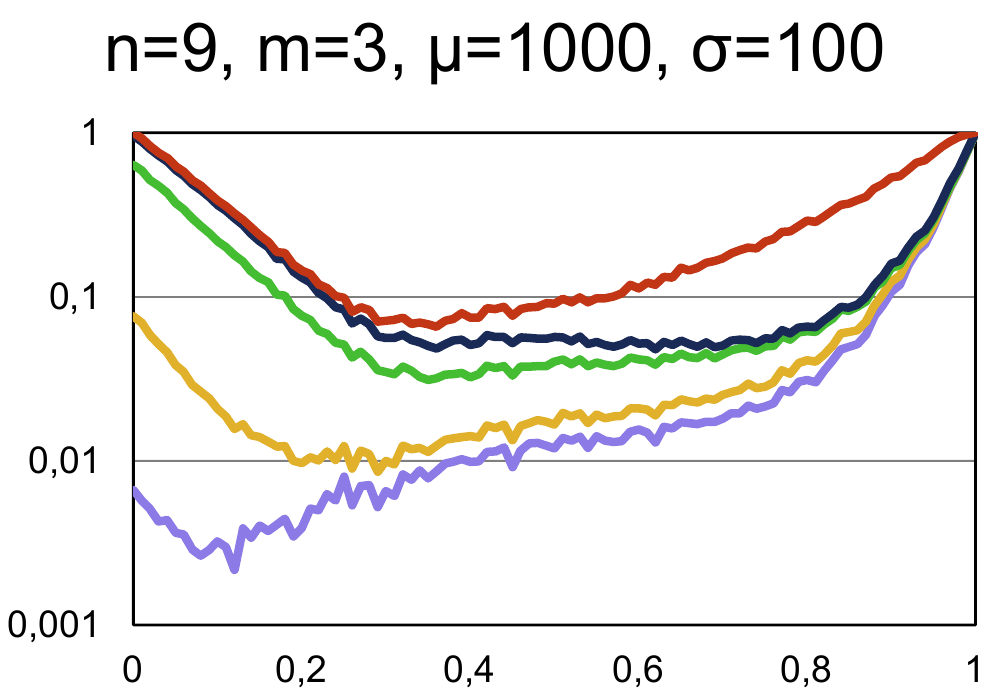}
    \end{subfigure}
    \hfill
    \begin{subfigure}{0.16\textwidth}
        \includegraphics[width=\textwidth]{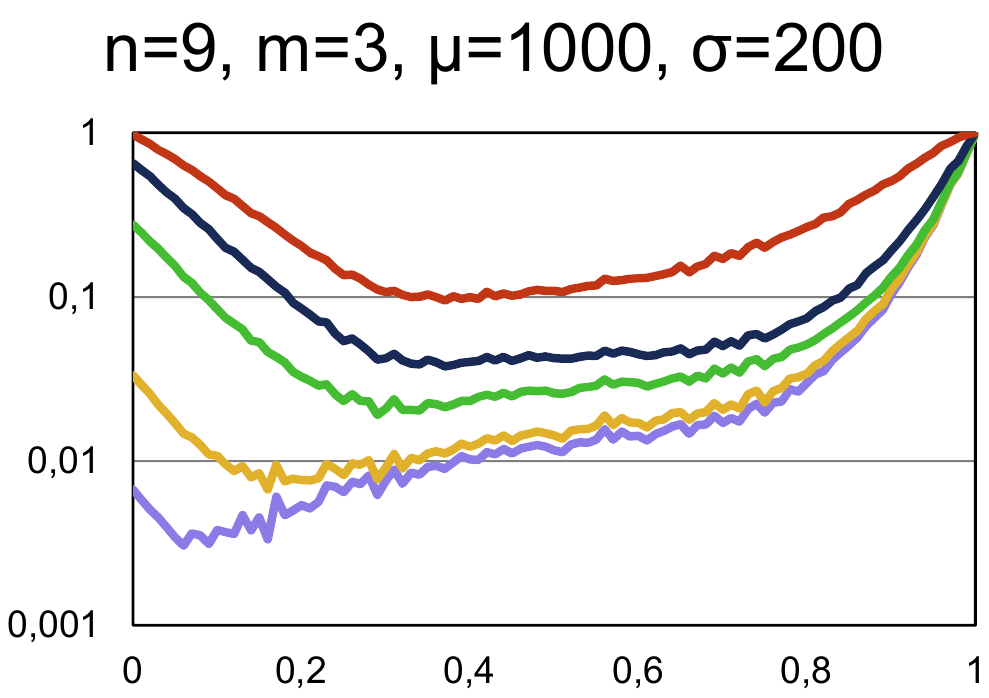}
    \end{subfigure}
    \hfill
    \begin{subfigure}{0.16\textwidth}
        \includegraphics[width=\textwidth]{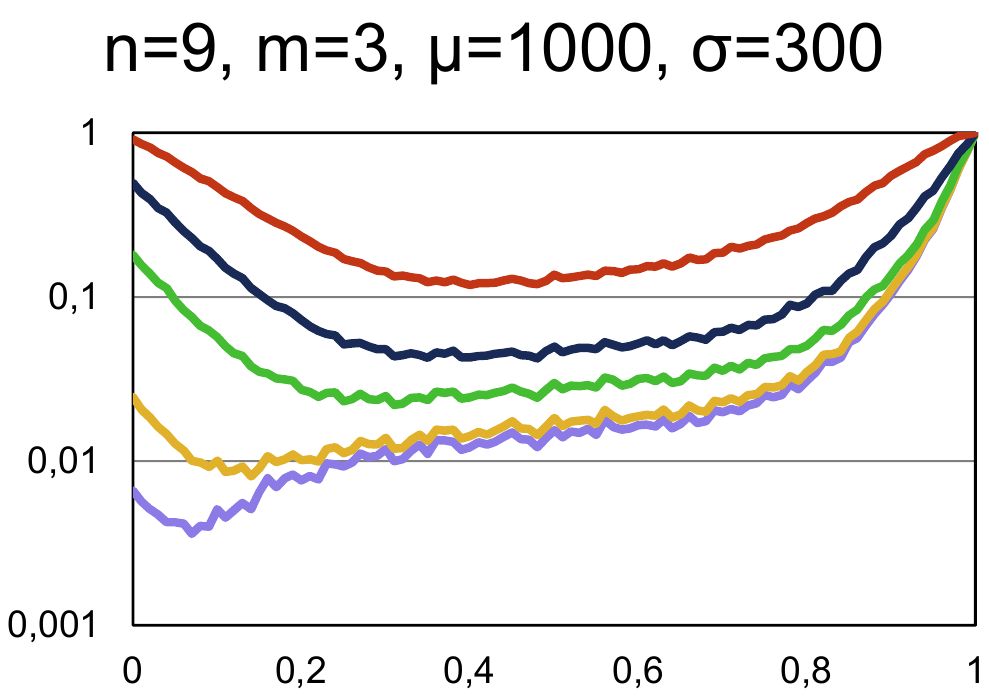}
    \end{subfigure}
    \hfill
    \begin{subfigure}{0.16\textwidth}
        \includegraphics[width=\textwidth]{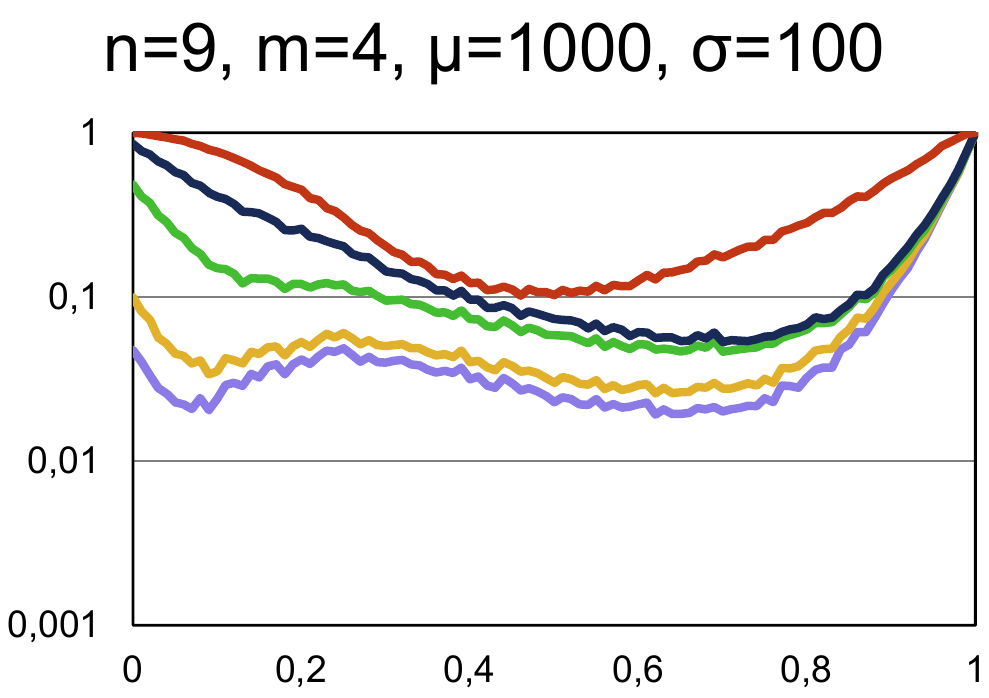}
    \end{subfigure}
    \hfill
    \begin{subfigure}{0.16\textwidth}
        \includegraphics[width=\textwidth]{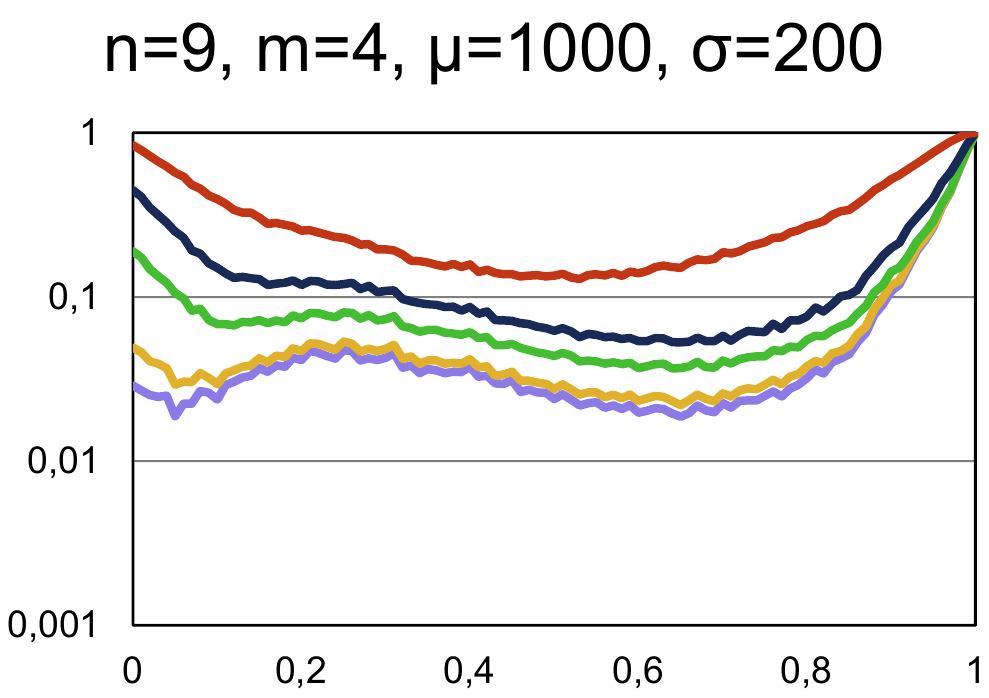}
    \end{subfigure}
    \hfill
    \begin{subfigure}{0.16\textwidth}
        \includegraphics[width=\textwidth]{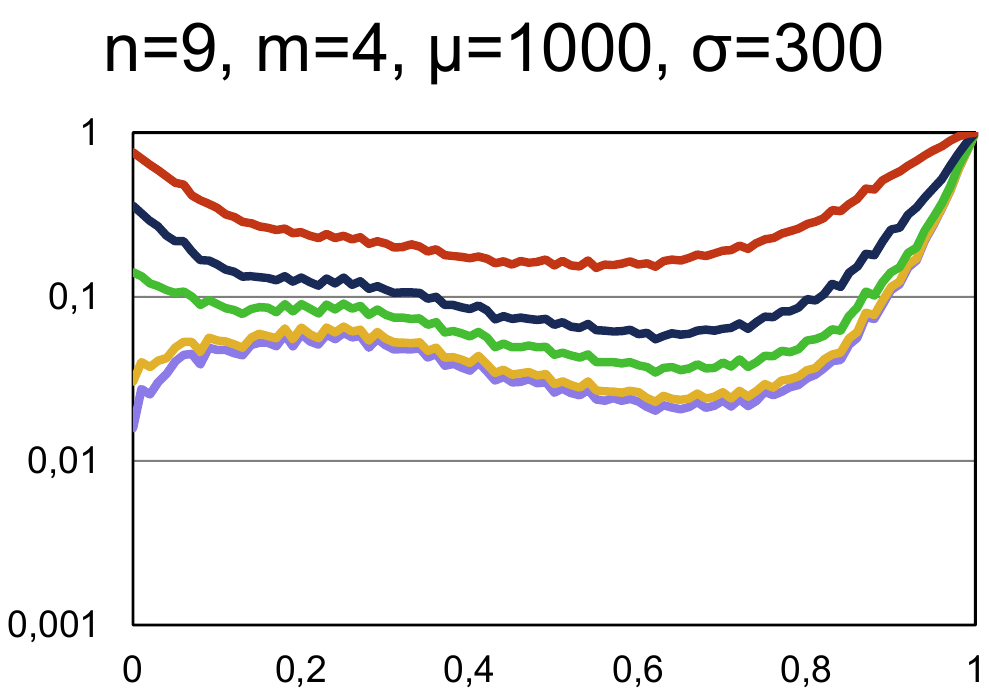}
    \end{subfigure}

    \vskip\baselineskip

    \begin{subfigure}{0.16\textwidth}
        \includegraphics[width=\textwidth]{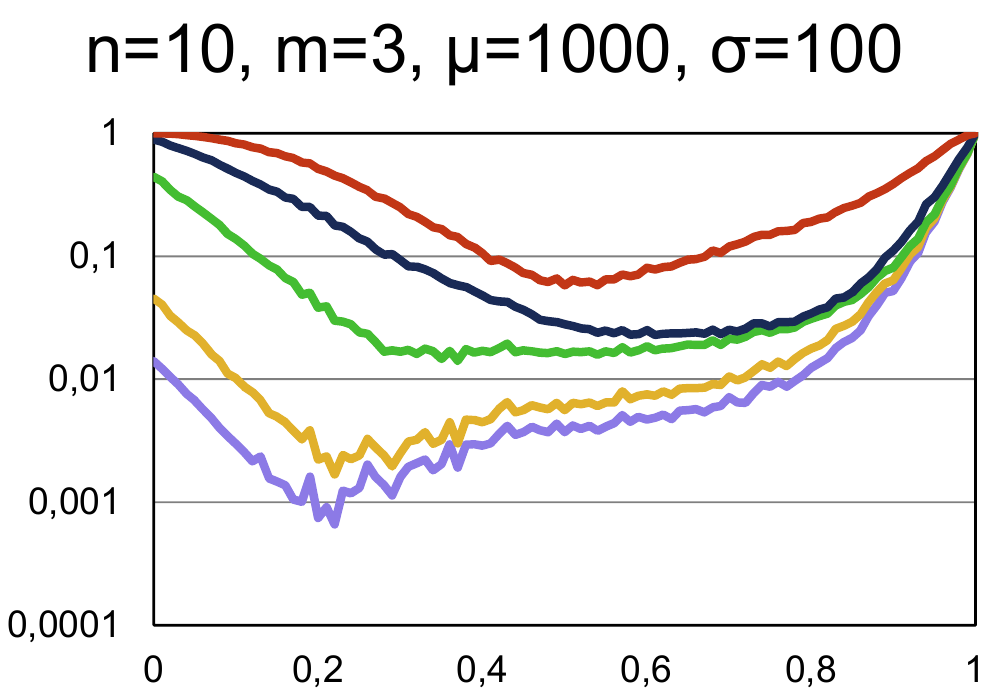}
    \end{subfigure}
    \hfill
    \begin{subfigure}{0.16\textwidth}
        \includegraphics[width=\textwidth]{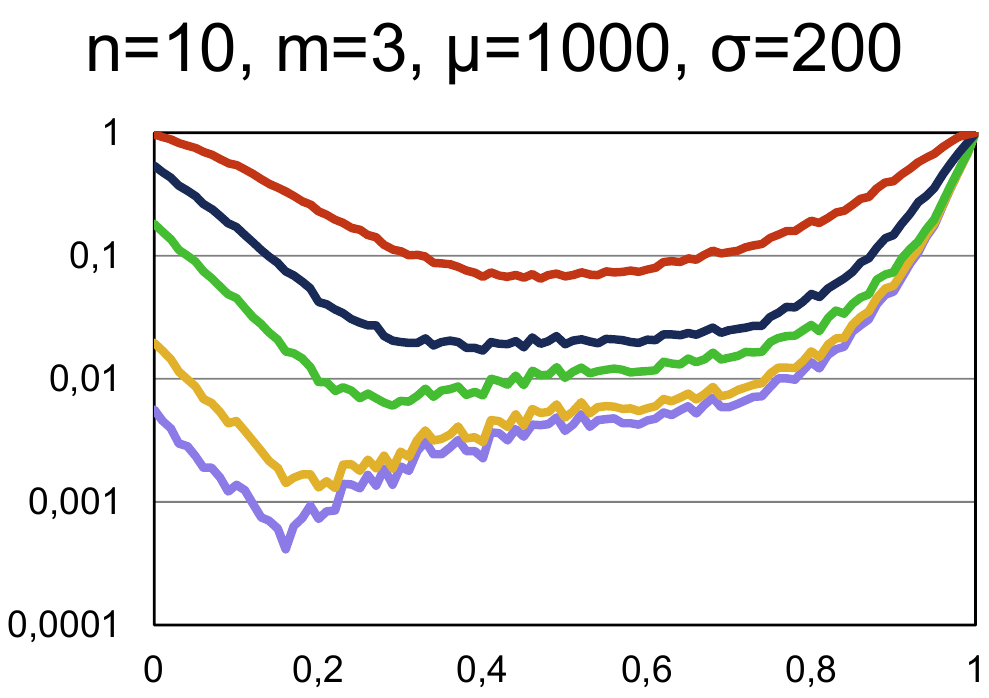}
    \end{subfigure}
    \hfill
    \begin{subfigure}{0.16\textwidth}
        \includegraphics[width=\textwidth]{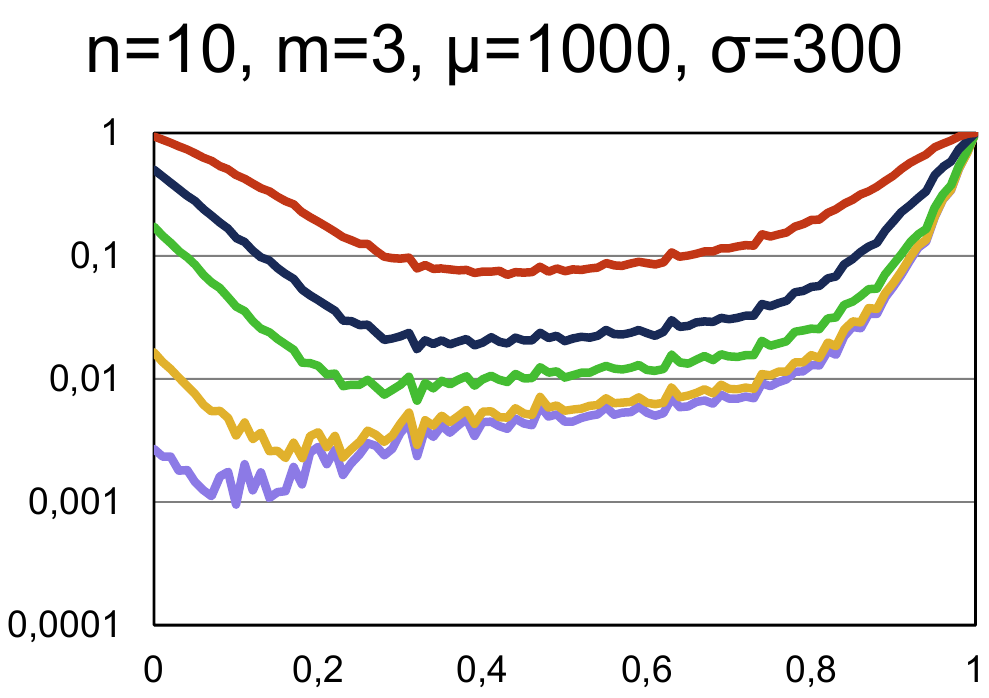}
    \end{subfigure}
    \hfill
    \begin{subfigure}{0.16\textwidth}
        \includegraphics[width=\textwidth]{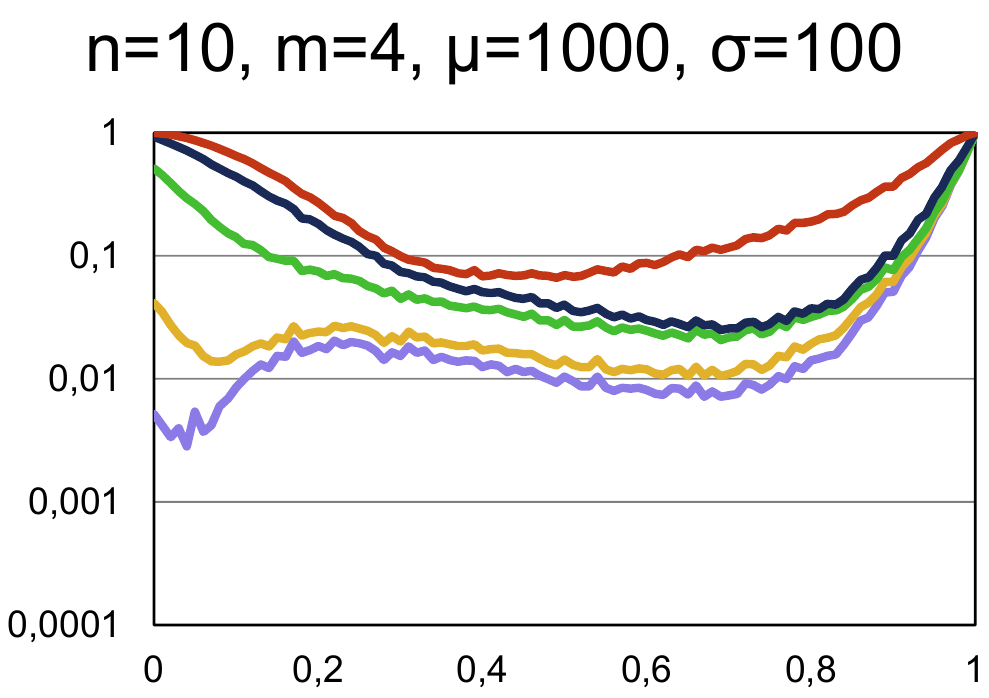}
    \end{subfigure}
    \hfill
    \begin{subfigure}{0.16\textwidth}
        \includegraphics[width=\textwidth]{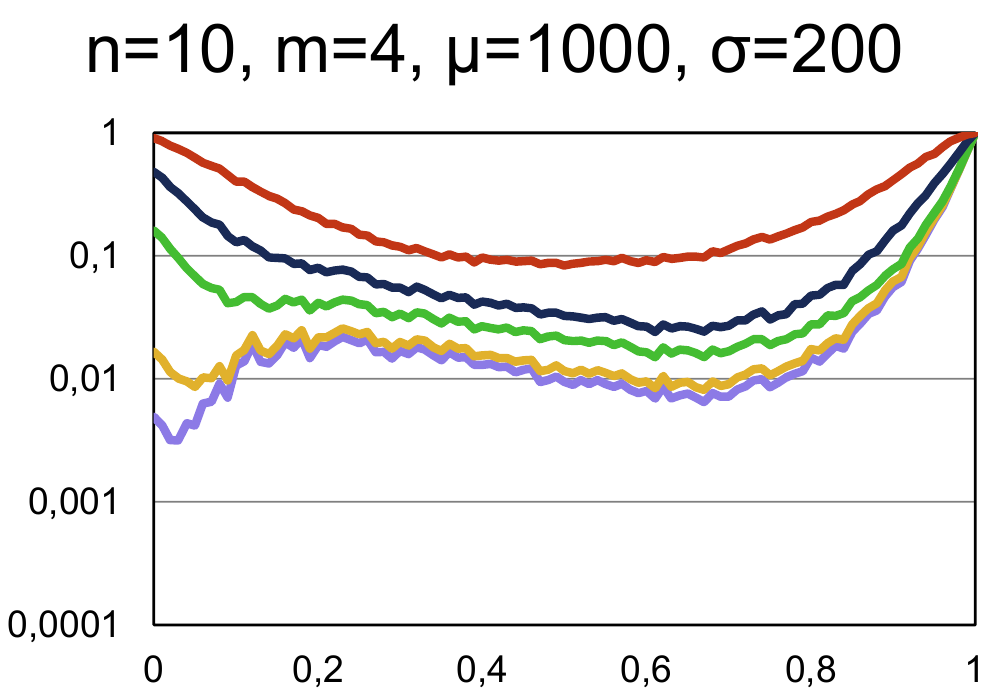}
    \end{subfigure}
    \hfill
    \begin{subfigure}{0.16\textwidth}
        \includegraphics[width=\textwidth]{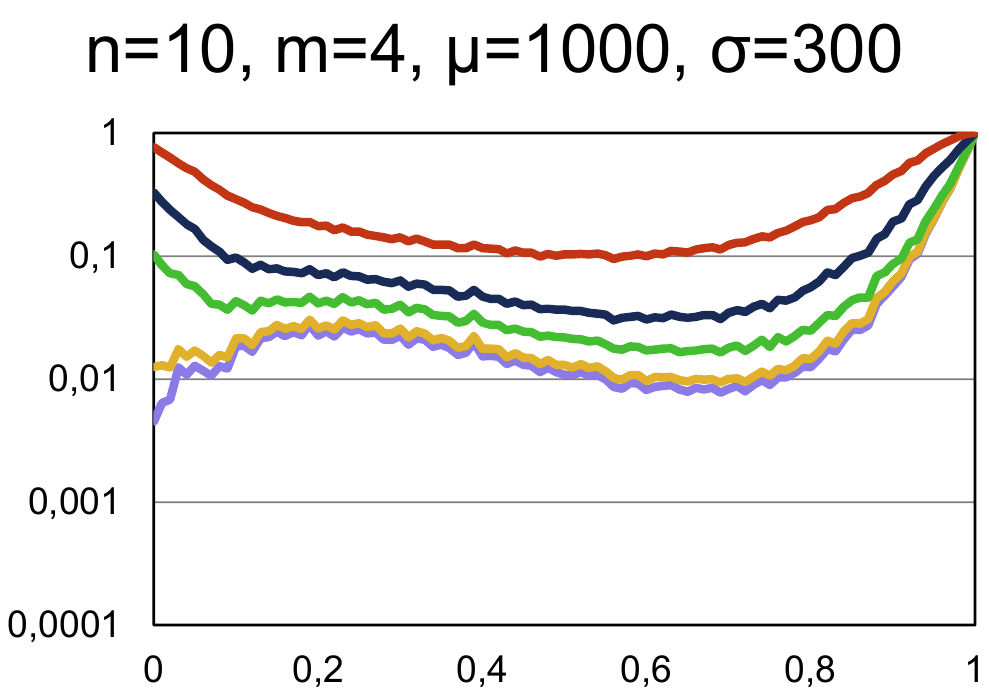}
    \end{subfigure}
    
    \caption{Analysis results of TxnSP instances: {\color{britpurple}Violet} lines shows the ratio of {\color{britpurple}optimal} schedules to the size of the solution space. {\color{brityellow}Yellow}, {\color{britgreen}Green}, {\color{britblue}Dark Blue} and {\color{britred}Red} lines show the ratio of the solutions with makespans at most {\color{brityellow}$1\%$}, {\color{britgreen}$5\%$}, {\color{britblue}$10\%$} and {\color{britred}$20\%$} higher than the optimum value, respectively. In the charts, the vertical lines show the $cp$ values, and the horizontal lines show the ratios. The chart headers shows the problem configuration, where $n$ is the number of jobs, $m$ is the number of machines, and $\mu$ and $\sigma$ are the mean and the standard deviation of the normal distribution that is used to create the length matrix.}
    \label{fig:analysis}
\end{figure*}

\subsubsection{Exhaustive Search}

The exhaustive search (ES) approach that we implemented for TxnSP evaluates all representable schedules and selects the one with the minimum makespan. The set of all representable schedules includes at least one optimal solution as shown in Theorem~\ref{theo:Derivation}, so ES is guaranteed to find an optimal solution. For an instance with $n$ jobs, each possible solution is encoded as a permutation of $n$ numbers. ES traverses $n!$ permutations, deriving a schedule from each of them using the derivation rule given in Definition~\ref{def:Derivation}, and calculates the makespan.

\subsubsection{Linear Programming}

We formulated TxnSP as a mixed-integer programming (MIP) model to allow the use of a branch-and-cut algorithm to find an exact optimal solution to the problem. In our implementation, we used the SCIP optimization suite \cite{bolusani2024scip} to solve MIP models. The detailed description of the model is given below.

\begin{itemize}[itemsep=1mm, leftmargin=0mm, label={}]
    \item \textbf{Parameters:} The problem parameters for the MIP formulation are the same as those in Section~\ref{ssec:Definition}. We introduce one more parameter, $\mathcal{M}$, which represents a very large number.
    %The value of $\mathcal{M}$ must be chosen to be large enough for the linear model to work.
    \item \textbf{Sets:} 
    \begin{itemize}[itemsep=1mm, leftmargin=0mm, label={}]
        \item $J$: The set of all jobs, $|J|=n$
        \item $M$: The set of all machines, $|M|=m$
    \end{itemize}
    \item \textbf{Continuous Variables:}
    \begin{itemize}[itemsep=1mm,leftmargin=0mm, label={}]
        \item $ms$: The makespan of the schedule
        \item $s_i$: Starting time of the $i^{th}$ job
    \end{itemize}
    \item \textbf{Binary Variables:}
    \begin{flalign*}
        &x_{ij}=
        \begin{cases}
            1,&if\;job\;i\;is\;processed\;on\;machine\;j \\
            0,& otherwise 
        \end{cases} & \\
        \hspace{4mm}&pre_{ik}=
        \begin{cases}
            1, & if\;job\;k\;preceeds\;job\;i \\
            0, & otherwise
        \end{cases} &
    \end{flalign*}

    \item \textbf{Linear Model:}
    \begin{align}
    \label{eq:mip0} \mbox{min}~& ms & \\
    \label{eq:mip1} \mbox{s.t.  }~& ms \geq s_i + L_i  & \forall i \in \textbf{J} \\
    \label{eq:mip2} & \sum_{j = 0}^{m} s_{ij}=1 & \forall i\in \textbf{J}\\
    \label{eq:mip3}& s_i+\mathcal{M}(1-pre_{ik}) \geq s_k + L_k & \forall (i,k)\in \textbf{J}\\
    \label{eq:mip4}& pre_{ik} + pre_{ki} \geq x_{ij} + x_{kj} - 1 & \forall i\in \textbf{J},\;i>k\\
    \label{eq:mip5}& pre_{ik} + pre_{ki} \geq C_{ik} & \forall i\in \textbf{J},\;i>k\\
    \label{eq:mip6}& s_{i} \geq 0 & \forall i\in \textbf{J} \\
    \label{eq:mip7}& x_{ik}, pre_{ik} \in \{0,1\} & \forall (i,k)\in \textbf{J}
    \end{align}
    
\end{itemize}

Here, Equation~\ref{eq:mip0} shows the objective of the model, which is to minimize the makespan. By definition, the makespan is greater than or equal to the completion time of each job, and its value is imposed by the constraint in Equation~\ref{eq:mip1}. The next constraint given in Equation~\ref{eq:mip2} forces each job to be processed by exactly one machine. Constraint~\ref{eq:mip3} makes sure that job $i$ starts after job $k$ ends if $i$ is preceded by $k$. Constraints~\ref{eq:mip4} and \ref{eq:mip5} are responsible of assigning correct values to $pre_{ik}$. The first forces one of $i$ and $k$ to precede the other if they are assigned to the same machine. The latter forces one of the jobs to precede the other if they conflict. The last two constraints, \ref{eq:mip6} and \ref{eq:mip7} define the range of values that can be assigned by each variable.

\subsubsection{Dynamic Programming}

Dynamic programming (DP) can speed up the exact solution of sequencing problems by treating possible solutions as combinations instead of permutations \cite{held1962dynamic}. Since we use the permutation encoding in our implementation, TxnSP also behaves as a sequencing problem. In the first step of DP approach, subschedules of size one are created, At each step, the sizes of the subschedules are increased by adding one more job according to the derivation rule introduced in Definition~\ref{def:Derivation}. Then, partial optimality condition (Theorem~\ref{theo:partial}) is used to reduce their number. If a newly derived subschedule is dominated by another, it is eliminated. To narrow down the search space further, our approach also checks for equivalency between pairs of subschedules. In case of equivalency, one of the equivalent subschedules is eliminated. Equivalency check is only applied in the first $m$ steps.

\subsubsection{Simulated Annealing}

Due to the complexity of TxnSP, exact optimization methods require a long time even to solve moderate-size problem instances. We have implemented a simulated annealing approach that is more suitable for real-world use cases, such as optimizing multi-agent systems. This approach aims to find near-optimal solutions in a reasonable time.

SA searches through the set of neighbors of a selected point \cite{kirkpatrick1983optimization}, but the definition of a neighbor can differ between applications. Our approach uses the permutation representation of schedules, where two permutations are defined as neighbors if one can be obtained by swapping a single pair of jobs in the other. The temperature value is decreased at the end of each step and the algorithm halts when it reaches 0. We have implemented three different cooling methods, which are shown below.

\begin{align}
    &T \leftarrow T-a && [Linear\;Cooling] \label{eq:linear}\\
    &T \leftarrow rT && [Geometric\;Cooling] \label{eq:geometric}\\
    &T \leftarrow \frac{T}{1+\beta T} && [Slow\;Cooling] \label{eq:slow}
\end{align}

The variables $a$, $r$ and $\beta$ are the hyperparameters of the algorithm. Their values can be adjusted to find the best configuration for a given instance.

\section{Experimental Results}
\label{sec:Experimental}

In this section, we present the results of the experiments that we conducted using the TxnSP software library and the OptiMA framework. For these experiments, we used a system consisting of an Intel Core i7-1260P processor, 32 GB of RAM, and the Ubuntu 24.04.1 LTS operating system. We present our experiments and their results under three headings in the remainder of the section.

\subsection{Complexity Analysis}
\label{sssec:Analysis}

In Section~\ref{ssec:Complexity}, we prove that TxnSP is NP-Hard in the general case, but also show that the complexity of an instance depends on the $cp$ value. To understand the effect of $cp$ on complexity, we performed analyses on the search space of TxnSP instances.
%To support the claim of Conjecture~\ref{conj:Complexity}, we conduct analyses on the search space of TxnSP instances with various problem parameters and show that the solution space becomes more complex for most values of $cp>0$.
In these analyses, we estimated the ratio of the number of optimal schedules to the size of the search space. We also estimated the ratios of schedules with makespan at most $1\%$, $5\%$, $10\%$ and $20\%$ more than the optimal value. In each analysis, these ratios are calculated for evenly spaced 101 $cp$ values in the range $[0,1]$. We use various configurations (number of jobs, number of machines, distribution of lengths) by creating 1000 random instances for each configuration. In each configuration, the lengths of the transactions are sampled from a normal distribution. We used three different standard deviation values to examine how the complexity of the search space changes depending on the shape of the distribution. The results are shown in Figure~\ref{fig:analysis}. The graphs are plotted on logarithmic scale since the ratios change in orders of magnitude.

The results show that for almost all configurations (except those with $n=8$ and $m=4$), the minimum ratio of optimal values is obtained for some point with $cp>0$. All other ratios follow this pattern, and they reach their minimum at some $cp>0$ for each configuration. Moreover, as the problem size increases (for larger $n$), the differences between the ratios at $cp=0$ and $cp>0$ also increase. This shows that the problem becomes more complex in the existence of conflicts, as the size of problem instances scales up. Another point that one might observe is that there is an inverse proportionality between the complexity and the number of machines. For higher values of $m$, all the ratios become smaller, making the search for optimal or near-optimal solutions easier. For each configuration, the ratios converge to 1 at $cp=1$. This result confirms that the problem becomes trivial when each pair of jobs conflicts, as we proposed in Theorem~\ref{theo:cp1}. Finally, we note that no significant differences are observed between the different standard deviation values for the job length distribution.

%The results of our analysis support the claim that we make in Conjecture~\ref{conj:Complexity}. The percentages of optimal and near-optimal solutions are smaller for most $cp>0$ values, except for extreme values close to $cp=1$. In addition, increasing the problem size causes the $cp>0$ instances to extend their lead in complexity, indicating that TxnSP is superpolynomially harder than IMSP.

\begin{table*}[t]
    \caption{Evaluation results for random TxnSP instances with different configurations, each has length matrices created using a normal distribution with $\mu =1000$ and $\sigma=100$}
    \label{tab:Evaluation}
    \centering
    \footnotesize
    \def\arraystretch{1}
    \begin{tabular}{|c|c|c|l|c|c|c|c|c|c|c|c|c|c|c|}
         \cline{5-15}
         \multicolumn{4}{c|}{} & \multirow{4}{*}{\textbf{MIP}} & \multirow{4}{*}{\makecell{\textbf{DP}}} & \multicolumn{8}{c|}{\textbf{SA}} \\ \cline{7-14}
         
         \multicolumn{4}{c|}{} & & & \multicolumn{4}{c|}{$\mathbf{T_{max}=50}$} & \multicolumn{4}{c|}{\textbf{$\mathbf{T_{max}=100}$}} \\ \cline{7-14}
         
         \multicolumn{4}{c|}{} & & & \multicolumn{2}{c|}{\textbf{Linear}} & \multicolumn{2}{c|}{\textbf{Geometric}} & \multicolumn{2}{c|}{\textbf{Linear}} & \multicolumn{2}{c|}{\textbf{Geometric}} \\ \cline{7-14}
         
         \multicolumn{4}{c|}{}& & & $\alpha=0.1$ & $\alpha=0.01$ & $r=0.9$ & $r=0.99$ & $\alpha=0.1$ & $\alpha=0.01$ & $r=0.9$ & $r=0.99$ \\
         \hline         

         \multirow{16}{*}{$n=9$} & \multirow{8}{*}{$m=3$} & \multirow{2}{*}{$cp=0.2$} & \cellcolor{lightgray} Accuracy (\%) & \cellcolor{lightgray} 100.00 & \cellcolor{lightgray} 100.00 & \cellcolor{lightgray} 84.9 & \cellcolor{lightgray} 96.2 & \cellcolor{lightgray} 73.3 & \cellcolor{lightgray} 85.2 & \cellcolor{lightgray} 92.2 & \cellcolor{lightgray} 99.7 & \cellcolor{lightgray} 73.9 & \cellcolor{lightgray} 86.0 \\
         
         & & & Duration (ms) & 495.56 & 190.50 & 0.47 & 4.62 & 0.12 & 1.33 & 0.94 & 8.96 & 0.15 & 1.37 \\ 
         %\cline{3-15}
         \hhline{|~|~|-|-|-|-|-|-|-|-|-|-|-|-|}
         
         & & \multirow{2}{*}{$cp=0.4$} & \cellcolor{lightgray} Accuracy (\%) & \cellcolor{lightgray} 100.00 & \cellcolor{lightgray} 100.00 & \cellcolor{lightgray} 72.1 & \cellcolor{lightgray} 79.8 & \cellcolor{lightgray} 55.4 & \cellcolor{lightgray} 71.1 & \cellcolor{lightgray} 76.1 & \cellcolor{lightgray} 84.4 & \cellcolor{lightgray} 55.9 & \cellcolor{lightgray} 70.5 \\
         
         & & & Duration (ms) & 504.26 & 158.99 & 0.48 & 4.16 & 0.19 & 1.82 & 0.94 & 9.48 & 0.20 & 1.90 \\ 
         %\cline{3-15}
         \hhline{|~|~|-|-|-|-|-|-|-|-|-|-|-|-|}
         
         & & \multirow{2}{*}{$cp=0.6$} & \cellcolor{lightgray} Accuracy (\%) & \cellcolor{lightgray} 100.00 & \cellcolor{lightgray} 100.00 & \cellcolor{lightgray} 69.8 & \cellcolor{lightgray} 76.9 & \cellcolor{lightgray} 59.6 & \cellcolor{lightgray} 71.0 & \cellcolor{lightgray} 77.3 & \cellcolor{lightgray} 90.5 & \cellcolor{lightgray} 58.0 & \cellcolor{lightgray} 71.7 \\
         
         & & & Duration (ms) & 493.71 & 166.32 & 0.49 & 4.79 & 0.19 & 1.91 & 0.96 & 9.66 & 0.20 & 1.98\\ 
         %\cline{3-15}
         \hhline{|~|~|-|-|-|-|-|-|-|-|-|-|-|-|}
         
         & & \multirow{2}{*}{$cp=0.8$} & \cellcolor{lightgray} Accuracy (\%) & \cellcolor{lightgray} 100.00 & \cellcolor{lightgray} 100.00 & \cellcolor{lightgray} 72.6 & \cellcolor{lightgray} 82.5 & \cellcolor{lightgray} 60.4 & \cellcolor{lightgray} 74.3 & \cellcolor{lightgray} 81.0 & \cellcolor{lightgray} 93.6 & \cellcolor{lightgray} 60.1 & \cellcolor{lightgray} 75.4 \\
         
         & & & Duration (ms) & 493.66 & 175.29 & 0.52 & 4.97 & 0.18 & 1.81 & 1.01 & 10.02 & 0.19 & 1.87 \\ 
         %\cline{2-15}
         \hhline{|~|-|-|-|-|-|-|-|-|-|-|-|-|-|}
         
         & \multirow{8}{*}{$m=4$} & \multirow{2}{*}{$cp=0.2$} & \cellcolor{lightgray} Accuracy (\%) & \cellcolor{lightgray} 100.00 & \cellcolor{lightgray} 100.00 & \cellcolor{lightgray} 96.5 & \cellcolor{lightgray} 99.4 & \cellcolor{lightgray} 89.8 & \cellcolor{lightgray} 97.0 & \cellcolor{lightgray} 98.2 & \cellcolor{lightgray} 100.0 & \cellcolor{lightgray} 90.8 & \cellcolor{lightgray} 97.1 \\
         
         & & & Duration (ms) & 356.00 & 210.35 & 0.52 & 5.17 & 0.16 & 1.59 & 1.03 & 10.38 & 0.17 & 1.64 \\ 
         %\cline{3-15}
         \hhline{|~|~|-|-|-|-|-|-|-|-|-|-|-|-|}
         
         & & \multirow{2}{*}{$cp=0.4$} & \cellcolor{lightgray} Accuracy (\%) & \cellcolor{lightgray} 100.00 & \cellcolor{lightgray} 100.00 & \cellcolor{lightgray} 90.2 & \cellcolor{lightgray} 94.3 & \cellcolor{lightgray} 81.0 & \cellcolor{lightgray} 92.3 & \cellcolor{lightgray} 94.8 & \cellcolor{lightgray} 96.3 & \cellcolor{lightgray} 83.3 & \cellcolor{lightgray} 91.7 \\
         
         & & & Duration (ms) & 414.65 & 157.33 & 0.51 & 5.10 & 0.13 & 1.17 & 1.01 & 10.22 & 0.13 & 1.24 \\ 
         %\cline{3-15}
         \hhline{|~|~|-|-|-|-|-|-|-|-|-|-|-|-|}
         
         & & \multirow{2}{*}{$cp=0.6$} & \cellcolor{lightgray} Accuracy (\%) & \cellcolor{lightgray} 100.00 & \cellcolor{lightgray} 100.00 & \cellcolor{lightgray} 89.6 & \cellcolor{lightgray} 94.0 & \cellcolor{lightgray} 82.1 & \cellcolor{lightgray} 90.4 & \cellcolor{lightgray} 92.8 & \cellcolor{lightgray} 97.1 & \cellcolor{lightgray} 80.9 & \cellcolor{lightgray} 90.5 \\
         
         & & & Duration (ms) & 496.50 & 180.57 & 0.51 & 5.10 & 0.13 & 1.25 & 1.02 & 10.14 & 0.14 & 1.31 \\ 
         %\cline{3-15}
         \hhline{|~|~|-|-|-|-|-|-|-|-|-|-|-|-|}
         
         & & \multirow{2}{*}{$cp=0.8$} & \cellcolor{lightgray} Accuracy (\%) & \cellcolor{lightgray} 100.00 & \cellcolor{lightgray} 100.00 & \cellcolor{lightgray} 91.5 & \cellcolor{lightgray} 94.9 & \cellcolor{lightgray} 83.1 & \cellcolor{lightgray} 93.8 & \cellcolor{lightgray} 93.9 & \cellcolor{lightgray} 98.0 & \cellcolor{lightgray} 85.5 & \cellcolor{lightgray} 91.4 \\
         
         & & & Duration (ms) & 439.39 & 187.15 & 0.53 & 5.07 & 0.18 & 1.82 & 1.04 & 10.27 & 0.19 & 1.89 \\ 
         \hline

         \multirow{16}{*}{$n=10$} & \multirow{8}{*}{$m=3$} & \multirow{2}{*}{$cp=0.2$} & \cellcolor{lightgray} Accuracy (\%) & \cellcolor{lightgray} 100.00 & \cellcolor{lightgray} 100.00 & \cellcolor{lightgray} 70.6 & \cellcolor{lightgray} 87.6 & \cellcolor{lightgray} 57.0 & \cellcolor{lightgray} 75.1 & \cellcolor{lightgray} 80.3 & \cellcolor{lightgray} 97.3 & \cellcolor{lightgray} 56.1 & \cellcolor{lightgray} 75.9 \\
         
         & & & Duration (ms) & 504.15 & 1518.34 & 0.40 & 3.93 & 0.17 & 1.75 & 0.79 & 7.85 & 0.18 & 1.79 \\ 
         %\cline{3-15}
         \hhline{|~|~|-|-|-|-|-|-|-|-|-|-|-|-|}
         
         & & \multirow{2}{*}{$cp=0.4$} & \cellcolor{lightgray} Accuracy (\%) & \cellcolor{lightgray} 100.00 & \cellcolor{lightgray} 100.00 & \cellcolor{lightgray} 52.2 & \cellcolor{lightgray} 76.6 & \cellcolor{lightgray} 37.2 & \cellcolor{lightgray} 57.0 & \cellcolor{lightgray} 60.6 & \cellcolor{lightgray} 93.0 & \cellcolor{lightgray} 33.8 & \cellcolor{lightgray} 57.3 \\
         
         & & & Duration (ms) & 2215.98 & 955.39 & 0.47 & 4.61 & 0.18 & 1.79 & 0.92 & 9.34 & 0.19 & 1.86 \\ 
         %\cline{3-15}
         \hhline{|~|~|-|-|-|-|-|-|-|-|-|-|-|-|}
         
         & & \multirow{2}{*}{$cp=0.6$} & \cellcolor{lightgray} Accuracy (\%) & \cellcolor{lightgray} 100.00 & \cellcolor{lightgray} 100.00 & \cellcolor{lightgray} 52.8 & \cellcolor{lightgray} 66.9 & \cellcolor{lightgray} 36.8 & \cellcolor{lightgray} 58.3 & \cellcolor{lightgray} 64.5 & \cellcolor{lightgray} 86.9 & \cellcolor{lightgray} 39.1 & \cellcolor{lightgray} 60.5 \\
         
         & & & Duration (ms) & 1319.94 & 1126.29 & 0.42 & 4.05 & 0.18 & 1.77 & 0.82 & 7.97 & 0.19 & 1.82\\ 
         %\cline{3-15}
         \hhline{|~|~|-|-|-|-|-|-|-|-|-|-|-|-|}
         
         & & \multirow{2}{*}{$cp=0.8$} & \cellcolor{lightgray} Accuracy (\%) & \cellcolor{lightgray} 100.00 & \cellcolor{lightgray} 100.00 & \cellcolor{lightgray} 57.0 & \cellcolor{lightgray} 72.5 & \cellcolor{lightgray} 40.9 & \cellcolor{lightgray} 59.1 & \cellcolor{lightgray} 65.5 & \cellcolor{lightgray} 85.7 & \cellcolor{lightgray} 42.7 & \cellcolor{lightgray} 60.6 \\
         
         & & & Duration (ms) & 1007.65 & 1284.20 & 0.64 & 5.69 & 0.14 & 1.28 & 1.20 & 11.57 & 0.14 & 1.35 \\ 
         %\cline{2-15}
         \hhline{|~|-|-|-|-|-|-|-|-|-|-|-|-|-|}
         
         & \multirow{8}{*}{$m=4$} & \multirow{2}{*}{$cp=0.2$} & \cellcolor{lightgray} Accuracy (\%) & \cellcolor{lightgray} 100.00 & \cellcolor{lightgray} 100.00 & \cellcolor{lightgray} 91.3 & \cellcolor{lightgray} 97.2 & \cellcolor{lightgray} 80.9 & \cellcolor{lightgray} 92.7 & \cellcolor{lightgray} 95.9 & \cellcolor{lightgray} 99.8 & \cellcolor{lightgray} 78.7 & \cellcolor{lightgray} 93.0 \\
         
         & & & Duration (ms) & 485.07 & 2141.06 & 0.56 & 5.30 & 0.22 & 2.16 & 1.08 & 11.18 & 0.22 & 2.22 \\ 
         %\cline{3-15}
         \hhline{|~|~|-|-|-|-|-|-|-|-|-|-|-|-|}
         
         & & \multirow{2}{*}{$cp=0.4$} & \cellcolor{lightgray} Accuracy (\%) & \cellcolor{lightgray} 100.00 & \cellcolor{lightgray} 100.00 & \cellcolor{lightgray} 80.0 & \cellcolor{lightgray} 89.5 & \cellcolor{lightgray} 67.6 & \cellcolor{lightgray} 83.7 & \cellcolor{lightgray} 84.3 & \cellcolor{lightgray} 92.2 & \cellcolor{lightgray} 70.5 & \cellcolor{lightgray} 81.7 \\
         
         & & & Duration (ms) & 713.00 & 2083.16 & 0.61 & 6.05 & 0.19 & 2.07 & 1.22 & 11.69 & 0.20 & 2.16\\ 
         %\cline{3-15}
         \hhline{|~|~|-|-|-|-|-|-|-|-|-|-|-|-|}
         
         & & \multirow{2}{*}{$cp=0.6$} & \cellcolor{lightgray} Accuracy (\%) & \cellcolor{lightgray} 100.00 & \cellcolor{lightgray} 100.00 & \cellcolor{lightgray} 80.6 & \cellcolor{lightgray} 86.7 & \cellcolor{lightgray} 65.9 & \cellcolor{lightgray} 81.5 & \cellcolor{lightgray} 83.6 & \cellcolor{lightgray} 93.0 & \cellcolor{lightgray} 67.8 & \cellcolor{lightgray} 83.7 \\
         
         & & & Duration (ms) & 615.49 & 2059.67 & 0.61 & 6.05 & 0.15 & 2.01 & 1.22 & 11.92 & 0.17 & 2.03 \\ 
         %\cline{3-15}
         \hhline{|~|~|-|-|-|-|-|-|-|-|-|-|-|-|-|}
         
         & & \multirow{2}{*}{$cp=0.8$} & \cellcolor{lightgray} Accuracy (\%) & \cellcolor{lightgray} 100.00 & \cellcolor{lightgray} 100.00 & \cellcolor{lightgray} 83.7 & \cellcolor{lightgray} 89.9 & \cellcolor{lightgray} 72.0 & \cellcolor{lightgray} 86.3 & \cellcolor{lightgray} 86.5 & \cellcolor{lightgray} 96.3 & \cellcolor{lightgray} 70.8 & \cellcolor{lightgray} 88.0 \\
         
         & & & Duration (ms) & 442.37 & 2219.12 & 0.68 & 6.17 & 0.17 & 1.47 & 1.21 & 12.31 & 0.17 & 1.60 \\ 
         \hline

    \end{tabular}
    
    \label{tab:my_label}
\end{table*}

\subsection{Algorithm Evaluation}
\label{ssec:Evaluation}

As we show in Section~\ref{ssec:Library}, TxnSP software library involves four optimization algorithms and modules to evaluate their performance. We used these features to examine the average accuracy and runtime of these algorithms for various problem configurations. For each configuration, we randomly created 1000 instances. The total number of configurations is 96, including 3 different $n$ values, 2 different $m$ values, 4 different $cp$ values, and 4 different probability distributions for the job lengths, which totals 96000 problem instances. The solutions found by ES are used as a baseline to measure the accuracies of the DP, MIP, and SA approaches.

In Table~\ref{tab:Evaluation}, we present the results of the evaluation. Due to space constraints, we only include the results for instances that have length matrices created using a normal distribution with $\mu=1000$ and $\sigma=100$ and omit the results for $n=8$. The accuracy rows show the percentage of instances that a solver is able to find the same result as ES. The duration rows show the average time a method takes for a problem instance. The results show that MIP and exact DP methods are able to find the optimal solution of any instance in our sample, as expected. We also note that these two methods have been successful in finding the exact optimal solution for the remaining 78 problem configurations that we do not present here.

The exact methods take relatively long to find a solution, even for small problem instances, and their durations increase rapidly as $n$ increases. This shows the importance of approximate algorithms for TxnSP. Every set of hyperparameter we tested for SA requires a considerably shorter time than exact methods.
%However, the runtime of the approximate DP method quickly increases, and its accuracy drops for increasing problem sizes. 
SA also shows consistent performance with at least 1 set of hyperparameters with more than $80\%$ precision for each problem configuration. The durations for SA are also not strongly affected by the problem size, which shows the potential for its use in real-world cases. Another notable observation is that every solver obtains higher precision values for higher $m$ values, which is consistent with the analysis results in Section~\ref{sssec:Analysis}.

\subsection{OptiMA Benchmark}
\label{ssec:Benchmark}

As our final experiment, we tested the performance of the OptiMA framework using the benchmark that we designed. The first purpose of this experiment is to observe whether OptiMA is able to successfully run a VCMAS with more than a hundred agents that are contesting for the use of non-shareable plugins on a limited hardware. The other purpose is to measure the effect of transaction scheduling on performance while running such a large model.

The benchmark module that we created for this purpose is called the Factory Floor Benchmark.
%We developed the Factory Floor benchmark to evaluate the performance of the OptiMA framework and to assess the improvement provided by transaction scheduling.
It simulates an automated factory floor that is operated by AI powered robots. The production facility consists of an assembly station and an inspection station. Each job requires a different sequence of assembly operations. Each of these operations consists of multiple actions and must be treated as atomic, thus each operation is represente by a transaction. After assembly, the job is transported to the inspection station and leaves the system after being inspected. 

The duration of each action is determined randomly from a normal distribution that represents realistic estimates. Actions are simulated by awaiting the threads. At the beginning of the benchmark, job descriptions are created by generating random sequences of operations, each of which is created as a random set of actions.
%The user can choose the minimum and maximum numbers of operations in a job and instructions in an operation.
There are 9 types of actions (5 manual actions, 2 drilling actions, and 2 welding actions), and different propensities of these types can be set. It is also possible to change the simulation speed, which alters the duration of each action.

In the remainder of the section, we explain the details of the Factory Floor benchmark by introducing the plugins and agent roles, and then present the benchmark results for OptiMA.

\subsubsection{Agent Roles}

\begin{itemize}[itemsep=1mm, leftmargin=0mm, label={}]
    \item \textbf{Assembly Worker} assembles parts by performing manual work, drilling, and welding; and then places the assembled part on the conveyor belt.
    \item \textbf{Transporter} receives the assembled parts from the conveyor belt, transports them to the inspection station, and inserts them into the inspection queue.
    \item \textbf{Inspector} evaluates the quality of jobs using the QA scanner, reports the results, and sends the report to the floor manager.
    \item \textbf{Floor Manager} frequently checks the number of jobs accumulated at each station and alters the number of running agents of each role to decrease the accumulation in the queues.
\end{itemize}

\subsubsection{Plugins}

\begin{itemize}[itemsep=1mm, leftmargin=0mm, label={}]
    \item \textbf{Assembly Queue (Shareable)} contains the descriptions for the parts to be produced. 
    \item \textbf{Conveyor Belt (Shareable)} stores the assembled parts to be picked up by a transporter agent.
    \item \textbf{Inspection Queue (Shareable)} stores the parts to be inspected.
    \item \textbf{Drill Press (Non-Shareable):} is used by an assembly worker to perform a drilling action.
    \item \textbf{Welding Station (Non-Shareable)} is used by an assembly worker to perform a welding action.
    \item \textbf{QA Scanner (Non-Shareable)} is used by an inspector to perform a quality assessment of a part.
    \item \textbf{Output Bin (Shareable)} stores the completed parts after.
\end{itemize}

\subsubsection{Benchmark Results}

\begin{table}[]
    \caption{The Propensities of Assembly Actions for Different Conflict Configurations}
    \centering
    %\small
    \def\arraystretch{1.3}
    \begin{tabular}{| c | c | c | c | c | c |}
        \cline{2-6}
        \multicolumn{1}{c|}{} & \multicolumn{5}{c|}{\textbf{Conflict Levels}} \\
        \cline{2-6}
        \multicolumn{1}{c|}{} & Very Low & Low & Medium & High & Very High  \\
        \hline
        Manual & 20 & 10 & 5 & 1 & 1 \\
        Drilling & 1 & 1 & 1 & 1 & 2 \\
        Welding & 1 & 1 & 1 & 1 & 2 \\
        \hline
    \end{tabular}
    
    \label{tab:propensity}
\end{table}

\begin{table*}
    \caption{Factory Floor Benchmark Results for OptiMA Framework}
    \footnotesize    
    \def\arraystrech{1}
    \begin{tabular}{|c|c|c|c|c|c|c|c|c|c|}
        \hline
        
        \multicolumn{3}{|c|}{\textbf{Model Configurations}} & \multicolumn{7}{c|}{\textbf{Execution Results}} \\
        \hline
        
        \multirow{3}{*}{\makecell{\textbf{Thread}\\\textbf{Number}}} & \multirow{3}{*}{\makecell{\textbf{Simulation}\\\textbf{Speed}}} & \multirow{3}{*}{\makecell{\textbf{Conflict}\\\textbf{Level}}} & \textbf{Non-Optimized} & \multicolumn{6}{c|}{\textbf{Optimized}} \\
        \hhline{|~|~|~|-|-|-|-|-|-|-|}
        
          & & & \multirow{2}{*}{\makecell{\textbf{Average} \\ \textbf{Throughput}}} & \multicolumn{2}{c|}{\textbf{Best Configuration}} & \multirow{2}{*}{\makecell{\textbf{Average} \\ \textbf{Throughput}}} & \multirow{2}{*}{\makecell{\textbf{Minimum}\\\textbf{Improvement}}} & \multirow{2}{*}{\makecell{\textbf{Maximum}\\\textbf{Improvement}}} & \multirow{2}{*}{\makecell{\textbf{Average}\\\textbf{Improvement}}} \\

         \hhline{|~|~|~|~|-|-|~|~|~|~|}

         &&&& \textbf{Batch Size} & \textbf{Trigger} & & & & \\
         \hline

         \multirow{15}{*}{8} & \multirow{5}{*}{2x} & \cellcolor{lightorange}Very Low & \cellcolor{lightorange}0.377 & \cellcolor{lightorange}50 & \cellcolor{lightorange}false & \cellcolor{lightorange}0.415 & \cellcolor{lightorange}\textbf{\textcolor{britgreen}{9.53\%}} & \cellcolor{lightorange}\textbf{\textcolor{britgreen}{10.70\%}} & \cellcolor{lightorange}\textbf{\textcolor{britgreen}{9.91\%}} \\
         %\hhline{|~|~|-|-|-|-|-|-|-|-|}

         & & Low & 0.357 & 75 & 
         false & 0.392 & \textbf{\color{britgreen}5.17\%} & \textbf{\color{britgreen}15.55\%} & \textbf{\color{britgreen}9.59\%}  \\
         %\hhline{|~|~|-|-|-|-|-|-|-|-|}

         & & \cellcolor{lightorange}Medium & \cellcolor{lightorange}0.307 & \cellcolor{lightorange}50 & 
         \cellcolor{lightorange}false & \cellcolor{lightorange}0.329 & \cellcolor{lightorange}\textbf{\color{britgreen}5.29\%} & \cellcolor{lightorange}\textbf{\color{britgreen}9.07\%} & \cellcolor{lightorange}\textbf{\color{britgreen}7.27\%} \\
         %\hhline{|~|~|-|-|-|-|-|-|-|-|}

         & & High & 0.193 & 75 & false & 0.201 & \textbf{\color{britgreen}2.48\%} & \textbf{\color{britgreen}6.29\%} & \textbf{\color{britgreen}4.05\%} \\
         %\hhline{|~|~|-|-|-|-|-|-|-|-|}

         & & \cellcolor{lightorange}Very High & \cellcolor{lightorange}0.171 & \cellcolor{lightorange}75 & \cellcolor{lightorange}false & \cellcolor{lightorange}0.174 & \cellcolor{lightorange}\textbf{\color{britgreen}0.90\%} & \cellcolor{lightorange}\textbf{\color{britgreen}2.80\%} & \cellcolor{lightorange}\textbf{\color{britgreen}1.85\%} \\
         \hhline{|~|-|-|-|-|-|-|-|-|-|}

         & \multirow{5}{*}{5x} & Very Low & 0.954 & 50 &  false & 1.034 & \textbf{\color{britgreen} 6.71\%} & \textbf{\color{britgreen}11.75\%} & \textbf{\color{britgreen}8.39\%} \\
         %\hhline{|~|~|-|-|-|-|-|-|-|-|}

         & & \cellcolor{lightorange}Low & \cellcolor{lightorange}0.896 & \cellcolor{lightorange}75 & \cellcolor{lightorange}false & \cellcolor{lightorange}0.986 & \cellcolor{lightorange}\textbf{\color{britgreen}7.18\%} & \cellcolor{lightorange}\textbf{\color{britgreen}17.72\%} & \cellcolor{lightorange}\textbf{\color{britgreen}10.01\%} \\
         %\hhline{|~|~|-|-|-|-|-|-|-|-|}

         & & Medium & 0.773 & 50 & false & 0.840 & \textbf{\color{britgreen}3.80\%} & \textbf{\color{britgreen}12.68\%} & \textbf{\color{britgreen}8.83\%} \\
         %\hhline{|~|~|-|-|-|-|-|-|-|-|}

         & & \cellcolor{lightorange}High & \cellcolor{lightorange}0.477 & \cellcolor{lightorange}75 & \cellcolor{lightorange}true & \cellcolor{lightorange}0.496 & \cellcolor{lightorange}\textbf{\color{britgreen}2.87\%} & \cellcolor{lightorange}\textbf{\color{britgreen}5.13\%} & \cellcolor{lightorange}\textbf{\color{britgreen}4.03\%} \\
         %\hhline{|~|~|-|-|-|-|-|-|-|-|}

         & & Very High & 0.436 & 50 & false & 0.453 & \textbf{\color{britgreen}1.69\%} & \textbf{\color{britgreen}5.83\%} & \textbf{\color{britgreen}3.89\%} \\
         \hhline{|~|-|-|-|-|-|-|-|-|-|}

         & \multirow{5}{*}{10x} & \cellcolor{lightorange}Very Low & \cellcolor{lightorange}1.850 &  \cellcolor{lightorange}75 & \cellcolor{lightorange}false & \cellcolor{lightorange}2.078 & \cellcolor{lightorange}\textbf{\color{britgreen}10.40\%} & \cellcolor{lightorange}\textbf{\color{britgreen}16.79\% }& \cellcolor{lightorange}\textbf{\color{britgreen}12.34\%} \\
         %\hhline{|~|~|-|-|-|-|-|-|-|-|}

         & & Low & 1.774 & 75 & false & 1.957 & \textbf{\color{britgreen}6.41\%} & \textbf{\color{britgreen}13.40\%} & \textbf{\color{britgreen}10.32\%} \\
         %\hhline{|~|~|-|-|-|-|-|-|-|-|}

         & & \cellcolor{lightorange}Medium & \cellcolor{lightorange}1.560 & \cellcolor{lightorange}75 & \cellcolor{lightorange}true & \cellcolor{lightorange}1.673 & \cellcolor{lightorange}\textbf{\color{britgreen}3.86\%} & \cellcolor{lightorange}\textbf{\color{britgreen}9.31\%} & \cellcolor{lightorange}\textbf{\color{britgreen}7.26\%} \\
         %\hhline{|~|~|-|-|-|-|-|-|-|-|}

         & & High & 0.956 & 75 & true & 1.004 & \textbf{\color{britgreen}1.36\%} & \textbf{\color{britgreen}6.96\%} & \textbf{\color{britgreen}5.09\%} \\
         %\hhline{|~|~|-|-|-|-|-|-|-|-|}

         & & \cellcolor{lightorange}Very High & \cellcolor{lightorange}0.866 & \cellcolor{lightorange}50 & \cellcolor{lightorange}true & \cellcolor{lightorange}0.895 & \cellcolor{lightorange}\textbf{\color{britgreen}1.76\%} & \cellcolor{lightorange}\textbf{\color{britgreen}4.96\%} & \cellcolor{lightorange}\textbf{\color{britgreen}3.33\%} \\
         \hline
         %\hhline{|~|-|-|-|-|-|-|-|-|-|}

         \multirow{15}{*}{12} & \multirow{5}{*}{2x} & Very Low & 0.436 & 75 & false & 0.462 & \textbf{\textcolor{britgreen}{5.47\%}} & \textbf{\textcolor{britgreen}{6.34\%}} & \textbf{\textcolor{britgreen}{5.97\%}} \\

         & & \cellcolor{lightorange}Low & \cellcolor{lightorange}0.404 & \cellcolor{lightorange}75 & \cellcolor{lightorange}true & \cellcolor{lightorange}0.423 & \cellcolor{lightorange}\textbf{\color{britgreen}0.87\%} & \cellcolor{lightorange}\textbf{\color{britgreen}7.40\%} & \cellcolor{lightorange}\textbf{\color{britgreen}4.79\%} \\

         & & Medium & 0.335 & 75 & false & 0.346 & \textbf{\color{britgreen}0.70\%} & \textbf{\color{britgreen}6.60\%} & \textbf{\color{britgreen}3.22\%} \\

        & & \cellcolor{lightorange}High & \cellcolor{lightorange}0.199 & \cellcolor{lightorange}50 & \cellcolor{lightorange}false & \cellcolor{lightorange}0.205 & \cellcolor{lightorange}\textbf{\color{britgreen}0.67\%} & \cellcolor{lightorange}\textbf{\color{britgreen}5.44\%} & \cellcolor{lightorange}\textbf{\color{britgreen}2.73\%} \\

        & & Very High & 0.177 & 50 & false & 0.179 & \textbf{\color{britred}-0.13\%} & \textbf{\color{britgreen}1.74\%} & \textbf{\color{britgreen}0.99\%} \\
        \hhline{|~|-|-|-|-|-|-|-|-|-|}

        & \multirow{5}{*}{5x} & \cellcolor{lightorange}Very Low & \cellcolor{lightorange}1.080 & \cellcolor{lightorange}75 & \cellcolor{lightorange}false & \cellcolor{lightorange}1.147 & \cellcolor{lightorange}\textbf{\color{britgreen}5.91\% }& \cellcolor{lightorange}\textbf{\color{britgreen}7.56\%} & \cellcolor{lightorange}\textbf{\color{britgreen}6.28\%} \\

        & & Low & .000 & 75 & true & 1.032 & \textbf{\color{britred}-0.78\%} & \textbf{\color{britgreen}5.23\%} & \textbf{\color{britgreen}3.31\%} \\

        & & \cellcolor{lightorange}Medium & \cellcolor{lightorange}0.836 & \cellcolor{lightorange}75 & \cellcolor{lightorange}false & \cellcolor{lightorange}0.854 & \cellcolor{lightorange}\textbf{\color{britred}-0.80\%} & \cellcolor{lightorange}\textbf{\color{britgreen}4.23\%} & \cellcolor{lightorange}\textbf{\color{britgreen}2.22\%} \\

        & & High & 0.494 & 50 & true & 0.502 & \textbf{\color{britgreen}0.02\%} & \textbf{\color{britgreen}3.20\%} & \textbf{\color{britgreen}1.57\%} \\

        & & \cellcolor{lightorange}Very High & \cellcolor{lightorange}0.444 & \cellcolor{lightorange}50 & \cellcolor{lightorange}false & \cellcolor{lightorange}0.450 & \cellcolor{lightorange}\textbf{\color{britred}-2.00\%} & \cellcolor{lightorange}\textbf{\color{britgreen}4.27\%} & \cellcolor{lightorange}\textbf{\color{britgreen}1.15\%} \\
        \hhline{|~|-|-|-|-|-|-|-|-|-|}

        & \multirow{5}{*}{10x} & Very Low & 2.180 & 75 & false & 2.289 & \textbf{\color{britgreen}2.45\%} & \textbf{\color{britgreen}8.23\%} & \textbf{\color{britgreen}5.05\%} \\

        & & \cellcolor{lightorange}Low & \cellcolor{lightorange}2.011 & \cellcolor{lightorange}75 & \cellcolor{lightorange}false & \cellcolor{lightorange}2.117 & \cellcolor{lightorange}\textbf{\color{britgreen}1.42\%} & \cellcolor{lightorange}\textbf{\color{britgreen}9.43\%} & \cellcolor{lightorange}\textbf{\color{britgreen}5.29\%} \\

        & & Medium & 1.678 & 50 & true & 1.722 & \textbf{\color{britgreen}0.31\%} & \textbf{\color{britgreen}4.53\%} & \textbf{\color{britgreen}2.64\%} \\

        & & \cellcolor{lightorange}High & \cellcolor{lightorange}1.006 & \cellcolor{lightorange}50 & \cellcolor{lightorange}false & \cellcolor{lightorange}1.019 & \cellcolor{lightorange}\textbf{\color{britred}-0.38\%} & \cellcolor{lightorange}\textbf{\color{britgreen}1.91\%} & \cellcolor{lightorange}\textbf{\color{britgreen}1.28\%} \\

        & & Very High & 0.892 & 50 & true & 0.906 & \textbf{\color{britgreen}0.77\%} & \textbf{\color{britgreen}1.95\%} & \textbf{\color{britgreen}1.62\%} \\
        \hline
    \end{tabular}
    \label{tab:Benchmark}
\end{table*}

In our benchmark, we set the initial numbers for assembly worker, transporter, and inspector agents as 100, 35 and 35, respectively. Their maximum numbers are set to 120, 50 and 50. We ran the benchmark for different simulation speeds (x2, x5 and x10) and different thread numbers (8 and 12). We also tried different propensities for the operations to examine the effect of the level of conflict on performance, as shown in Table~\ref{tab:propensity}. Since drilling and welding operations require the use of non-shareable plugins, increasing their propensity against manual operations increases the number of transaction pairs that block each other.

For each model configuration, we created five sets each having 500 jobs and ran the framework without optimization for each set. We also ran the simulation using optimization with different batch sizes (50, 75) and trigger settings (true, false) and determined the best setting considering the average throughput. In this context, throughput is defined by the number of completed jobs in the simulation per second.
%We considered the number of jobs completed per second as the throughput value for each run. 
For optimization, we used SA with linear temperature decrease, setting the maximum temperature to 100 and the decrement parameter to 0.05. We recorded the amount of improvement obtained by optimization for each set of jobs. In Table~\ref{tab:Benchmark}, we show the benchmark results. The table contains the results of the best optimization configuration and shows the minimum, maximum, and average improvements recorded for five sets of jobs.

Although the benchmark is uncomplicated with its rule-based agents, the configurations are chosen specifically to cause a high level of contention with a large number of agents and a much smaller number of threads. This setting is useful to test the ability of OptiMA to provide a fault-tolerant environment for large VCMAS. For all configurations and all sets of jobs, OptiMA was able to run the system without any inconsistencies. In addition, no deadlocks are encountered during any of these runs.

The average improvement provided by scheduling for each configuration is positive, and even the minimum improvements are positive for almost all cases.
%In every configuration tested, optimization gives a positive average improvement in throughput. In addition, for almost all configurations, the minimum improvement is also positive.
This result shows the effectiveness of TxnSP based optimization for VCMAS. It is apparent from the results that for a smaller number of threads, scheduling is more effective. This is an expected outcome, as efficient assignment becomes more important when computing resources are more scarce. The amount of improvement is also correlated to the level of conflict in the system. For lower levels of conflict, it is possible to further parallelize the system, so there is more room for improvement. We did not find a pattern showing that the amount of improvement is directly related to the simulation speed. This shows that TxnSP-based optimization has the potential to be equally effective for different transaction lengths. Finally, we note that there is no dominant optimization configuration that gives the best results for any model. It is good practice to try different settings and choose the best one for a given model.

\section{Conclusion and Future Research Directions}
\label{sec:Conclusion}

Multi-agent systems have the potential to undertake many sophisticated tasks that used to be thought as impossible to automate. In this paper, we provide a fault-tolerant framework for these very complicated systems using the transaction concept and an engine to facilitate their execution in the presence of a large number of agents and resource contention. We have put these ideas into practice by developing the OptiMA framework, and empirically shown that it is capable of successfully running a benchmark case simulating a large and complex multi-agent model. 
%we show that it is possible to formulate these systems in a transaction-based manner and get more reliability by fulfilling ACID properties.
In addition, we show the effectiveness of transaction scheduling for such models by recording significant amounts of improvement in throughput for the benchmark case. During our study to reach these results, we have also made a very necessary theoretical groundwork for 
%an otherwise understudied problem, namely 
the transaction scheduling problem, and developed a software library for its study. We are confident that these findings will prove useful in future research in multi-agent systems and possibly in other fields that can benefit from transaction scheduling.

Using the proposed framework to develop applications that would solve intricate tasks with practical use is beyond the scope of this paper. However, we consider such endeavors to be a natural subject of future research. With its proven success, OptiMA has the potential to be a very useful tool for such studies. We think that further research on transaction-based multi-agent systems will lead to the development of other frameworks inspired by OptiMA. In addition to this, our study on the transaction scheduling problem provides a basis for research on improving transaction handling in database management systems and any other systems that utilize the transaction concept.

\begin{acks}
    This work is funded by the German Federal Ministry of Research, Technology and Space within the funding program quantum technologies --- from basic research to market --- contract number 13N16090.
\end{acks}

\section*{Artifacts}

We list the two software artifacts that support our work in this paper by providing environment for evaluating, testing, and benchmarking. These artifacts are listed below.

\begin{enumerate}
    \item \textbf{TxnSP Software Library:}
    A C++ library developed to create, analyze, and solve TxnSP instances. TxnSP Software Library is publicly available on \href{https://github.com/umutcalikyilmaz/TxnSP}{GitHub}.
    \item \textbf{OptiMA Framework:} A framework developed to design fault-tolerant very complex multi-agent systems and run them with throughput optimization. OptiMA Framework is publicly available on \href{https://github.com/umutcalikyilmaz/TxnSP}{GitHub}.
\end{enumerate}

Both artifacts are developed to work in a Linux environment. The setup instructions and example scripts explaining how to use the artifacts are provided in the README files.

\balance
\bibliographystyle{ACM-Reference-Format}
\bibliography{references}

@inproceedings{groppe2021optimizing,
  title={Optimizing transaction schedules on universal quantum computers via code generation for grover’s search algorithm},
  author={Groppe, Sven and Groppe, Jinghua},
  booktitle={Proceedings of the 25th International Database Engineering \& Applications Symposium},
  pages={149--156},
  year={2021}
}

@inproceedings{bittner2020avoiding,
  title={Avoiding blocking by scheduling transactions using quantum annealing},
  author={Bittner, Tim and Groppe, Sven},
  booktitle={Proceedings of the 24th Symposium on International Database Engineering \& Applications},
  pages={1--10},
  year={2020}
}

@article{ccalikyilmaz2023opportunities,
  title={Opportunities for quantum acceleration of databases: optimization of queries and transaction schedules},
  author={{\c{C}}alikyilmaz, Umut and Groppe, Sven and Groppe, Jinghua and Winker, Tobias and Prestel, Stefan and Shagieva, Farida and Arya, Daanish and Preis, Florian and Gruenwald, Le},
  journal={Proceedings of the VLDB Endowment},
  volume={16},
  number={9},
  pages={2344--2353},
  year={2023},
  publisher={VLDB Endowment}
}

@article{thomasian1991performance,
  title={Performance analysis of two-phase locking},
  author={Thomasian, Alexander and Ryu, In Kyung},
  journal={IEEE transactions on software engineering},
  volume={17},
  number={5},
  pages={386},
  year={1991},
  publisher={IEEE Computer Society}
}

@book{ramakrishnan2002database,
  title={Database management systems},
  author={Ramakrishnan, Raghu and Gehrke, Johannes},
  year={2002},
  publisher={McGraw-Hill, Inc.}
}

@article{held1962dynamic,
  title={A dynamic programming approach to sequencing problems},
  author={Held, Michael and Karp, Richard M},
  journal={Journal of the Society for Industrial and Applied mathematics},
  volume={10},
  number={1},
  pages={196--210},
  year={1962},
  publisher={SIAM}
}

@article{harder1984observations,
  title={Observations on optimistic concurrency control schemes},
  author={H{\"a}rder, Theo},
  journal={Information Systems},
  volume={9},
  number={2},
  pages={111--120},
  year={1984},
  publisher={Elsevier}
}

@article{cao2023transaction,
  title={Transaction Scheduling: From Conflicts to Runtime Conflicts},
  author={Cao, Yang and Fan, Wenfei and Ou, Weijie and Xie, Rui and Zhao, Wenyue},
  journal={Proceedings of the ACM on Management of Data},
  volume={1},
  number={1},
  pages={1--26},
  year={2023},
  publisher={ACM New York, NY, USA}
}

@inproceedings{korf2009multi,
  title={Multi-way number partitioning},
  author={Korf, Richard Earl},
  booktitle={Twenty-first international joint conference on artificial intelligence},
  year={2009}
}

@article{kirkpatrick1983optimization,
  title={Optimization by simulated annealing},
  author={Kirkpatrick, Scott and Gelatt Jr, C Daniel and Vecchi, Mario P},
  journal={science},
  volume={220},
  number={4598},
  pages={671--680},
  year={1983},
  publisher={American association for the advancement of science}
}

@article{lu2023chameleon,
  title={Chameleon: Plug-and-play compositional reasoning with large language models},
  author={Lu, Pan and Peng, Baolin and Cheng, Hao and Galley, Michel and Chang, Kai-Wei and Wu, Ying Nian and Zhu, Song-Chun and Gao, Jianfeng},
  journal={Advances in Neural Information Processing Systems},
  volume={36},
  pages={43447--43478},
  year={2023}
}

@article{talebirad2023multi,
  title={Multi-agent collaboration: Harnessing the power of intelligent llm agents},
  author={Talebirad, Yashar and Nadiri, Amirhossein},
  journal={arXiv preprint arXiv:2306.03314},
  year={2023}
}

@article{xu2024theagentcompany,
  title={Theagentcompany: benchmarking llm agents on consequential real world tasks},
  author={Xu, Frank F and Song, Yufan and Li, Boxuan and Tang, Yuxuan and Jain, Kritanjali and Bao, Mengxue and Wang, Zora Z and Zhou, Xuhui and Guo, Zhitong and Cao, Murong and others},
  journal={arXiv preprint arXiv:2412.14161},
  year={2024}
}

@article{bolusani2024scip,
  title={The SCIP optimization suite 9.0},
  author={Bolusani, Suresh and Besan{\c{c}}on, Mathieu and Bestuzheva, Ksenia and Chmiela, Antonia and Dion{\'\i}sio, Jo{\~a}o and Donkiewicz, Tim and van Doornmalen, Jasper and Eifler, Leon and Ghannam, Mohammed and Gleixner, Ambros and others},
  journal={arXiv preprint arXiv:2402.17702},
  year={2024}
}

@incollection{durfee2001distributed,
  title={Distributed problem solving and planning},
  author={Durfee, Edmund H},
  booktitle={ECCAI Advanced Course on Artificial Intelligence},
  pages={118--149},
  year={2001},
  publisher={Springer}
}

@article{stone2000multiagent,
  title={Multiagent systems: A survey from a machine learning perspective},
  author={Stone, Peter and Veloso, Manuela},
  journal={Autonomous Robots},
  volume={8},
  number={3},
  pages={345--383},
  year={2000},
  publisher={Springer}
}

@article{gutierrez2015agent,
  title={Agent-based cloud bag-of-tasks execution},
  author={Gutierrez-Garcia, J Octavio and Sim, Kwang Mong},
  journal={Journal of Systems and Software},
  volume={104},
  pages={17--31},
  year={2015},
  publisher={Elsevier}
}

@article{singh2015autonomous,
  title={Autonomous agent based load balancing algorithm in cloud computing},
  author={Singh, Aarti and Juneja, Dimple and Malhotra, Manisha},
  journal={Procedia Computer Science},
  volume={45},
  pages={832--841},
  year={2015},
  publisher={Elsevier}
}

@inproceedings{nikbazm2014agent,
  title={Agent-based resource discovery in cloud computing using bloom filters},
  author={Nikbazm, Rojia and Ahmadi, Mahmood},
  booktitle={2014 4th international conference on computer and knowledge engineering (ICCKE)},
  pages={352--357},
  year={2014},
  organization={IEEE}
}

@article{ota2006multi,
  title={Multi-agent robot systems as distributed autonomous systems},
  author={Ota, Jun},
  journal={Advanced engineering informatics},
  volume={20},
  number={1},
  pages={59--70},
  year={2006},
  publisher={Elsevier}
}

@article{inigo2012robotics,
  title={Robotics software frameworks for multi-agent robotic systems development},
  author={I{\~n}igo-Blasco, Pablo and Diaz-del-Rio, Fernando and Romero-Ternero, Ma Carmen and Cagigas-Mu{\~n}iz, Daniel and Vicente-Diaz, Saturnino},
  journal={Robotics and Autonomous Systems},
  volume={60},
  number={6},
  pages={803--821},
  year={2012},
  publisher={Elsevier}
}

@inproceedings{soriano2013multi,
  title={Multi-agent systems platform for mobile robots collision avoidance},
  author={Soriano, Angel and Bernabeu, Enrique J and Valera, Angel and Vall{\'e}s, Marina},
  booktitle={International Conference on Practical Applications of Agents and Multi-Agent Systems},
  pages={320--323},
  year={2013},
  organization={Springer}
}

@article{shamshirband2013appraisal,
  title={An appraisal and design of a multi-agent system based cooperative wireless intrusion detection computational intelligence technique},
  author={Shamshirband, Shahaboddin and Anuar, Nor Badrul and Kiah, Miss Laiha Mat and Patel, Ahmed},
  journal={Engineering Applications of Artificial Intelligence},
  volume={26},
  number={9},
  pages={2105--2127},
  year={2013},
  publisher={Elsevier}
}

@inproceedings{gorodetski2002multi,
  title={The multi-agent systems for computer network security assurance: frameworks and case studies},
  author={Gorodetski, Vladimir and Kotenko, Igor},
  booktitle={Proceedings 2002 IEEE International Conference on Artificial Intelligence Systems (ICAIS 2002)},
  pages={297--302},
  year={2002},
  organization={IEEE}
}

@inproceedings{di1998adaptive,
  title={An adaptive multi-agent routing algorithm inspired by ants behavior},
  author={Di Caro, Gianni and Dorigo, Marco and others},
  booktitle={Proceedings of PART98-5th Annual Australasian Conference on Parallel and Real-Time Systems},
  pages={261--272},
  year={1998}
}

@article{claes2011decentralized,
  title={A decentralized approach for anticipatory vehicle routing using delegate multiagent systems},
  author={Claes, Rutger and Holvoet, Tom and Weyns, Danny},
  journal={IEEE Transactions on Intelligent Transportation Systems},
  volume={12},
  number={2},
  pages={364--373},
  year={2011},
  publisher={IEEE}
}

@article{manvi2008multicast,
  title={Multicast routing in mobile ad hoc networks by using a multiagent system},
  author={Manvi, Sunilkumar S and Kakkasageri, MS},
  journal={Information Sciences},
  volume={178},
  number={6},
  pages={1611--1628},
  year={2008},
  publisher={Elsevier}
}

@inproceedings{liu2024dynamic,
  title={A dynamic llm-powered agent network for task-oriented agent collaboration},
  author={Liu, Zijun and Zhang, Yanzhe and Li, Peng and Liu, Yang and Yang, Diyi},
  booktitle={First Conference on Language Modeling},
  year={2024}
}

@article{yin2023exchange,
  title={Exchange-of-thought: Enhancing large language model capabilities through cross-model communication},
  author={Yin, Zhangyue and Sun, Qiushi and Chang, Cheng and Guo, Qipeng and Dai, Junqi and Huang, Xuanjing and Qiu, Xipeng},
  journal={arXiv preprint arXiv:2312.01823},
  year={2023}
}

@inproceedings{gray1981transaction,
  title={The transaction concept: Virtues and limitations},
  author={Gray, Jim and others},
  booktitle={VLDB},
  volume={81},
  pages={144--154},
  year={1981}
}

@inproceedings{hill2006transaction,
  title={Transaction agent modelling: From experts to concepts to multi-agent systems},
  author={Hill, Richard and Polovina, Simon and Shadija, Dharmendra},
  booktitle={International Conference on Conceptual Structures},
  pages={247--259},
  year={2006},
  organization={Springer}
}

@techreport{smith2003transaction,
  title={Transaction oriented computational models for multi-agent systems},
  author={Smith, Vanessa},
  year={2003},
  institution={Citeseer}
}

@inproceedings{prasaad2020handling,
  title={Handling highly contended OLTP workloads using fast dynamic partitioning},
  author={Prasaad, Guna and Cheung, Alvin and Suciu, Dan},
  booktitle={Proceedings of the 2020 ACM SIGMOD International Conference on Management of Data},
  pages={527--542},
  year={2020}
}

@inproceedings{zhang2018performance,
  title={Performance of OLTP via intelligent scheduling},
  author={Zhang, Tieying and Tomasic, Anthony and Sheng, Yangjun and Pavlo, Andrew},
  booktitle={2018 IEEE 34th International Conference on Data Engineering (ICDE)},
  pages={1288--1291},
  year={2018},
  organization={IEEE}
}

@article{sheng2019scheduling,
  title={Scheduling OLTP transactions via machine learning},
  author={Sheng, Yangjun and Tomasic, Anthony and Zhang, Tieying and Pavlo, Andrew},
  journal={arXiv preprint arXiv:1903.02990},
  year={2019}
}

@book{gray1992transaction,
  title={Transaction processing: concepts and techniques},
  author={Gray, Jim and Reuter, Andreas},
  year={1992},
  publisher={Elsevier}
}

@article{dorndorf2000constraint,
  title={Constraint propagation techniques for the disjunctive scheduling problem},
  author={Dorndorf, Ulrich and Pesch, Erwin and Phan-Huy, Toan},
  journal={Artificial intelligence},
  volume={122},
  number={1-2},
  pages={189--240},
  year={2000},
  publisher={Elsevier}
}

@article{baker1996mutual,
  title={Mutual exclusion scheduling},
  author={Baker, Brenda S and Coffman Jr, Edward G},
  journal={Theoretical Computer Science},
  volume={162},
  number={2},
  pages={225--243},
  year={1996},
  publisher={Elsevier}
}

\end{document}